\documentclass[sigconf
]{acmart}
\newtheorem*{untheorem}{Theorem}
\newtheorem*{corollary 4.5}{Corollary 4.5}
\newtheorem*{lemma 5.8}{Lemma 5.8}
\newtheorem*{lemma 6.2}{Lemma 6.2}
\newtheorem*{theorem 6.5}{Theorem 6.5}
\newtheorem*{theorem 7.6}{Theorem 7.6}
\newtheorem*{theorem 7.9}{Theorem 7.9}
\newtheorem{remark}{Remark}
\usepackage{graphicx}
\usepackage{subfigure}
\usepackage{amsthm}
\usepackage{amsmath}

\AtBeginDocument{%
  }

\setcopyright{acmlicensed}
\copyrightyear{2018}
\acmYear{2018}
\acmDOI{XXXXXXX.XXXXXXX}
\acmConference[]{}{}{}
\acmISBN{}

\begin{document}

%%
%% The "title" command has an optional parameter,
%% allowing the author to define a "short title" to be used in page headers.
\title{Automatic constraint satisfaction problem}

%%
%% The "author" command and its associated commands are used to define
%% the authors and their affiliations.
%% Of note is the shared affiliation of the first two authors, and the
%% "authornote" and "authornotemark" commands
%% used to denote shared contribution to the research.

\author{Xiaoyang
Gong}
\affiliation{%
  \institution{University of Electronic Science and Technology of China}
  \city{Chengdu}
  \state{Sichuan}
  \country{China}}
\email{XiaoyangGong\_uestc@outlook.com}

\author{Xinyao Wang}
\affiliation{%
  \institution{University of Electronic Science and Technology of China}
  \city{Chengdu}
  \state{Sichuan}
  \country{China}}
\email{xinyao@std.uestc.edu.cn}

\author{Bakh Khoussainov}
\affiliation{%
  \institution{University of Electronic Science and Technology of China}
  \city{Chengdu}
  \state{Sichuan}
  \country{China}}
\email{bmk@uestc.edu.cn}

\author{Andrei Bulatov}
\affiliation{%
  \institution{Simon Fraser University}
  \city{}
  \state{}
  \country{Canada}}
\email{abulatov@sfu.ca}

%%
%% By default, the full list of authors will be used in the page
%% headers. Often, this list is too long, and will overlap
%% other information printed in the page headers. This command allows
%% the author to define a more concise list
%% of authors' names for this purpose.

\newcommand{\Ac}{\mathcal{A}}
\newcommand{\Bc}{\mathcal{B}}
\newcommand{\Cc}{\mathcal{C}}
\newcommand{\Dc}{\mathcal{D}}
\newcommand{\Fc}{\mathcal{F}}
\newcommand{\Lc}{\mathcal{L}}
\newcommand{\Pc}{\mathcal{P}}
\newcommand{\Kc}{\mathcal{K}}

\newcommand{\Nb}{\mathbb{N}}

\newcommand{\ub}{\mathbf{u}}

\newcommand{\Nat}{\mathbb{N}}
\newcommand{\ar}{\rightarrow}
\newcommand{\partialar}{\rightharpoonup}

\newcommand{\ccl}[1]{\langle #1\rangle}
\newcommand{\fo}[1]{\underline{#1}}
\newcommand{\Age}[1]{\text{Age}(#1)}

\def\bs{\mathbf{s}}
\def\CSP{\mathrm{CSP}}
\def\AutCSP{\mathrm{AutCSP}}
\def\Pol{\mathrm{Pol}}
\def\Inv{\mathrm{Inv}}
\let\vf=\varphi

\newcommand{\anote}[1]{\textcolor{blue}{(Andrei: #1)}}
\newcommand{\tgr}[1]{\textcolor{green}{#1}}
\newcommand{\tyellow}[1]{\textcolor{yellow}{#1}}

\newcommand{\xynote}[1]{\textcolor{orange}{Xiaoyang: #1}}
\newcommand{\wnote}[1]{\textcolor{orange}{Xinyao: #1}}
\newcommand{\bakhnote}[1]{\textcolor{red}{Bakh: #1}}
\newcommand{\algoquad}{\hspace*{6mm}}

\begin{abstract}
We study constraint satisfaction problems (CSPs) where the constraint languages  are defined by finite automata, giving rise to automata-based CSPs.
The key notion is the concept of Automatic Constraint Satisfaction Problem ($\AutCSP$) where constraint languages and instances are specified by finite automata. The $\AutCSP$ captures infinite yet finitely describable sets of relations, enabling concise representations of complex constraints. Studying the complexity of the $\AutCSP$s illustrates the interplay between classical CSPs, automata, and logic, sharpening the boundary between tractable and intractable constraints. We show that checking whether an operation is a polymorphism of such a language
can be done in polynomial time. Building on this, we establish several complexity classification results for the $\AutCSP$. In particular, we prove that Schaefer’s Dichotomy Theorem extends to the $\AutCSP$ over the Boolean domain, and we provide algorithms that decide tractability of some classes of $\AutCSP$s over arbitrary finite domains via automatic polymorphisms. An important part of our work is that our polynomial-time algorithms run on $\AutCSP$ instances that can be exponentially more succinct than their standard CSP counterparts.
\end{abstract}

%%
%% The code below is generated by the tool at http://dl.acm.org/ccs.cfm.
%% Please copy and paste the code instead of the example below.
%%
\begin{CCSXML}
<ccs2012>
 <concept>
  <concept_id>00000000.0000000.0000000</concept_id>
  <concept_desc>Do Not Use This Code, Generate the Correct Terms for Your Paper</concept_desc>
  <concept_significance>500</concept_significance>
 </concept>
 <concept>
  <concept_id>00000000.00000000.00000000</concept_id>
  <concept_desc>Do Not Use This Code, Generate the Correct Terms for Your Paper</concept_desc>
  <concept_significance>300</concept_significance>
 </concept>
 <concept>
  <concept_id>00000000.00000000.00000000</concept_id>
  <concept_desc>Do Not Use This Code, Generate the Correct Terms for Your Paper</concept_desc>
  <concept_significance>100</concept_significance>
 </concept>
 <concept>
  <concept_id>00000000.00000000.00000000</concept_id>
  <concept_desc>Do Not Use This Code, Generate the Correct Terms for Your Paper</concept_desc>
  <concept_significance>100</concept_significance>
 </concept>
</ccs2012>
\end{CCSXML}

\ccsdesc[500]{Do Not Use This Code~Generate the Correct Terms for Your Paper}
\ccsdesc[300]{Do Not Use This Code~Generate the Correct Terms for Your Paper}
\ccsdesc{Do Not Use This Code~Generate the Correct Terms for Your Paper}
\ccsdesc[100]{Do Not Use This Code~Generate the Correct Terms for Your Paper}

%%
%% Keywords. The author(s) should pick words that accurately describe
%% the work being presented. Separate the keywords with commas.
\keywords{CSPs, Automata, Polymorphisms, Automatic relations, Dichotomy Conjecture, Automatic CSP}
%% A "teaser" image appears between the author and affiliation
%% information and the body of the document, and typically spans the
%% page.

%%
%% This command processes the author and affiliation and title
%% information and builds the first part of the formatted document.
\maketitle

%\anote{-- In order to separate edits and comments I made edits green and comments blue\\ -- I tried to improve the intro. Feel to change whatever you want.\\ -- Add statements we prove in the Appendix to the respective section of the appendix.\bakhnote{Xinyao does it.}\wnote{done}\\ -- Apart from comments in the text, pls, check the consistency of terminology and notation. I caught some issue, but there are probably more. Annoyed reviewer is the last thing we need.}

\section{Introduction}\label{S:Intro}

The classical Constraint Satisfaction Problem (CSP) asks whether there exists an assignment to a set of variables that satisfies all given constraints. The CSP has found its applications across many areas of mathematics, computer science, and artificial intelligence. Many of its variants received much attention in the last several decades. In this paper we are concerned with the decision version of the problem, especially its `non-uniform' kind, in which the type of allowed constraints is restricted by a collection of predicates often called a constraint language.

The study of non-uniform CSPs was initiated by the seminal paper of Feder and Vardi \cite{Feder98:monotone}. A major turning point was the introduction of the algebraic approach to CSPs by Jeavons, Cohen, and Gyssens \cite{Jeavons97:closure,Jeavons98:algebraic} and by Bulatov, Jeavons, and Krokhin \cite{Bulatov05:classifying}. The study of the complexity of non-uniform CSPs has been a major research direction spanning four decades starting from the work by Schaefer \cite{SchaeferDic}, through many intermediate results \cite{Bulatov06:3-element,Bulatov11:conservative,Barto14:constraint,Idziak10:tractability}, and finally a complete characterization of the complexity of non-uniform CSPs proved independently by Bulatov \cite{DBLP:conf/focs/Bulatov17} and Zhuk \cite{DBLP:journals/jacm/Zhuk20}. Beyond the finite-domain setting, the study of CSPs over infinite domains was initialized by Bodirsky and Ne\v set\v ril \cite{Bodirsky03,Bodirsky_PhdThesis}
and came to a new layer of complexity and model-theoretic challenges %, as surveyed in 
\cite{pinsker2022current}.

We aim to explore constraint satisfaction problems, in which constraint languages are described by finite automata. Our automata-based formulation of the CSPs naturally extends the classical non-uniform CSPs to the settings with infinite but finitely representable constraint languages. In particular, we initiate a program of studying the dichotomy problem for automata-based CSPs. 
%% Since the paper connects two distinct research areas, -- the constraint satisfaction problem and automata theory, we took a liberty to expose some basic concepts and results from both of these areas in a more gentle way in different parts of this work.

In order to put some meat on the bones, we recall the formal definition of the CSP. A \textbf{ CSP instance} is a triple $I=(V,D, \Cc)$, where $D$ is a finite non-empty domain, $V$ is a set of variables, and $\Cc$ is a finite set of constraints. Each constraint is a pair $\ccl{\bs,R}$, where $\bs=(s_1, \ldots,s_k)$ is a tuple of variables from $V$  and $R\subseteq D^k$ is a $k$-ary relation over $D$. A \textbf{solution} of $I$ is a function $\vf:V\to D$ such that for each constraint $\ccl{\bs,R}\in\Cc$, $\bs=(s_1,\dots, s_k)$, it holds that $(\vf(s_1),\dots,\vf(s_k))\in R$.   An alternative view of the instance $I$ is the conjunction of formulas of the form $R(s_1, \ldots,s_k)$.

The CSP is NP-complete, as it contains known NP-complete problems. Hence, the problem is often restricted either by imposing conditions on the way the variables % constraint scopes 
interact, or by limiting the relations that appear in constraints. The latter is known as the non-uniform CSP parametrized by a \textsc{constraint language} $\Gamma$, -- a set (finite or infinite) of relations on the domain $D$. The problem $\CSP(\Gamma)$ includes all instances in which constraint relations belong to the language $\Gamma$. The standard examples of non-uniform CSPs 
include the well-known \textsc{Satisfiability} problem, in which the constraint language consists of relations over $\{0,1\}$ expressible as clauses; the \textsc{Graph $k$-Coloring} problem, which can be translated into a CSP by using the constraint language consisting solely of the disequality relation on a $k$-element set; and \textbf{Linear Equations}, which encodes solving systems of linear equations through a CSP whose constraints are collections of solutions of a linear equation. These examples illustrate the overall research goal of delineating hard and easy non-uniform CSPs.

Although the complexity of non-uniform CSPs has been characterized in \cite{DBLP:conf/focs/Bulatov17,DBLP:journals/jacm/Zhuk20}, the algorithms in these papers assume a specific representation of instances. The choice of representation may make a dramatic difference in terms of the running time of algorithms and the complexity of the problems. This especially matters when dealing with infinite constraint languages. Note that the constraint languages in the  \textsc{Satisfiability} and \textsc{Linear Equation} problems are infinite, while in the \textsc{Graph $k$-Coloring} problem, it is finite. The standard conventions to represent instances of the CSP are:

--\textbf{Uniform CSP representation}: In this case the representation of the instance $I=(V, D, \Cc)$ is \textbf{explicit}, that is, it lists all the variables in $V$, all domain elements in $D$, and all tuples that appear in relations that occur in $\Cc$. Hence,  the size of $I$ is $|V|+|D|+|\Cc|$, where $|\Cc| = \sum_{\ccl{\bs, R}\in \Cc} ( |\bs| \cdot |R|)$.

-- \textbf{Non-uniform CSP($\Gamma$) representation, $\Gamma$ is finite}: The language $\Gamma$ is a parameter  and is not part of input instances. For a given instance $I=(V, D, \Cc)$,
constraints $\ccl{\bs,R} \in \Cc$ in the instance are represented by the symbol $R$ and $\bs$. Hence, the size of $I$ is  $|V|+|\Cc|$, 
where $|\Cc| = \sum_{\ccl{\bs, R}\in \Cc} |\bs| $.

-- \textbf{Non-uniform CSP($\Gamma$) representation, $\Gamma$ is infinite}: It is standard in the literature on non-uniform CSPs to represent relations \textbf{explicitly} %, just as the uniform case,  
by a complete list of tuples they contain. 

Variations of these conventions are used in some cases. For example, in the \textsc{Satisfiability} problem, it suffices to indicate the literals occurring in a clause and their polarities, which is exponentially more concise than listing all the tuples satisfying the clause. In a similar way \textsc{Linear Equations} do not require listing all solutions of a linear equation. Both problems admit a representation exponentially more concise than that of the explicit one. Sometimes CSPs over an infinite language are replaced with an `equivalent' CSP over a finite one. For example, rather than considering the \textsc{Satisfiability} problem, one considers the \textsc{3-Satisfiability} problem. A similar phenomenon occurs in \textsc{Linear Equations}, in which case a linear equation is replaced with several linear equations containing at most 3 variables; some auxiliary variables may be added as well.

If a CSP admits some algebraic structure, as in the case of \textsc{Linear Equations}, yet another way of representing constraints is possible.

 --\textbf{Representation by a set of generators}: 
This is studied in \cite{Bulatov2006SimpleMaltsev,Dalmau06:generalized,Idziak10:tractability}, where
rather than listing all tuples in a relation, one finds a compact representation of instances via generators. 
Constraint languages with such compact representations have an edge polymorphism \cite{Berman10:constraint}. While compact representations help design solution algorithms when the constraints are given explicitly, it is unclear if they can also be helpful to  represent constraint relations. Indeed, building on the ideas of Berkholz and Grohe \cite{Berkholz15:limitations}, an algorithm capable of handling constraints with an edge polymorphism given by their compact representations would improve solving the \textsc{Graph Isomorphism} problem.

In this paper we study the complexity of CSPs, in which constraints are represented in some concise way. Clearly, the ultimate concise representation would be by a circuit that tests whether or not a tuple is accepted by a constraint. However, this seems to be too general, as even checking if a constraint accepts any tuple at all is NP-complete. Hence, a more tame framework would clearly be more of  interest. In this paper, we propose such a framework  based on finite automata. More precisely, we consider the \textbf{Automatic Constraint Satisfaction Problem} (AutCSP for short), whose instances are  quadruples of the form $I=(V,D,\Cc,\Ac)$, where $V$ is a set of variables, $D$ a domain, $\Ac$ a finite automaton with domain $D$, and $\Cc$ a set of constraints of the form $\langle \bs,R\rangle$. Here, $\bs$ is a tuple of variables (of length $|\bs|$) and $R=\mathcal L(\Ac)\cap D^{|\bs|}$, where $\mathcal L(\Ac)$ is the language recognized by $\Ac$. If the automaton $\Ac$ is fixed, we denote the corresponding problem AutCSP($\Ac$). The overall goal is to determine the complexity of AutCSP($\Ac$) depending on the automaton $\Ac$.

We now briefly describe our paper. We define the automatic constraint satisfaction problem, give examples of automatic CSPs,
recast polymorphisms in the $\AutCSP$ setting, and prove some of their basic properties. We also introduce automatic constraint languages $\Gamma_{\Ac}$ that contain all the relations admissible in instances of AutCSP($\Ac$), and observe that AutCSP($\Ac$) is at least as difficult as the problem CSP($\Gamma_{\Ac}$), %\anote{We should have such a statement explicitly, I think.} 
which means that we need to focus on algorithmic results about the AutCSP. Then, to help with this project, we turn to polymorphisms of constraint languages that are exceptionally important in the study of the classical CSP. We design algorithms for automata whose corresponding constraint language has one of a number of types of polymorphisms, such as polymorphisms on the Boolean domain $\{0,1\}$ identified in Schaefer's Dichotomy Theorem \cite{SchaeferDic}, affine operations $x-y+z$, where $+$ and $-$ are operations of a finite field, and near-unanimity operations. We also prove that it is polynomial time to recognize, given an automaton $\Ac$ and an operation $f:D^k\to D$, whether or not $f$ is a polymorphism of $\Gamma_{\Ac}$. Furthermore, we discuss width 1 automatic constraint languages. Finally, we prove Schaefer's Dichotomy Theorem in the setting of $\AutCSP$. 
A detailed account of our contributions is in Section~\ref{sec:contributions} after we fully provide the necessary concepts in the next section. 
%\anote{I'd move it here}}

\section{Preliminaries}

Since the paper connects two distinct research areas, -- the constraint satisfaction problem and automata theory, we took the liberty to expose some basic concepts and results from both of these areas in a more gentle way in different parts of this work. For instance, in the CSP case, we  present some known algorithms such as the 1-minimality algorithm and the decomposition of relations closed under the majority operation into binary relations; in the automata case, we explain finite automata and automatic relations borrowed from the theory of automatic structures. A hope is that this exposure will be appreciated with the progression of this paper.  Also, in the paper we use the following {\bf conventions}: 

%We introduce automatic constraint languages, automatic constraint satisfaction problem $\AutCSP$, and discuss representations of the $AutCSP$ instances. Then we outline basic properties of polymorphisms in the setting of $\AutCSP$ and provide examples. 

{\bf 1.} For the most part of this paper we fix a finite domain $D$, where $|D|>1$.  In Section \ref{S:BooleanDomain}, the domain $D$ will be  Boolean $D=\{0,1\}$.  

 {\bf 2.} We set $[n]=\{1,2,\ldots, n\}$.  Any $n$-tuple $t\in D^n$ is a function $t:[n] \ar D$. For a set $I = \{i_1,\ldots, i_k\}\subseteq [n]$, the projection of $t$ onto $I$, written $\pi_I t$, is the restriction $\hat t:I\ar D$ of $t$.
%i.e. $(t(i_1),\ldots, t(i_k))$.  
The projection of $R\subseteq D^n$ onto $I$ is $\pi_I R = \{\pi_I   r:   r \in R\}$. Given a tuple $t:[n]\to D$, we may treat tuple $t$ as a sequence $\{t(i): 1\le i \le n\}$ as well. In this case, $|t|$ refers to the length of $t$ (or the cardinality of its domain). 
%\anote{Do we use bars for tuples or not? It's not consistent at the moment. \bakhnote{Xiaoyang will fix it.} Also, presumably, $|t|$ is supposed to denote the cardinality of the domain of $t$} 

{\bf 3.} Typically we denote languages of strings by $\mathcal L$ or $\mathcal L (\mathcal A)$, and languages consisting of relations by $\Gamma$ or $\Gamma_{\mathcal A}$.

\subsection{Automatic relations and operations}\label{S:Automatic-Relations}

We borrow terminology and several results from automatic structures \cite{Blumensath_Gradel_LICS2000_Automatic_Structures,1995BakhNerode}. 
%We use automata for defining automatic constraint languages. 
Let $D^*$ be the set of finite words over $D$.

\begin{definition}
A \textbf{finite automaton} $\Ac$ is a tuple $(S,I, \Delta, F)$, where $S$ is a finite set of states, $I\subseteq S$ is the set of initial states, $\Delta\subseteq S\times D \times S$ is the transition table, and $F\subseteq S$ is the set of accepting states. The automaton $\Ac$ is  \textbf{deterministic} if $\Delta$ is a function from $S\times D$ to $S$ and $I$ is a singleton.
\end{definition}

The automaton $\Ac$ is usually represented as a directed labeled graph with vertex set $S$. We put an edge from  state $p$ to $q$ and label the edge by $d\in D$ if $(p,d,q)\in \Delta$. Let $w=d_1, \ldots, d_n \in D^*$ be a word. A \textbf{run} of $\Ac$ on $w$ is a sequence $s_1, \ldots, s_{n+1}$ of states such that $s_1\in I$ and $(s_i, d_i, s_{i+1})\in \Delta$ for all $i=1, \ldots, n$.  The run $s_1, \ldots, s_{n+1}$ is \textbf{accepting} if $s_{n+1}\in F$. The word $w$ \textbf{is accepted} by $\Ac$ if there is an accepting run of the automaton on $w$. Thus, acceptance of the word $w$ is equivalent to the existence of a path in the graph representation of $\Ac$ such that the path is labeled by $w$, it starts at an initial state, and ends in an accepting state. This view allows for a straightforward solution of several important tasks:

-- The emptiness problem for finite automata, that is finding out if a given automaton accepts at least one string, can be decided in linear time by checking whether any of the accepting states are reachable from any of the initial states.

-- Checking if there is a word of a specific length $n$ accepted by $\Ac$. Again, the BFS enhanced with keeping track of vertices reachable by a path of a given length allows one not only to do that, but also to find such a word if one exists. 

-- The infinity problem of checking if a given automaton accepts infinitely many strings, can be solved in linear time. 
%.  This can also be decided in linear time by detecting an accepting state $p$ reachable from and an initial state such that there is a path from $p$ to $p$. 

%\smallskip
Let us define  $\mathcal L(\Ac)=\{w\in D^*\mid \Ac \text{ accepts }  w\}$. This set is called \textbf{a regular} (or equivalently,  finite automata recognizable) language. The class of regular languages is closed under the Boolean operations of union, intersection, and complementation \cite{10.1147/rd.32.0114}. 

We need the concept of automatic (regular) relations on $D^*$. We  first 
explain this concept on binary relations $R$. \ Let $x=x_1\ldots x_n$ and $y=y_1\ldots y_m$ be strings. We code the pair $(x,y)$ as the string  $z=z_1 \ldots z_{\max\{m,n\}}$ in the following manner:   
(1)  $z_i=(x_i, y_i)$ if $i\leq \min\{n,m\}$,   (2) $z_i=(\diamond, y_i)$ if $n<i\leq m$, and  (3)  $z_i=(x_i, \diamond)$ if $m<i\leq n$, where $\diamond\notin D$.
We call $z$ the \textbf{convolution} of $x$ and $y$ and denote it by $z=x\oplus y$. The string $z=x\oplus y$ is a string over the new alphabet
$D'= \{(\sigma, \diamond)\mid \sigma \in D\}  \cup  
\{(\diamond, \sigma)\mid \sigma \in D\}  \cup  D\times D$. 
For the binary relation $R$, consider the language $\oplus(R)$ over $D'$ consisting of all convolutions of pairs $(u,v)\in R$.  The relation $R$ is  \textbf{automatic (or regular)}  if there exists a finite automaton $\Ac$ recognizing $\oplus(R)$. The automaton $\Ac$ runs on a pair $(x,y)$ by reading symbols of $x$ and $y$ from left to right synchronously.   This definition can easily be generalized to $n$-ary relations $R$ on $D^*$. The alphabet $D'$ of the language $\oplus (R)$ consists of tuples of the form $(\sigma_1, \ldots, \sigma_n)$, where each $\sigma_i \in D\cup \{\Diamond\}$. Hence, we have the following  definitions: 

\begin{definition}\cite{1995BakhNerode,Blumensath_Gradel_LICS2000_Automatic_Structures}[\textbf{Automatic relations and operations}]\label{Dfn:Automatic-Relation}
A relation $R$ on the domain $D^*$ is \textbf{automatic (or regular)} if  the set $\oplus(R)$ is recognized by a finite automaton. A $k$-ary (partial) operation $f$ on the domain $D^*$ is  \textbf{automatic} if the graph of $f$,    
$$
Graph(f)=
\{((x_1, \ldots, x_k), f(x_1,\ldots, x_k)) \mid x_1, \ldots, x_k \in D^*\},
$$
is an automatic relation. 
%\anote{pls, use the same word map/function/operation for the same type of mappings}\bakhnote{Xinyao} \wnote{done. comment: we call everything of the form $D^n \to D$ a map, such as all polymorphisms. everything else will be called a function, such as our patterns, tuples($[n] \to D$), solutions($V \to D$), and transition functions of automata.}
\end{definition}
\begin{example}
    The following are  automatic relations on $D^*$:
    \begin{enumerate}
        \item The length equality predicate $\{(x,y) \mid |x|=|y|\}.$
        \item The length comparison $x \le_{len} y$ iff  $|x |\le |y|$.
        \item The relation $x\leq_{pref} y$ iff $x$ is a prefix of $y$.
        \item The operation $S_d(x)=xd$, where $x\in D^*$ and $d\in D$. 
        \item The  lexicographical linear order $\leq_{lex}$ on $D^*$.  
    \end{enumerate}
    The automata for examples (2) and (3) are shown in Figure~\ref{fig:automata} below.
\end{example}
\begin{figure}[htbp]
    \centering
    \subfigure[Automaton for $\le_{len}$]{
        \includegraphics[width=0.19\textwidth]{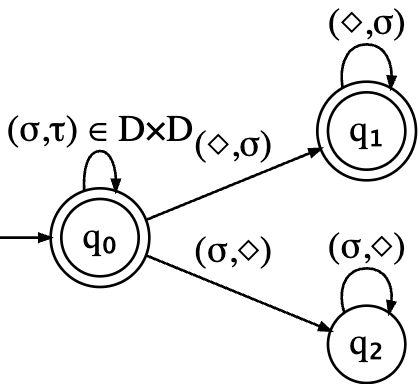}
        \label{fig:aut1}
    }
    \hspace{0.01\textwidth}  % 间距
    \subfigure[Automaton for $\leq_{pref} $]{
        \includegraphics[width=0.21\textwidth]{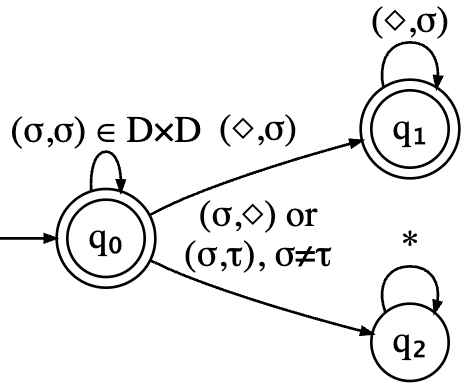}
        \label{fig:aut2}
    }
    \caption{Automata for $\le_{len}$ and  $\leq_{pref} $}
    \label{fig:automata}
\end{figure}
A first order structure $\mathcal M=(M; R_1, \ldots, R_k, f_1, \ldots, f_n)$ is \textbf{ automatic } if its domain $M$, relations $R_1, \ldots, R_k$ and the operations $f_1, \ldots, f_n$ are all automatic. The final ingredient needed from the theory of automatic structures is the next theorem. 

\begin{theorem}[\textbf{ Decidability Theorem}, \cite{Blumensath_Gradel_LICS2000_Automatic_Structures,1995BakhNerode}]\label{DecidabilityThm} 
    There exists an algorithm that, given an automatic structure $\mathcal M$ and a first order formula $\phi(x_1, \ldots, x_n)$, builds an automaton $\Ac_{\phi}$ such that for all $a_1, \ldots, a_n\in M$ we have $\mathcal M \models \phi(a_1, \ldots, a_n)$ if and only if $\Ac_{\phi}$ accepts  $(a_1, \ldots, a_n)$. In particular, the first order theory of any automatic structure $\mathcal M$ is decidable. 
\end{theorem}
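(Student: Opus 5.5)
The plan is induction on the structure of $\phi$. We first put $\phi$ into a normal form built only from atomic formulas, $\neg$, $\wedge$ and $\exists$; the other connectives and $\forall$ are definable from these. We may also assume the atomic formulas involve no nested function terms. To arrange this, we replace every atom containing a term $f(t_1,\ldots,t_k)$ by an existentially quantified formula over the graph of $f$, namely $\exists y\,(Graph(f)(t_1,\ldots,t_k,y)\wedge\ldots)$, and repeat until every atom is of the form $R_i(x_{j_1},\ldots,x_{j_m})$, $Graph(f_i)(\ldots)$, $x=y$, or $M(x)$. Throughout we view $\phi$ as having free variables among a fixed list $x_1,\ldots,x_n$. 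The invariant we maintain is an automaton $\Ac_\phi$ over the convolution alphabet $(D\cup\{\diamond\})^n$ that accepts exactly the convolutions $a_1\oplus\cdots\oplus a_n$ of tuples from $M^n$ satisfying $\phi$.

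For the atomic case, the automaton for $R_i$ is given with the structure. Equality is recognized by an automaton that checks that every letter has equal coordinates. Reordering or repeating arguments, and padding with dummy free variables, is handled by a letter-to-letter transformation of the alphabet. Adding a dummy coordinate, however, changes where padding symbols $\diamond$ sit, since the convolution length becomes the maximum of all lengths. So we intersect with a regular language $C_n$ of well-formed convolutions, in which each coordinate is a block of $D$-letters followed only by $\diamond$'s and some coordinate is nonblank at each position. We also intersect with the condition that each coordinate lies in $M$, using the domain automaton run on the projected coordinate. This step needs a little care: we run the original automaton on the relevant coordinates and treat a letter whose relevant coordinates are all $\diamond$ as a tail that must be skipped. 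That is a simple product construction with an extra "tail" flag.

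Conjunction is handled by the product construction on automata over the common alphabet. Negation is handled by determinizing with the subset construction, complementing, and then intersecting with $C_n$ and with the domain conditions, so that only well-formed tuples from $M^n$ are accepted.

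The main obstacle is the existential quantifier. Given $\Ac_\psi$ for $\psi(x_1,\ldots,x_n,y)$, we want an automaton for $\exists y\,\psi$. We first erase the $y$-coordinate from each letter; this is a nondeterministic projection homomorphism and preserves regularity. Two issues arise.

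First, letters whose remaining coordinates are all $\diamond$ now occur at the end of the word, because the witness $y$ may be longer than all the $x_i$. We handle this by letting the new automaton, at the end of its input, accept from any state that can reach an accepting state of the projected automaton via a path of all-$\diamond$ letters. This is computable by graph reachability.

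Second, we drop all-$\diamond$ letters from the projected alphabet. Acceptance is then exactly existence of a witness $y\in M$, so the invariant is preserved.

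Every step is effective, which gives the algorithm. For a sentence $\phi$, the resulting automaton has a zero-dimensional alphabet, and $\mathcal M\models\phi$ if and only if it accepts the empty word. This is decidable, so the first order theory of $\mathcal M$ is decidable.
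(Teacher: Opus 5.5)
Your proposal is correct. It is essentially the standard argument of the works the paper cites for this theorem, since the paper gives no proof of its own: induction on formulas, using closure of regular convolution languages under products, complementation relative to the well-formed convolutions of $M$-tuples, cylindrification, and projection. Your handling of padding matches the care that argument requires: the tail flag for dummy coordinates, and, in the $\exists$ step, enlarging the accepting set to states that reach $F$ along letters whose $x$-coordinates are all $\diamond$ before discarding those letters.
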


\subsection{AutCSP and AutCSP($\Ac$) }\label{SS:size-aut-constraint}

In this section, we introduce automatic constraint satisfaction problems. In defining these problems a major consideration is the representation of automatic constraint instances and their sizes. We start with the definition of automatic constraint languages. 
%Recall that we have a fixed domain $D$. 
 
\begin{definition}\textbf{(Automatic constraint language)}\label{Dfn:AutCSP} \ 
Let $\Ac$ be a finite automaton over $D$. Define the following  set of relations $\Gamma_{\Ac} = \{R_1, \  R_2,  \  \ldots \}$,  where $R_k=\mathcal L(\Ac) \cap D^k $ and $k \in  \Nat$. We call a language of relations $\Gamma$ an \textbf{automatic constraint language} if  $\Gamma=\Gamma_{\Ac}$ for some finite automaton $\Ac$.
 \end{definition}

Now we define automatic CSP instances:
 \begin{definition}\textbf{ (AutCSP instance)} \ 
 We call the tuple $(V, D, \Cc, \Ac)$  an \textbf{AutCSP instance $I$} or \textbf{ an automatic CSP instance}, where 
\begin{enumerate}
    \item $V$ is a set of variables,  
    \item $D$ is a finite domain, 
    \item $\Cc$ is a finite set of constraints, and  
    \item $\Ac$ is a finite automaton over $D$.  
\end{enumerate}
Here a \textbf{constraint} is  a tuple $\ccl{\bs}=(s_{i_1}, \ldots,s_{i_k})$ of variables from $V$ that in the context of constraint satisfaction instance is viewed as the formula $R_k(s_{i_1}, \ldots, s_{i_k})$. \  Sometimes we write 
the constraint $\ccl{\bs}$ as $\ccl{\bs, R_k}$, -- consistent with the notations in the CSP literature,  where $R_k$ is a relation symbol of arity $k$.
%\footnote{There is another natural way to represent constraints using finite automata: Every constraint is specified through a sequence of variables \textbf{and} its own finite automaton that determines which tuples are accepted by the constraint. This representation clearly has more expressive power and still allows for a more succinct representation of constraints. However, this direction is outside of the scope of this paper.}
\end{definition}
 
%Let $R_k$ be the relation corresponding to the constraint  $\ccl{\bs}$.  Then any tuple $(d_1, \ldots, d_k)$ from $D^k$ belongs to $R_k$ if and only if the tuple  $(d_1, \ldots, d_k)$ is accepted by the automaton $\Ac$ as string of length $k$. Hence, if we want to represent the relation $R_k$, then instead of representing $R_k$ explicitly, where we list all tuples in $R_k$, we can represent $R_k$ implicitly by the  $k$-tuple $\ccl{s_1, \ldots, s_k}$ of variables. \anote{The last 2 paragraphs are a bit repetitive, I'll try to rewrite}
%The size of this representation of $R_k$ is $|\bs|$. This representation can be  exponentially smaller than the explicit representation of the relation $R_k$.

\textbf{ Automatic representations:}  Since representations of $\AutCSP$ instances are important, we need to explain the sizes of $\AutCSP$ instances. Recall that $|X|$ is the cardinality of $X$ (if $X$ is a set) or the length of $X$ (if $X$ is a tuple).  The size of a finite automaton $\Ac=(S, q_0, \Delta, F)$, written as $|\Ac|$, is  $|S|+|\Delta|$. 

%The instance consists of two parts. The first is the standard CSP instance $I'=(V, D, \Cc)$ and the second is the  automaton $\Ac$. 
%Therefore, we have the following:

\begin{definition}
 Let $I=(V, D, \Cc, \Ac)$  be an automatic CSP instance.
The \textbf{ size of the instance $I$} is defined in two ways: 
\begin{enumerate}
      \item \textbf{ The automaton is a part of the input:} The size of the representation of $I$ is $|V|+|\Cc|+|\Ac|$,  where  $|\Cc| = \sum_{\ccl{\bs, R}\in \Cc} |\bs| $.

    \item \textbf{ The automaton is a fixed parameter:} 
    The size of the representation of $I$ is $|V|+|\Cc|$,  where $|\Cc|$ is defined as above. So, $\Ac$ is not part of the input.
\end{enumerate}
\end{definition}

%\begin{remark}\label{rem:representations}

It is important to stress that the main difference between the classical CSPs and automatic CSPs is in their representations. The  AutCSPs instances are more succinct than their classical counterparts.  Representing the constraint $\ccl{\bs,R_k}$, in the $\AutCSP$ case, needs $|\bs|$ units of space, while 
representing it \textbf{explicitly} in the standard CSP setting requires $\sum_{r \in R_k}|r|=|R_k|\cdot |\bs|$ units of space. 
%In the classical CSP  case, representing the constraint $\ccl{\bs,R_k}$ explicitly requires $|\bs|$ units of space  only if $R_k$ is a singleton set. 
For any automatic constraint language $\Gamma_{\Ac}$, the function $gr_{\Ac}(k)=|R_k|$, where $R_k\in \Gamma_{\Ac}$, is always a polynomial or exponential \cite{DBLP:journals/tcs/Flajolet87}. Hence, our presentations of the $\AutCSP$ instances can be exponentially more succinct than their explicit representations. 
We also call our representations  \textbf{automatic representations}.

A \textbf{ solution} to an $\AutCSP$ instance $I=(V, D, \Cc, \Ac)$ is a function $\varphi : V \ar D$ such that for each constraint 
$\ccl{\bs}$, the tuple $\varphi(\bs)$ belongs to $R_{|\bs|}$. Thus, $\varphi(\bs)$ is a solution iff $\varphi(\bs)$ is accepted by the automaton $\Ac$. The goal is to design an algorithm that, given an $\AutCSP$ instance  $I$, decides if $I$ is satisfiable (that is, has a solution).

\begin{definition}
Based on the representations of $\AutCSP$ instances, we  define the following classes of CSP.
\begin{itemize}
\item \textbf{ The automatic constraint satisfaction problem}, written \textbf{ AutCSP}, 
is the problem of finding solutions to instances $I=(V, D, C, \Ac)$ where the automaton $\Ac$ is counted as a part of the input.   
\item The constraint satisfaction problem that corresponds to $\Gamma_{\Ac}$ is denoted by \textbf{ AutCSP($\Ac$)}. For  the AutCSP($\Ac$) instances, the automaton $\Ac$ is a fixed parameter and it is not part of the input. Hence we can write these instances in the form $(V, D, \Cc)$ as
$\Gamma_{\Ac}$ is  given by  $\Ac$. 
%Sometimes we call this \textbf{ automatic constraint satisfaction problem with the parameter $\Ac$.}
\end{itemize}
\end{definition}

It is easy to see that the AutCSP($\Ac$) is at least as difficult as the  CSP($\Gamma_{\Ac}$). Indeed, given any instance $I=(V, D, \mathcal C)$ of CSP($\Gamma_{\Ac}$) one transforms $I$ into the instance $I'=(V, D, \mathcal C')$
of the $\AutCSP(\mathcal A)$, where each $({\bs}, R)\in \mathcal C$ is mapped to $\ccl{\bs}$ from $\mathcal C'$. 

The $\AutCSP$ and $\AutCSP(\Ac)$ can be viewed as automata versions of uniform and non-uniform CSPs, respectively. Hence, we can pose the following research question: 
\begin{quote}{\em Is it true, similar to the classic CSP, that for every automaton $\Ac$, the  $\AutCSP(\Ac)$ is always  either decidable in polynomial time or  NP-complete?}    
\end{quote}
One theme in our approach is to determine the classes $\Kc$ of automata for which the problem $\AutCSP(\Ac)$ is in $P$ for all $\Ac\in \Kc$. 
%The way of identifying the class $\mathcal{G}$ is analogous to the way of identifying tractable constraint languages in non-uniform CSPs, e.g., via \textbf{polymorphisms} (see Section\ref{Sec:Pol_are_auto}).
For instance, in this paper we solved the question positively in the case when the automata $\Ac$ are given over the Boolean domain. 

\subsection{Examples and the single constraint case}

We now present several examples of $\AutCSP (\Ac)$:
\noindent 
\begin{example}\label{eg:boolean} In these examples $D=\{0,1\}$ is Boolean:
\begin{enumerate}
\item Consider $\CSP(\Gamma)$, where $\Gamma$ consists of finitely many relations of pairwise different arities. % is a finite language over $D$. 
This is an $\AutCSP$ as finite languages are regular\footnote{The relations should be 
of different arities as all relations in automatic constraint languages are of different arities. Otherwise, a bit of coding is required.}.

\item Let $\Ac_\mathrm{NAE}$ be an automaton, see  Figure~\ref{fig:nae-automaton}, recognizing  NAE $=\overline{0^* + 1^*}$, where NAE stands for not-all-equal.   
For all $k\geq 1$, the relation $R_k$ in $\Gamma_{\Ac_{NAE}}$ consists of all $k$-tuples $(d_1, \ldots, d_k)$ such that $d_i\neq d_j$ for some $i$ and $j$. The $\CSP(\mathrm{NAE})$ is NP-complete \cite{SchaeferDic}.
Since $\CSP(\mathrm{NAE})$ reduces in linear time to $\AutCSP(\Ac_\mathrm{NAE})$, the problem $\AutCSP(\Ac_\mathrm{NAE})$ is NP-hard.

\item Consider the automaton $\Ac_{\mathrm{ODD}}$ recognizing language $\mathrm{ODD} = \{\alpha\in (0+1)^*: \alpha \text{ has an odd number of 1s} \}$. In any instance of the $\AutCSP(\Ac_{\text{ODD}})$, each constraint has the  form $x_1+\dots +x_n=1$ mod $2$. Solving such constraints is in P.

\end{enumerate}
\end{example}

\begin{figure}
    \centering
   \includegraphics[width=0.7\linewidth]{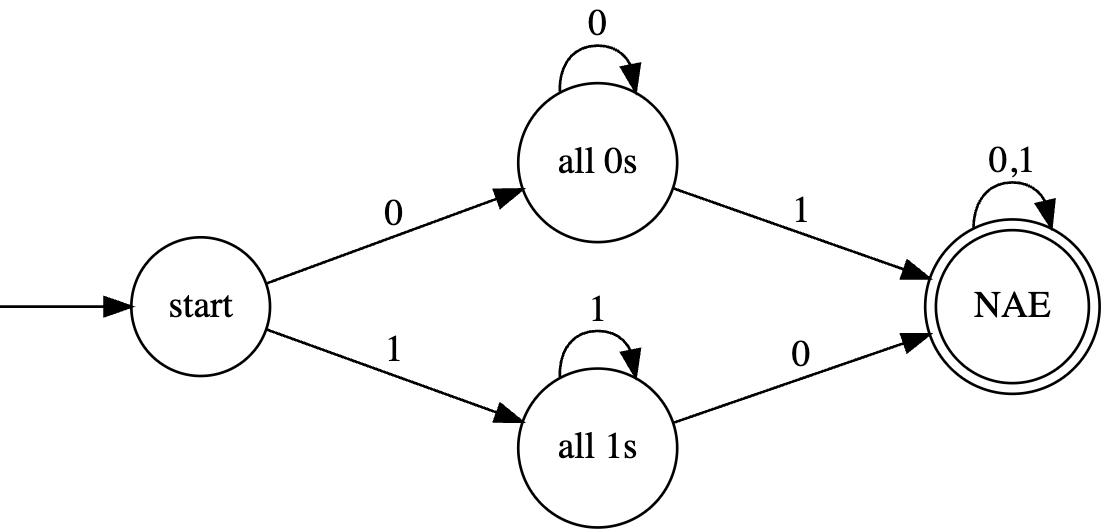}
    \caption{The automaton $\Ac_{\mathrm{NAE}}$}
    \label{fig:nae-automaton}
\end{figure}

Here are some basic facts about %the time complexity issues of the 
$\AutCSP$ and the $\AutCSP(\Ac)$.

%In order to show that the dichotomy conjecture makes sense in the setting of automatic CSPs,  we first need to show that  $\AutCSP$ is  in NP. The next observation does exactly that.

\begin{proposition} \label{Prop:$NP$-complete}
The following are true:
\begin{enumerate}
\item  The $\AutCSP$ is in NP. 
%Namely, the problem that, given an automatic instance $I$ from $\AutCSP$, decides if $I$ is satisfiable is an NP-problem. In fact, the problem is NP-complete.
\item The problem $\AutCSP(\Ac)$ is in NP for any automaton $\Ac$. Moreover, there exists a  automaton $\Ac'$ such that $AutCSP(\Ac')$ is $NP$-complete. 
\end{enumerate}
\end{proposition}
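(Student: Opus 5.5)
Both parts of Proposition~\ref{Prop:$NP$-complete} follow from one observation: a candidate solution can be checked in polynomial time by simulating the automaton. The approach is a certificate-based NP membership argument plus a reduction from a known NP-complete CSP.

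\textbf{Part (1).} Given an $\AutCSP$ instance $I=(V,D,\Cc,\Ac)$, the certificate is an assignment $\vf:V\to D$, of size $O(|V|\log|D|)$. The verifier processes each constraint $\ccl{\bs}$, $\bs=(s_1,\dots,s_k)$, by forming the word $w=\vf(s_1)\cdots\vf(s_k)\in D^k$ and testing whether $w\in\mathcal L(\Ac)$. Since $\Ac$ may be nondeterministic, we do not determinize it. Instead we run the standard subset simulation: keep the set $Q_i\subseteq S$ of states reachable from $I$ after reading the first $i$ letters, update it with $\Delta$, and accept iff $Q_k\cap F\neq\emptyset$. Each step costs $O(|\Delta|+|S|)$, so one constraint costs $O(|\bs|\cdot|\Ac|)$. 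The total is $O(|\Cc|\cdot|\Ac|)$, which is polynomial in the size $|V|+|\Cc|+|\Ac|$ of the instance. By construction, $\vf$ passes every check iff each $\vf(\bs)\in R_{|\bs|}$, that is, iff $\vf$ is a solution. Hence $\AutCSP\in$ NP.

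\textbf{Part (2), membership.} For a fixed $\Ac$, membership of $\AutCSP(\Ac)$ in NP is the special case of part (1) where $|\Ac|$ is a constant. The verification is then linear in $|\Cc|$.

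\textbf{Part (2), hardness.} Take $\Ac'=\Ac_{\mathrm{NAE}}$ from Example~\ref{eg:boolean}(2). As noted there, $\CSP(\mathrm{NAE})$ (for instance, NAE-3SAT without negations, i.e.\ hypergraph 2-colouring) is NP-complete by Schaefer's theorem. It reduces to $\AutCSP(\Ac_{\mathrm{NAE}})$ in linear time: each constraint $\mathrm{NAE}(x,y,z)$ is sent to the scope $\ccl{(x,y,z)}$, whose relation $R_3=\mathcal L(\Ac_{\mathrm{NAE}})\cap\{0,1\}^3$ is exactly the ternary not-all-equal relation. Combined with membership, $\AutCSP(\Ac')$ is NP-complete.

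\textbf{Main obstacle.} The only step needing care is that the verifier must run in time polynomial in the \emph{succinct} representation. This is where we must avoid expanding $R_k$ explicitly, since $|R_k|$ may be exponential in $k$, and also avoid determinizing $\Ac$, which may blow up exponentially. The on-the-fly subset simulation above sidesteps both issues.
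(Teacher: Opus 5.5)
Your proposal is correct and follows the same route as the paper. The assignment $\vf:V\to D$ serves as the certificate, each constraint is checked by running $\Ac$ on $\vf(\bs)$, and NP-hardness for $\Ac'=\Ac_{\mathrm{NAE}}$ comes from the linear-time reduction from $\CSP(\mathrm{NAE})$ in Example~\ref{eg:boolean}. Your on-the-fly subset simulation for a possibly nondeterministic $\Ac$, together with the explicit $O(|\Cc|\cdot|\Ac|)$ bound, makes precise the polynomial-time acceptance test that the paper leaves implicit.
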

\begin{proof}
To see that the $\AutCSP$ is in NP, 
observe that for an $\AutCSP$ instance $I=(V,D,\Cc,\Ac)$ and an assignment $\phi:V \ar D$, each constraint $\ccl{\bs} \in \Cc$ can be checked by testing if the tuple $\phi(\bs)$ is accepted by $\Ac$. 
Likewise, for every fixed automaton $\Ac'$, the same argument shows that $\AutCSP(\Ac') \in NP$.

For NP-completeness, the problem $\AutCSP(\Ac_\mathrm{NAE})$ from Example \ref{eg:boolean} is NP-hard. So, set $\Ac'=\Ac_\mathrm{NAE}$.
Finally, as  $\AutCSP(\Ac_\mathrm{NAE})$ is essentially a subproblem of $\AutCSP$, the latter is NP, as well.
\end{proof}

We now show that $NP$-hardness persists even in a highly restricted case where $\AutCSP$ instances have only a single constraint. This already distinguishes the standard CSP from the $\AutCSP$. In the CSP setting, satisfying a single constraint is always in P.

\begin{theorem}\label{Thm:1-constraint}
Let $\Ac$ be an automaton over the Boolean domain $D = \{0,1\}$ such that $\mathcal L(\Ac) = \text{NAE}_3^*$, where $\text{NAE}_3 := D^3 \setminus \{000,111\}$. \ Then $AutCSP (\Ac)$ restricted to  single constraints is NP-complete.
\end{theorem}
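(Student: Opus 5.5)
Membership in NP follows from Proposition~\ref{Prop:$NP$-complete}, since the single-constraint case is a subproblem of $\AutCSP(\Ac)$. For hardness I would reduce from $\CSP(\mathrm{NAE}_3)$, i.e.\ positive NAE-3SAT, which is NP-complete by Schaefer's theorem. The observation driving the reduction is that one constraint $\ccl{\bs}$ with $|\bs|=3m$ is a word of length $3m$. That word lies in $\mathcal L(\Ac)=\mathrm{NAE}_3^*$ exactly when each consecutive block of three symbols lies in $\mathrm{NAE}_3$. A repeated variable in $\bs$ is allowed, since a constraint is just a tuple of variables from $V$. So a single scope can encode a whole conjunction of $\mathrm{NAE}_3$ constraints.

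The reduction goes as follows. Given an instance of $\CSP(\mathrm{NAE}_3)$ with variables $V$ and constraints $\mathrm{NAE}_3(x_{j,1},x_{j,2},x_{j,3})$ for $j=1,\dots,m$, output the $\AutCSP(\Ac)$ instance $(V,D,\{\ccl{\bs}\})$ with
\[
\bs=(x_{1,1},x_{1,2},x_{1,3},\,x_{2,1},x_{2,2},x_{2,3},\,\dots,\,x_{m,1},x_{m,2},x_{m,3}).
\]
Its size is $|V|+3m$, linear in the input.

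For correctness, take an assignment $\vf:V\to D$. The word $\vf(\bs)$ has length $3m$, and any factorization of it into elements of $\mathrm{NAE}_3^*$ must cut it into consecutive blocks of length $3$, because every element of $\mathrm{NAE}_3$ has length exactly $3$. Hence $\vf(\bs)\in \mathcal L(\Ac)$ iff $(\vf(x_{j,1}),\vf(x_{j,2}),\vf(x_{j,3}))\in \mathrm{NAE}_3$ for every $j$, which is iff $\vf$ satisfies the original instance. So the two instances have the same solutions, and the single-constraint problem is NP-hard.

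The only point needing care is that the block decomposition is forced, which uses the fixed length $3$ of the words in $\mathrm{NAE}_3$. Everything else is routine. One might also check that the choice of automaton $\Ac$ with $\mathcal L(\Ac)=\mathrm{NAE}_3^*$ does not matter, since the relation $R_{3m}$ depends only on the language.
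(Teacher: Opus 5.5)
Your proof is correct and follows the paper's argument: you reduce $\CSP(\{\mathrm{NAE}_3\})$ to a single constraint whose scope concatenates the $m$ scopes into one tuple of length $3m$, so that $R_{3m}=\mathrm{NAE}_3^m$. You also state explicitly that the factorization into length-$3$ blocks is forced, a step the paper leaves implicit.
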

\begin{proof}
We reduce from the well-known NP-complete problem $\CSP(\{\text{NAE}_3\})$ to AutCSP($\Ac)$.

Consider an instance $I = (V, D, \{C_k\}_{k=1}^m) \in \CSP(\{\text{NAE}_3\})$, where  we have $C_k = \text{NAE}_3(x_{i_k}, x_{j_k}, x_{l_k})$.
We now construct an instance $I' = (V, D, \{(\mathbf{s}, R_{3m})\}, \Ac)$ of $\AutCSP(\Ac)$ as follows. 

Let $\mathbf{s} := (x_{i_1}, x_{j_1}, x_{l_1}, x_{i_2}, x_{j_2}, x_{l_2}, \dots, x_{i_m}, x_{j_m}, x_{l_m}) $
be a tuple of length  $3m$ in which each block $(x_{i_k},x_{j_k},x_{l_k})$ of three variables corresponds to the constraint $C_k$ in $I$.  Let $R_{3m} := \mathcal L(\Ac) \cap D^{3m} = \text{NAE}_3^m.$

By construction, the assignment $\phi: V\rightarrow D$ satisfies all constraints in $I$ if and only if the tuple $\phi(\mathbf{s})$ belongs to $\text{NAE}_3^m$, i.e., it satisfies the single constraint $\ccl{\mathbf{s}, R_{3m}}$ in $I'$. Since both the reduction and membership testing for the fixed automaton $\Ac$ are polynomial-time, the problem is in NP and NP-hard, and thus NP-complete. 
\end{proof}

\subsection{Polymorphisms}

An important algebraic concept used in the study of the constraint satisfaction problem is the notion of polymorphism. 

\begin{definition}\label{dfn:polymorphisms}
Let $R$ be a relation of arity $k$ on $D$.
    A \textbf{polymorphism } of $R$ is an operation 
    $f:D^t \rightarrow D$ that satisfies the following condition. For all tuples $(x_{1,1}, \ldots, x_{1, k})\in R$, $\ldots$, $(x_{t,1}, \ldots, x_{t, k})\in R$ we have $
    (f(x_{1,1}, x_{2,1}, \ldots, x_{t,1}), \ldots, f(x_{1,k}, x_{2,k}, \ldots, x_{t,k}))\in R$.  The operation $f:D^t\to D$ is called a \textbf{polymorphism} of a language $\Gamma$ if $f$ is a polymorphism for every relation $R$ in $\Gamma$. 
\end{definition}

    For an automaton $\Ac$, a $t$-ary operation $f$ is said to be a \textbf{polymorphism of $\mathcal L(\Ac)$} if it is a polymorphism of $\Gamma_{\Ac}$, that is, if for any $x_1,\dots,x_t\in \mathcal L(\Ac)$ of the same length, $x_i=(x_{i,1},\dots,x_{i,k})$,
    $$
    (f(x_{1,1}, x_{2,1}, \ldots, x_{t,1}), \ldots, f(x_{1,k}, x_{2,k}, \ldots, x_{t,k}))\in R. 
    $$

A simple example of a polymorphism is the ternary majority operation on $D=\{0,1\}$ for the relation $R=\{(0,0,1), (0,1,0), (1,1,0)\}$. 
\begin{definition}\label{dfn:Pol-Inv}
 Let $\Gamma$ be a language and let $\Fc$ be a set of operations on $D$. We introduce the following two sets:
 \begin{itemize}
     \item $\Pol(\Gamma)=\{f\mid f$ is a polymorphism of $\Gamma\}$.
     \item $\Inv(\Fc)=\{R\mid$ every $f\in \Fc$ is a polymorphism of $R\}$. 
 \end{itemize}
 When $\Fc=\{g\}$ is a singleton, we write $\Inv(g)$ instead of $\Inv(\Fc)$.  Likewise, if $\Gamma=\{R\}$ is a singleton, we write $\Pol(R)$.  
 \end{definition}

The importance of polymorphisms can be seen from the following Dichotomy theorem for Boolean domains:
 
 %instead of $\Pol(\{R\})$.
% \anote{I suggest to use $\Gamma$ for constraint languages, as it is accepted in the CSP literature, and use $\Lc$ for languages recognized by automata.} \bakhnote{Xiaoyang}

%\smallskip

\par

\begin{untheorem}[Schaefer's Dichotomy \cite{SchaeferDic}] 
    Let $\Gamma$ be a (possibly infinite) set of relations over  $D = \{0,1\}$. Then   CSP($\Gamma$) is polynomial-time solvable iff one of the following is a polymorphism of \   $\Gamma$:
    \begin{itemize}
        \item the constant operations \  $\mathbf{ 0} : x \mapsto 0$ and \ $\mathbf{ 1} : x \mapsto 1$.
        \item the binary AND operation $\land : (x,y) \mapsto x \land y$.
        \item the binary OR operation $\lor : (x,y) \mapsto x \lor y$.
         \item the majority operation  
         $$Maj : (x,y,z) \mapsto ( x \land y) \lor (x \land z) \lor ( y \land z).$$
        \item the minority operation $Minor : (x,y,z) \mapsto x \oplus y \oplus z$, where $\oplus$ denotes addition modulo 2.        
    \end{itemize}
Otherwise, it is NP-complete.
\end{untheorem}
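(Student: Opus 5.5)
The plan follows the standard two-part structure: prove tractability for each of the six polymorphisms separately, then prove NP-hardness when none of them is present. Throughout, relations of $\Gamma$ are given explicitly, as in the convention for infinite non-uniform languages recalled in the Introduction. For infinite $\Gamma$, each algorithm must therefore convert every explicitly listed relation into a formula of polynomial size before calling a classical solver.

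\textbf{Tractable cases.}
\begin{itemize}
\item \emph{Constants.} If $\mathbf 0$ (respectively $\mathbf 1$) is a polymorphism, every nonempty relation contains the all-$0$ (all-$1$) tuple. The instance is then satisfiable iff no constraint relation is empty, and the constant assignment witnesses this.
\item \emph{$\land$.} If $\land$ is a polymorphism, every $R\in\Gamma$ is closed under componentwise conjunction. Such relations are exactly those definable by Horn formulas. I would compute, in time polynomial in $|R|\cdot|\bs|$, a Horn CNF defining $R$. One way is the Dechter--Pearl style construction: for each tuple outside the Horn closure test, take the clauses $\bigwedge_{i\in P}x_i\to x_j$ forced by the tuples, keeping only clauses implied by $R$ and falsified by a minimal nonmember. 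The main care is bounding the number of clauses by a polynomial in $|R|$. The conjunction of these formulas is then solved by unit propagation (Horn-SAT).
\item \emph{$\lor$.} This case is dual to the $\land$ case.
\item \emph{$Maj$.} A majority-closed relation equals the conjunction of its binary projections $\pi_{\{i,j\}}R$, which are computable directly from the list of tuples. The instance becomes a 2-SAT instance.
\item \emph{$Minor$.} A minority-closed relation is an affine subspace of $\mathbb{Z}_2^k$. Its defining linear system is obtained by Gaussian elimination on the differences of the listed tuples, and the resulting system is solved by Gaussian elimination over $\mathbb{Z}_2$.
\end{itemize}

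\textbf{Hardness.} The key observation for infinite $\Gamma$ is that $\Pol(\Gamma)=\bigcap_{R\in\Gamma}\Pol(R)$. Hence if none of the six operations is a polymorphism, each failure is witnessed by a single relation, and a finite $\Gamma'\subseteq\Gamma$ with at most six relations already has none of them as polymorphisms. Since $\CSP(\Gamma')$ trivially reduces to $\CSP(\Gamma)$, it suffices to treat finite $\Gamma'$.

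For finite $\Gamma'$ I would invoke the Galois correspondence: if $\Pol(\Gamma')\subseteq\Pol(S)$, then $S$ is primitive-positive definable from $\Gamma'$ (with equality), and $\CSP(\Gamma'\cup\{S\})$ reduces to $\CSP(\Gamma')$ in polynomial time by replacing each $S$-constraint with its pp-definition.

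It then remains to show that $\Pol(\Gamma')$ consists only of essentially unary operations, that is, projections and possibly negations. This follows from Post's lattice; a direct argument is also available. If $\Pol(\Gamma')$ contains no constant, then every unary polymorphism is the identity or negation. A binary polymorphism that is not a projection, combined with these, yields $\land$, $\lor$, or a constant. A ternary polymorphism that is not essentially unary yields $Maj$ or $Minor$ after identifications. This is the classical case analysis (e.g.\ via Post's lemma on clones over $\{0,1\}$).

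With $\Pol(\Gamma')$ essentially unary, $\mathrm{NAE}_3$ is invariant under all of it, since negation preserves $\mathrm{NAE}_3$. Therefore $\mathrm{NAE}_3$ is pp-definable from $\Gamma'$, and $\CSP(\Gamma')$ is NP-hard by NP-completeness of NAE-SAT. Membership in NP is immediate by checking a guessed assignment against the explicit lists.

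\textbf{Main obstacle.} I expect the hardest step to be the structural case analysis showing that a clone avoiding all six operations is essentially unary. I would take it from Post's classification rather than reprove it. A secondary difficulty is the polynomial-size Horn representation of an explicitly given $\land$-closed relation, which is needed because $\Gamma$ may be infinite.
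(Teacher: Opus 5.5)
The paper gives no proof of this statement. Schaefer's theorem is cited and used as a black box. The only part of your argument the paper carries out itself is the first hardness step: passing, via $\Pol(\Gamma)=\bigcap_{R\in\Gamma}\Pol(R)$, to a finite sublanguage with one witnessing relation per missing operation. This is exactly how Theorem~\ref{Thm:Automata-dichotomy} and Corollary~\ref{Cor:Pol-check} reduce to the finite case.

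The rest of your outline is the standard algebraic proof and is sound:
\begin{itemize}
\item the Galois connection;
\item Post's lattice, which places any clone avoiding the six operations inside the clone generated by $\neg$;
\item pp-definability of $\mathrm{NAE}_3$.
\end{itemize}
Your $Maj$ and $Minor$ algorithms are explicit-representation versions of Lemma~\ref{majority_biprojection} and Theorem~\ref{thm:2minor_translation}. Strictly speaking, the ``only if'' half of the stated equivalence needs $\mathrm{P}\neq\mathrm{NP}$. What you actually prove is: one of the six gives P, and none gives NP-completeness.

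Two steps are not yet delivered as written.

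\textbf{The $\land$ case.} A construction that adds a clause ``for each tuple outside'' $R$ ranges over up to $2^k-|R|$ non-members. You leave the polynomial bound open, yet that bound is the whole point for infinite $\Gamma$. It is easy to supply.
\begin{itemize}
\item Identify tuples with their sets of $1$-positions. If $R\neq\emptyset$, put $m_0=\bigcap R$ and add the unit clauses $x_j$ for $j\in m_0$.
\item For every $F\in R$ and $i\notin F$, let $T=\{G\in R: F\cup\{i\}\subseteq G\}$.
\item If $T=\emptyset$, add the negative clause $\bigvee_{l\in F\cup\{i\}}\neg x_l$. Otherwise add the clauses $\bigwedge_{l\in F\cup\{i\}}x_l\to x_j$ for all $j\in\bigcap T$.
\end{itemize}
Every clause is true on $R$. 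Conversely, a model $X$ contains a strictly increasing chain $m_0=F_0\subsetneq F_1\subsetneq\cdots$ of members of $R$: take $F_{t+1}=\bigcap T$ for some $i\in X\setminus F_t$. This chain can only stop at $X$, so $X\in R$. The construction gives $O(|R|k^2)$ clauses.

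Alternatively, skip Horn formulas entirely. Run unit propagation (1-minimality) directly on the explicit lists and set each variable to $\bigwedge P_x$. This is precisely the argument of the paper's theorem on $\land$-closed languages in Section~\ref{section:width_1}.

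\textbf{The clone step.} Your ``direct argument'' is not correct as stated, because identifications alone need not produce $Maj$ or $Minor$. Take $f(a,b,c)=a$ if $a=b=c$ and $f(a,b,c)=\neg a$ otherwise. It is idempotent and not essentially unary, yet all three of its identification minors are projections. $Maj$ appears only after composition: $f(f(x,y,z),y,z)=Maj(x,y,z)$. Reducing from higher arities to arity at most $3$ also needs a \'{S}wierczkowski-type lemma. Separately, ``a binary polymorphism that is not a projection'' should read ``not essentially unary''. Since you fall back on Post's classification for this step, the proof stands, but the sketch should not be offered as an alternative proof.
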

   
%% \textbf{ Theorem} [Schaefer's Dichotomy \cite{SchaeferDic}] \label{Thm:Boolean}\emph{ 
%%     Let $\Lc$ be a (possibly infinite) set of relations over the Boolean domain $D = \{0,1\}$. Then   CSP($\Lc$) is polynomial time solvable iff one of the following is a polymorphism of $\Lc$:
%%     \begin{itemize}
%%         \item the constant maps \  $\mathbf{ 0} : x \mapsto 0$ and \ $\mathbf{ 1} : x \mapsto 1$.
%%         \item the binary AND map $\land : (x,y) \mapsto x \land y$.
%%         \item the binary OR map $\lor : (x,y) \mapsto x \lor y$.
%%          \item the majority map  
%%          $$Maj : (x,y,z) \mapsto ( x \land y) \lor (x \land z) \lor ( y \land z).$$
%%         \item the minority map $Minor : (x,y,z) \mapsto x \oplus y \oplus z$, where $\oplus$ denotes addition modulo 2.        
%%     \end{itemize}
%% Otherwise it is NP-complete.}

One interesting class of operations on $D$ is the class of  Sigger's operations. These are the operations $f:D^4\rightarrow D$ such that 
for all $a,r,e\in D$ we have 
$f(r, a, r,e) = f(a, r,e,a)$. 
%The following theorem  shows that 
The Siggers polymorphisms are responsible for the $\CSP(\Gamma)$ problem to be in $P$:

\begin{untheorem}[\cite{DBLP:conf/focs/Bulatov17,DBLP:conf/focs/Zhuk17}]
    Let $\Gamma$ be a (possibly infinite) language on a finite domain. 
    Then the CSP($\Gamma$) is in $P$ if and only if \ 
    $\Gamma$ has a Sigger's polymorphism. Otherwise, it is NP-complete.
\end{untheorem}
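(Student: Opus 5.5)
The statement is the finite-domain CSP Dichotomy Theorem, and I would not reprove it from scratch. I would assemble it from the algebraic approach and the two cited algorithms, taking care of the fact that $\Gamma$ may be infinite. Throughout, I use the convention (standard for infinite $\Gamma$, and the one fixed in the introduction) that relations in an instance are listed explicitly. Every instance of $\CSP(\Gamma)$ then mentions only finitely many relations of $\Gamma$, so $\CSP(\Gamma)$ is the union of the problems $\CSP(\Gamma')$ over finite $\Gamma'\subseteq\Gamma$. Membership in NP is immediate, since a solution can be checked against the explicit lists.

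\textbf{Hardness direction.} Suppose $\Gamma$ has no Siggers polymorphism. The plan has three steps.
\begin{enumerate}
\item Since $D$ is finite and relations are closed under intersection, there is a finite $\Gamma'\subseteq\Gamma$ with $\Pol(\Gamma')=\Pol(\Gamma)$. Indeed, the set of $4$-ary operations on $D$ is finite. For each $4$-ary operation satisfying $f(r,a,r,e)=f(a,r,e,a)$ we pick one relation of $\Gamma$ that it fails to preserve, and let $\Gamma'$ collect these finitely many relations. Then $\Gamma'$ still has no Siggers polymorphism.
\item I use the Galois connection $\Inv(\Pol(\Gamma'))=\langle\Gamma'\rangle$, where $\langle\Gamma'\rangle$ is the set of primitive positive definable relations (Geiger; Bodnarchuk et al.). Finitely many pp-definable relations can be added without changing the complexity, via a polynomial-time gadget reduction. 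By the Bulatov--Jeavons--Krokhin reduction, the same holds for cores, singleton unary relations and pp-interpretations.
\item By Siggers' theorem, absence of a Siggers operation in the idempotent core is equivalent to absence of a Taylor operation. By the Taylor/Hobby--McKenzie argument, this yields a pp-interpretation of $\mathrm{NAE}$ (or $3$-SAT) on $\{0,1\}$, which is NP-complete by Schaefer's theorem.
\end{enumerate}
Since $\CSP(\Gamma')$ is a subproblem of $\CSP(\Gamma)$, the latter is NP-complete.

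\textbf{Tractability direction.} Suppose $\Gamma$ has a Siggers polymorphism $f$. Every finite $\Gamma'\subseteq\Gamma$ is preserved by $f$, and so is the core together with the constants. The algorithm of Bulatov (via semilattice/majority/affine edge colourings and centralizers) or that of Zhuk (via strong subuniverses and iterated consistency reductions) solves $\CSP(\Gamma')$. The key point for infinite $\Gamma$ is uniformity. Both algorithms take as input an instance whose constraint relations are all invariant under the fixed operation $f$, and they run in time polynomial in the explicit size of the instance, with exponents depending only on $|D|$ and $f$. This uniformity gives a single polynomial-time algorithm for all of $\CSP(\Gamma)$, rather than one algorithm for each finite $\Gamma'\subseteq\Gamma$.

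\textbf{Main obstacle.} The deep step is of course the correctness of the Bulatov/Zhuk algorithms, which I take as given from \cite{DBLP:conf/focs/Bulatov17,DBLP:conf/focs/Zhuk17}. The step I would check carefully here is the uniformity just described: I would verify that both algorithms' running time bounds are stated over all $f$-invariant relations, not over a fixed finite language. If that failed, I would fall back on reducing the arity of constraints via pp-definitions, which is where the explicit representation is essential. That fallback is precisely what breaks down in the $\AutCSP$ setting of this paper.
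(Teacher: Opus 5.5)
The paper gives no proof of this theorem; it is quoted from Bulatov and Zhuk. Where the paper applies it (the proof of Corollary~\ref{Cor:Pol-check} in the appendix), it uses your step~1 device: for each of the finitely many $4$-ary Siggers operations on $D$, pick a relation that operation fails to preserve. So your assembly from the literature matches the paper's treatment.

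The opening claim of step~1 is false, however. There need not be a finite $\Gamma'\subseteq\Gamma$ with $\Pol(\Gamma')=\Pol(\Gamma)$, because clones on a finite set need not be finitely related, and the reason you give does not bear on this. For example, on $\{0,1\}$ let $R_k=\{0,1\}^k\setminus\{(1,\ldots,1)\}$ and $\Gamma=\{R_k : k\ge 2\}$. Let $k$ be the largest arity in a finite $\Gamma'\subseteq\Gamma$, and let $t$ be the $(k+1)$-ary operation that outputs $1$ iff at least $k$ of its arguments are $1$.
\begin{itemize}
\item $t$ preserves every $R_j$ with $j\le k$: the $k+1$ argument tuples contribute at least $k+1$ zeros to $j\le k$ coordinates, so some coordinate receives two zeros.
\item $t$ does not preserve $R_{k+1}$: it sends the $k+1$ tuples of $R_{k+1}$ with a single $0$ in distinct positions to $(1,\ldots,1)$.
\end{itemize}
Hence $t\in\Pol(\Gamma')\setminus\Pol(\Gamma)$. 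What your \emph{Indeed} argument actually proves is only that $\Gamma'$ has no Siggers polymorphism. That is all steps~2 and~3 use, so the hardness direction goes through once step~1 is stated in that weaker form.

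The same example shows your fallback is not available. No finite set of relations preserved by $\Pol(\Gamma)$ pp-defines all the $R_k$: any such set lies in the relational clone generated by $R_2,\ldots,R_K$ for some $K$, and that clone misses $R_{K+1}$. So you cannot trade unbounded-arity constraints for bounded-arity ones inside the tractable language. The tractability direction for infinite $\Gamma$ therefore rests entirely on the uniformity you identify: the cited algorithms must accept arbitrary explicitly listed relations invariant under one fixed operation.

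Two smaller corrections:
\begin{itemize}
\item It is not $f$ itself that preserves the core expanded by constants, but an idempotent Siggers operation derived from it. Take $h$ a retraction onto the core and $u(x)=h(f(x,x,x,x))$, which is an automorphism of the core; then use $u^{-1}\circ h\circ f$ restricted to the core.
\item Your hardness argument yields the ``only if'' half of ``in P iff Siggers'' only under $\mathrm{P}\neq\mathrm{NP}$. The paper's formulation leaves this assumption implicit as well.
\end{itemize}
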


\begin{remark}\label{rem:sizes}
An important point in these two theorems is that the sizes of CSP instances $I$ are based on  explicit representations  of \  $I$. 
\end{remark}
%% \textbf{ Remark 2}.  \emph{ The two theorems above bring an important point. In these theorems, the sizes of the CSP instances $I$ are based on  explicit representations  of $I$. Also, see \textbf{ Remark 1}. 
%, which are given by listing all the tuples of the relations that are present in the constraints of the instances. 
%% }

%% \smallskip

%We want to investigate the effect of polymorphisms on the complexity of $AutCSP(\Ac)$. For 

Now we provide examples of infinite automatic constraint languages that admit the operations from Schaefer's Dichotomy theorem 
above as polymorphisms. These are interesting examples because automatic representations of the CSPs in these languages are exponentially more succinct than their explicit representations. 

\begin{proposition} \label{Prop: poly-eg}
    %% Set $X=\{\mathbf{0}, \mathbf{1}, \land, \lor, Maj, Minor\}$. 
    %% For all $f\in X$ there exists an infinite automatic constraint language $\Gamma_{f}$ such that $f$ is the only polymorphism of $\Gamma_f$ from $X$. 
    Set $X=\{\mathbf{0}, \mathbf{1}, \land, \lor, Maj, Minor\}$. 
    For each $f\in X$ there exists an  automaton $\Ac_f$ over $D=\{0,1\}$ such that $\mathcal L(\Ac_f)$  is infinite and  $f$ is the only polymorphism of $\mathcal L(\Ac_f)$. 
\end{proposition}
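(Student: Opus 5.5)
The plan is to use one construction scheme for all six operations, exploiting the fact that a polymorphism of $\mathcal L(\Ac)$ must preserve \emph{every} relation $R_k=\mathcal L(\Ac)\cap D^k$ separately, while relations of different lengths do not interact. I read the statement as asserting that $f$ is the only member of $X$ that is a polymorphism.

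For each $f\in X$, I would choose finitely many small relations $W_1,\dots,W_m$ with the following properties:
\begin{itemize}
\item every $W_j$ lies in $\Inv(f)$;
\item each $g\in X\setminus\{f\}$ fails to preserve at least one $W_j$;
\item the $W_j$ are placed at pairwise distinct lengths $k_1<\dots<k_m$, padding by duplicating a coordinate when necessary, since this does not affect preservation.
\end{itemize}
I then set
\[
\mathcal L(\Ac_f)=\bigcup_j W_j\ \cup\ c^{\,\ge N},\qquad N>k_m ,
\]
where $c$ is a constant symbol with $\{c^k\}$ closed under $f$. A singleton $\{c^k\}$ is preserved by every idempotent operation, and for $f=\mathbf 0$ or $\mathbf 1$ I take $c$ equal to that constant. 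The resulting language is a finite set plus $c^{\ge N}$, so it is regular and infinite, and $R_k=W_j$ for $k=k_j$ while $R_k=\{c^k\}$ for $k\ge N$. Hence $f\in\Pol(\Gamma_{\Ac_f})$, and every other $g\in X$ is killed by its witness.

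The concrete witnesses I would use are as follows.
\begin{itemize}
\item For $f=\mathbf 0$: $\{00,01,10\}$, $\{000,110,101\}$ and $\{0000,1100,1010,0110\}$. Each contains the all-zero tuple. Minority fails on the first, via $(01,10,00)\mapsto 11$. OR and $\mathbf 1$ also fail on the first. AND fails on the second, via $110\wedge101=100$. Majority fails on the third, giving $1110$.
\item For $f=\mathbf 1$: the complements of the tuples above.
\item For $f=\wedge$:
  \begin{itemize}
  \item the Horn relation $D^3\setminus\{110\}$, on which majority fails via $(100,010,111)\mapsto110$;
  \item $\{1100,0110,0100\}$, which is closed under $\wedge$ and omits both constant tuples, so $\mathbf 0$ and $\mathbf 1$ fail; OR fails since $1100\vee0110=1110$, and minority fails since $(1100,0110,0100)\mapsto 1110$.
  \end{itemize}
\item For $f=\vee$: the duals of the $\wedge$ witnesses, obtained by swapping $0$ and $1$.
\item For $f=Maj$: disequality $\{01,10\}$, which kills $\mathbf 0,\mathbf 1,\wedge,\vee$, together with $\{000,011,111\}$ (the order $x\le y$ with the second coordinate doubled), which kills minority via $(000,011,111)\mapsto 100$.
\item For $f=Minor$: disequality again, together with the even-parity relation $\{000,011,101,110\}$, which is affine and kills majority, giving $111$.
\end{itemize}

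The main work is a finite case check: for each of the six languages, verify that each witness is closed under $f$ and that every $g\neq f$ fails somewhere. This is routine. The only real pitfall is making sure the tail $c^{\ge N}$ and the padding do not accidentally reintroduce closure or break $f$, and the length-separation argument above handles both. Finally, I would draw each $\Ac_f$ as the obvious automaton: a trie accepting the finite part, plus a chain of $N$ states ending in a $c$-loop.
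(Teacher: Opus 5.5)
Your proof is correct and proves the proposition as stated. You read ``only polymorphism'' as ``only member of $X$'', which is also all the paper's proof verifies. The witnesses check out, e.g.\ $Maj(1100,1010,0110)=1110$, $Maj(100,010,111)=110$, $Minor(000,011,111)=100$ and $Maj(011,101,110)=111$. The empty relations $R_k$ at unused arities are preserved vacuously.

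Your route differs from the paper's. The paper writes down one structured language for each $f$:
\begin{itemize}
\item a star of $3$-bit blocks, such as $(001+010+110)^*$ or $(000+011+101)^*$;
\item $1(\{0,1\}^3\setminus\{111\})^+$ and its dual;
\item $\mathrm{ODD}$.
\end{itemize}
It proves closure under $f$ either blockwise or by a parity count. The blockwise argument works because equal-length words are block-aligned, so $f_\omega$ acts block by block on a block set closed under $f$. Each other operation is then refuted by a single small configuration of blocks.

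You instead use a generic gadget: finitely many Schaefer-type witnesses at distinct arities, made infinite by a tail $c^{\ge N}$. The tail's only job is not to break $f$. Your worry that it might ``reintroduce closure'' is moot, since adding relations can only shrink the set of polymorphisms. Your construction is modular, reduces everything to a finite check, and would work verbatim for any finite family of operations.

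What it loses is the feature the paper actually uses in item 2 of Section~\ref{sec:contributions}. In your languages every relation of arity $\ge N$ is the singleton $\{c^k\}$. So $\Gamma_{\Ac_f}$ is a finite constraint language in disguise, and automatic representation gains no succinctness. The paper's languages instead grow exponentially, and they keep nontrivial $f$-closed relations that refute the other operations at infinitely many arities. Replacing your tail by $D^{\ge N}$, which every operation preserves, would give exponential growth, but only through trivial long constraints.
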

\begin{proof}
For each $f\in X$, we provide a desired example $\Ac_f$. 

\noindent
\textbf{1}. 
Let $\Ac_{Maj}$ be such that $\Lc(\Ac_{Maj})=(001+010+110)^*$.  It is straightforward that $Maj$ preserves the set $\{001,010,110\}$. Since every word in $\Lc(\Ac_{Maj})$ is a concatenation of $3$-bit blocks from $\{001,010,110\}$, it immediately follows that  $Maj$ is a polymorphism of the language $\Lc(\Ac_{Maj})$.
Note that no other operation in $X$ preserves $\Lc(\Ac_{Maj})$, e.g., 
$Minor(001,010,110)=101 \notin \Lc(\Ac_{Maj})$,
$001 \land 010 =000 \notin \Lc(\Ac_{Maj})$, and 
$001 \lor 010 = 011 \notin \Lc(\Ac_{Maj})$.

\noindent
\textbf{2}. 
Let $\Ac_{Minor}$ be such that $\Lc(\Ac_{Minor})=\mathrm{ODD}$, the set of binary strings with an odd number of $1$s.
For any $x, y, z \in \Lc(\Ac_{Minor})\cap D^n$, set $w(i) := {Minor}(x(i), y(i), z(i))$. 
Then 
\[
\bigoplus_{i\in[n]}w(i) =
% \bigoplus_{i\in[n]}(x(i)\oplus y(i) \oplus z(i))= 
\bigoplus_{i\in[n]}x(i) \oplus \bigoplus_{i\in[n]}y(i) \oplus \bigoplus_{i\in[n]}z(i)=1,
\]
so $w \in \Lc(\Ac_{Minor})$, Thus $Minor$ is a polymorphism.
%The following shows that $\Lc(\Ac_{Minor})$ is not closed under the other operations:
Observe that 
$Maj (001,010,100) = 000 \notin \Lc(\Ac_{Minor})$,
$01 \land 10 =00 \notin \Lc(\Ac_{Minor})$,
$01 \lor 10 =11 \notin \Lc(\Ac_{Minor})$.

\noindent
\textbf{3}. 
%Let $\Ac_{\land}$ be such that 
Consider $\Lc(\Ac_{\land}) 
=1(000+001+010+011+100+101+110)^+$. 
This language consists of strings starting with $1$, followed by a sequence of 3-bit blocks excluding $111$. It is closed under $\land$, because applying $\land$ to any two blocks in the given set cannot result in the block $111$, which is excluded.
Observe that $Maj(011,101,110)=111$,
$Minor(001,010,100)=111$, and 
$001 \lor 110 = 111$.

%$The following shows that $\Lc(\Ac_{\land})$ is not closed under the other operations: 

\noindent
\textbf{4}. 
Similarly, let $\Ac_{\lor}$ be such that $\Lc(\Ac_{\lor}) 
=0(001+010+011+100+101+110+111)^+$.
Counterexamples:
%\begin{center}{
$Maj(001,010,100)=000$,
$Minor(011,101,110)=000$, and 
$001 \land 110 = 000$.
%}%\end{center}

\noindent
\textbf{5}. 
Let $\Ac_{\mathbf{0}}$ be such that $\Lc(\Ac_{\mathbf{0}})=(000+011+101)^{*}$. The operation 
$\mathbf{0}$ is a polymorphism of $\Lc(\Ac_{\mathbf{0}})$.
Other operations from $X$ fail $\Lc(\Ac_{\mathbf{0}})$. For instance: $011 \land 101 =001 \notin \Lc(\Ac_{\mathbf{0}})$, 
$011 \lor 101 = 111 \notin \Lc(\Ac_{\mathbf{0}})$,
$Maj(000,011,101)=001 \notin \Lc(\Ac_{\mathbf{0}})$, and we can refute $Minor$: 
$Minor(000,011,101)=110 \notin \Lc(\Ac_{\mathbf{0}})$. \ For the operation \textbf{1}, the language 
$\Lc(\Ac_{\mathbf{1}})=(111+100+010)^{*}$ is a desired one. 
\end{proof}

\section{Our contributions and discussions}\label{sec:contributions}

%Below we discuss contributions of this paper .

\textbf{1}.  The key notion introduced in this paper is the  Automatic Constraint Satisfaction Problem ($\AutCSP$) where constraint languages and instances are specified by finite automata (see Definition \ref{Dfn:AutCSP}). Unlike the classical non-uniform CSPs that are defined by finitely many relations, the $\AutCSP(\Ac)$ captures infinite yet finitely describable sets of relations, enabling concise representations of complex constraints. Studying the complexity of the $\AutCSP$s elucidates the interplay between classical CSPs, automata, and logic, sharpening the boundary between tractable and intractable regular constraints. We give two simple but interesting and illustrative examples of this interplay:
\begin{itemize}
\item The $\AutCSP$ and $\AutCSP(\Ac)$ are in NP by Proposition \ref{Prop:$NP$-complete}. 
    \item By Theorem \ref{Thm:1-constraint}, there exists an automaton $\Ac$ such that the 
$\AutCSP(\Ac)$ for instances with just one constraint is 
NP-complete. In contrast, solving the classical CSP instances with singleton constraints
is in polynomial time. %This contrasts the classical CSPs, where CSP instances with one constraint only are in $P$. 
\end{itemize}

\noindent
\textbf{ 2}. The AutCSPs instances are more \emph{succinct} than their classical counterparts.  We explain the importance of succinct representations  through  Schaefer's Dichotomy Theorem. By the theorem, the $\CSP(\Gamma)$ over the Boolean domain is in P if and only if  one of the operations $\mathbf{ 0}$, $\mathbf{ 1}$, $\land$, $\lor$, $Maj$, and $Minor$ is a polymorphism of $\Gamma$. The key is that the polynomial-time  algorithms in Schaefer's Theorem
run on the explicit representations of the $\CSP(\Gamma)$ instances.   Likewise, by  our  Theorem \ref{Thm:boolean-dichotomy}, we call it the automatic dichotomy theorem,   the $\AutCSP({\Ac})$ over Boolean domain is in P if and only if $\Gamma_{\Ac}$ admits one of the operations $\mathbf{ 0}$, $\mathbf{ 1}$, $\land$, $\lor$, $Maj$, and $Minor$ as a polymorphism. The key in our theorem is that our algorithms run on automatic instances rather than on explicitly represented instances. For example,  the languages $\mathcal L(\Ac_f)$ in Proposition \ref{Prop: poly-eg} have exponential growth. The sizes of the $\CSP(\Gamma_{\Ac_f})$ instances are 
exponentially larger than their automatic counterparts. Thus, our polynomial-time algorithms run on exponentially smaller CSP instances over these languages. 

%\smallskip

\noindent
\textbf{ 3}. The proof of the Dichotomy theorem for $\AutCSP$ over the Boolean domain  shows an interplay between polymorphisms, automata, 
and polynomial-time algorithms.  For instance, Theorem \ref{Thm:Booleam-AND-Polymporphism} shows that 
    the $\AutCSP(\Ac)$ is decidable in polynomial time if the automatic constraint language $\Gamma_{\Ac}$ admits $\land$ or $\lor$ as its polymorphism. The proof uses automata in extending partial solutions to the $\AutCSP$ instances. Likewise,  automata-theoretic reasoning is applied in Theorem \ref{thm:2minor_translation} showing that the same holds true if $\Gamma_{\Ac}$ admits the $Minor$ operation as a polymorphism.

%\anote{What about $Maj$. And what about NP-hardness? Should we state a general result saying that the classic CSP is always no harder than the same automatic one?} These results show how the standard CSP results can naturally be transformed into automata-theoretic settings. 

%\smallskip

\noindent
\textbf{ 4}. Any polymorphism $f:D^k \rightarrow D$ of a language $\Gamma_{\Ac}$ can be extended to a
partial operation  $f_{\omega}:D^*{^k}\rightarrow D^*$ as follows. Given $k$ words 
$u_1=d_{1,1}, \ldots, d_{1,t}$, $\ldots$, $u_k=x_{d,1}, \ldots, d_{k, t}$ over $D$ all having the same length $t$,
define the  word $f_{\omega}(u_1, \ldots, u_t)$ of length $t$ as follows:
$f_{\omega}(u_1, \ldots, x_t)=f(d_{1,1}, \ldots, d_{k,1}) \ldots f(d_{1,t}, \ldots, d_{k,t})$. \  By Corollary \ref{Cor:f-omega}, this partial operation $f_{\omega}$ is an automatic operation.  This simple yet important observation connects the algebraic theory of CSPs with automata and allows us to employ methods of the theory of automatic structures, e.g.,
\begin{itemize}
    \item Theorem \ref{Thm:deciding-polymorphism} proves that there is an algorithm that, given a finite set $\Fc$ of operations and an automaton $\Ac$ on $D$, decides if at least one of the operations from $\Fc$ is a polymorphism of the language $\Gamma_{\Ac}$. If $\Fc$ is fixed, then the algorithm runs in polynomial time on $\Ac$.
    Moreover, if there exists a fixed bound on the arity of the operations from $\Fc$, then the algorithm is polynomial on the sizes  of $\Fc$ and $\Ac$. 
\end{itemize}
This result, due to Corollary \ref{Cor:deciding-dichotomy}, implies that we can verify in cubic time if the problem $\AutCSP(\Ac)$, where $\Ac$ is an automaton over the Boolean domain, is in P or NP.

\textbf{ 5}. Although we  mostly aim at understanding $\AutCSP$ over the Boolean domain, we also provide results 
for arbitrary finite domains $D$ and use them for the Boolean case.  %We study  $\AutCSP$ for automatic constraint languages of several types. The analogous languages in the classic setting were among the first to be studied. 
Among them are automatic constraint languages of width 1, see \cite{barto2009constraint,16_jl&c_boundedwithhierarchy}, and languages with a near-unanimity polymorphism.  Our results are the following:
%% We address these results in the setting of automatic constrain languages:
\begin{itemize}
    \item Theorem \ref{thm:1-min autcsp} shows that any automatic CSP instance over $D$ can be transformed in polynomial time into an equivalent
    1-minimal automatic CSP instance. 
    \item Theorem \ref{Thm:automatic-withd1} then shows that the $\AutCSP({\Ac})$ is decidable in polynomial time if $\Gamma_{\Ac}$ has width $1$. 
    \item By Theorem \ref{Thm:Maj-Transformation}, if $\Gamma_{\Ac}$  has a majority (or a $k$-ary near-unanimity) polymorphism $g$, then we can transform, in P-time, instances $I\in \AutCSP(\Ac)$ to classical CSP instances over binary (or $k-1$-ary) relations closed under $g$. This shows that $\AutCSP(\Ac)$ is in P, see Corollaries~\ref{cor:majority} and ~\ref{cor:NU}. %in polynomial time. 
\end{itemize}

   % that Boolean-domain $AutCSP$s admitting a semilattice or minority polymorphism are solvable in polynomial time (Subsection \ref{booleansemi} and \ref{booleanmin}), and we further extend the semilattice result to finite domains (Section \ref{section:width_1}) , establishing tractability for the majority polymorphism as well (Section \ref{finitemaj}). These results together yield a dichotomy theorem for Boolean $AutCSP$s (Section \ref{boolean-dichotomy}) , showing that every such problem is either in $\mathbf{P}$ or $\mathbf{NP}$-complete, depending on the presence of a tractable automatic polymorphism.

%% Our results lay the groundwork for studying AutCSPs over infinite languages bridging CSP theory and automata theory by  offering new insights into CSP, polymorphisms, automatic structures, and the decidability limits of CSPs.
%% \anote{I'd remove this paragraph altogether. It doesn't mean anything} (\bakhnote{yes, lets remove it})

\section{Polymorphisms are automatic}\label{Sec:Pol_are_auto}

In this section, we investigate polymorphisms of automatic constraint languages. 
In particular, we design an algorithm that, given a finite family $\Fc$ of  operations and a finite automaton $\Ac$, 
%% an automatic constraint language $\Lc_{\Ac}$,  
decides if  at least one of the operations in $\Fc$ is a polymorphism of $\Gamma_{\Ac}$. We use Section \ref{S:Automatic-Relations} on automatic relations and structures.

Let $f: D^k\rightarrow D$ be an operation. Let us represent the operation $f$  as a $|D|^k \times (k+1)$  matrix such that each row of the matrix is of the form 
$(d_1, \ldots, d_k, f(d_1, \ldots, d_k))$. We can naturally \textbf{extend} $f$ to the $k$-ary partial operation $f_{\omega}$ on the domain $D^*$ of all strings over the domain $D$ as shown in Section~\ref{sec:contributions}-4.
%% follows. Given strings
%% $u_1=d_{1,1} \ldots d_{1,t}$, $u_2=d_{2,1} \ldots d_{2,t}$, $\ldots$, 
%% and  $u_k=d_{k,1}\ldots d_{k, t}$
%% all of these strings are of same length $t$, define
%% $$
%% f_{\omega}(u_1, \ldots, u_k)= (f(d_{1,1}, \ldots, d_{k,1}), \ldots, f(d_{1,t}, \ldots, d_{k,t})).
%% $$
%Thus, $f_{\omega}$ maps a $k$-tuple of strings of length $t$ to a string of length $t$.  

Note that $f_{\omega}$ is infinite while the original operation $f$ is finite. Now we prove the following automata-theoretic lemma: 
 
\begin{lemma}
For any $k$-ary operation $f$ on finite domain $D$, the partial operation $f_{\omega}$ 
on $D^*$ is an automatic (partial) operation.  
\end{lemma}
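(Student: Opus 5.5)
The plan is to build a one-state deterministic automaton that recognizes $\oplus(Graph(f_{\omega}))$ directly. By Definition~\ref{Dfn:Automatic-Relation}, $f_{\omega}$ is automatic exactly when the $(k+1)$-ary relation
$$Graph(f_{\omega})=\{(u_1,\ldots,u_k,v) : |u_1|=\cdots=|u_k|=|v|,\ v=f_{\omega}(u_1,\ldots,u_k)\}$$
has a regular convolution language. The convolution alphabet is $D'=(D\cup\{\diamond\})^{k+1}$. All strings in the graph have the same length, so a convolution $u_1\oplus\cdots\oplus u_k\oplus v$ of a tuple in the graph never contains the padding symbol $\diamond$. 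Its $i$-th letter is the column $(d_{1,i},\ldots,d_{k,i},f(d_{1,i},\ldots,d_{k,i}))$, which is exactly a row of the $|D|^k\times(k+1)$ matrix representing $f$.

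Let $M\subseteq D^{k+1}\subseteq D'$ be the set of rows of that matrix, that is, $M=\{(d_1,\ldots,d_k,f(d_1,\ldots,d_k)) : d_i\in D\}$. I claim that $\oplus(Graph(f_{\omega}))=M^*$.

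For the inclusion from left to right, take a tuple in the graph. Its components all have equal length, so there is no padding, and each letter of the convolution is a row of $f$, hence lies in $M$.

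For the inclusion from right to left, take a word $w=c_1\cdots c_t\in M^*$ with $c_i=(d_{1,i},\ldots,d_{k,i},e_i)$. Set $u_j=d_{j,1}\cdots d_{j,t}$ and $v=e_1\cdots e_t$. These strings all have length $t$, and $e_i=f(d_{1,i},\ldots,d_{k,i})$ for every $i$. Hence $v=f_{\omega}(u_1,\ldots,u_k)$ and $w=u_1\oplus\cdots\oplus u_k\oplus v$.

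The case $t=0$, where all the strings are empty, is consistent on both sides. Whether the empty word is included is only a convention about $f_{\omega}$ on empty input.

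The automaton recognizing $M^*$ is then $\Ac_f=(\{q\},\{q\},\Delta,\{q\})$ with $\Delta=\{(q,c,q): c\in M\}$. It is deterministic after adding a rejecting sink state for the letters in $D'\setminus M$. Its size is $O(|D|^k)$, linear in the size of the matrix of $f$, which is useful later for the polynomial-time claims.

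There is no real obstacle here. The only point that needs care is the equal-length requirement in the definition of $f_{\omega}$: one has to check that the $\diamond$-padded letters are correctly excluded, which is why $f_{\omega}$ is a \emph{partial} operation. If the intended domain of $f_{\omega}$ also needs to be stated as automatic, that domain is the length-equality relation on $k$ strings, and it is recognized by the same construction with $M$ replaced by $D^k$.
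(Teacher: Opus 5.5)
Your proof is correct and is essentially the paper's construction. The paper also takes a state $s$ that is both initial and accepting, with self-loops labeled by the rows of the matrix of $f$, plus a rejecting sink $s'$ entered on letters $(d_1,\ldots,d_{k+1})$ with $f(d_1,\ldots,d_k)\neq d_{k+1}$. That is exactly your automaton for $M^*$ made deterministic; your explicit check that $\oplus(Graph(f_{\omega}))=M^*$ and your treatment of the $\diamond$-padded letters just supply details the paper leaves implicit.
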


\begin{proof}
To prove that $f_{\omega}$ is automatic, we need to show that its graph is an automatic relation. Recall that the graph of $f_{\omega}$ is defined as the follows:
$(u_1, \ldots, u_k, u_{k+1}) \in Graph(f_{\omega})$ if and only if we have  $|u_1|=\ldots=|u_k|$ and $u_{k+1}=f_{\omega}(u_1, \ldots, u_k)$.

A deterministic finite automaton recognizing the graph of $f_{\omega}$ 
has two states: $s$ and $s'$. The alphabet of the automaton is $D^{k+1}$. 
The state $s$ is the start state and $s$ is the accepting state. The transitions from $s$ to $s$ are labeled by the rows of the $|D|^k \times (k+1)$  matrix representing the operation $f$. The labels of transitions from $s$ to $s'$ are  $(d_1,\ldots, d_k, d_{k+1})$ such that $f(d_1, \ldots, d_k)\neq d_{k+1}$. 
\end{proof}
%For instance, the automaton recognizing the graph of $f_{\omega}$, when $k=3$, is given in Figure \ref{fig:placeholder}. The lemma is proved.   \anote{I'm still not sure I understand the construction of the automata. Section~2.1 is kind of unrelated to that}  (\bakhnote{See the paragraph before the lemma and the figure. later, we can remove the figure if needed}) \anote{What is the alphabet of the automaton? It may be just your convention, but what if it is given $(a,b,c,d)$ and $d\ne f(a,b,c)$?}\begin{figure}[h] \centering  \includegraphics[width=0.5\linewidth]{pics/A_map.png}   \caption{The automaton for $f_{\omega}$, where $f: D^3 \to D$.}  \label{fig:placeholder} \end{figure}

Since polymorphisms are operations on the finite domain  $D$,
we get  the following corollary: 
\begin{corollary}[\textbf{Polymorphism are automatic}]\label{Cor:f-omega}
For all polymorphisms $f$ of \    
$\Gamma_{\Ac}$, the partial operation  $f_{\omega}$ is automatic. In particular, the partial operations \ $\mathbf{0}_{\omega}$, $\mathbf{1}_{\omega}$, $\land_{\omega}$, $\lor_{\omega}$, $Maj_{\omega}$, and $Minor_{\omega}$ over the Boolean domain are all automatic.
\qed 
\end{corollary}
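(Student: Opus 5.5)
The plan is to read the corollary as a direct specialization of the preceding lemma. The lemma says that for \emph{any} $k$-ary operation $f$ on the finite domain $D$, the partial operation $f_{\omega}$ on $D^*$ is automatic. By Definition~\ref{dfn:polymorphisms}, a polymorphism of $\Gamma_{\Ac}$ is in particular an operation $f:D^k\to D$ for some $k$. So the first claim follows at once by instantiating the lemma with this $f$. No property of being a polymorphism is used; the hypothesis only fixes that $f$ is a finitary operation on the finite set $D$.

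For the second claim I would note that each of $\mathbf{0},\mathbf{1},\land,\lor,Maj,Minor$ is an operation on $\{0,1\}$, of arity $1,1,2,2,3,3$ respectively. Applying the lemma to each gives that $\mathbf{0}_{\omega},\mathbf{1}_{\omega},\land_{\omega},\lor_{\omega},Maj_{\omega},Minor_{\omega}$ are automatic.

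To be concrete, I would record the automaton from the proof of the lemma for one case, say $Maj$. It has alphabet $\{0,1\}^4$, an accepting start state $s$ with self-loops labelled by the eight rows $(d_1,d_2,d_3,Maj(d_1,d_2,d_3))$, and every other letter of $\{0,1\}^4$ leads to the rejecting sink $s'$.

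The only point needing care is the convolution encoding. A tuple of strings of unequal lengths produces padding symbols involving $\diamond$. These lie outside $D^{k+1}$, so they have no transition from $s$; equivalently, they are sent to $s'$. Hence such tuples are rejected, which matches the requirement in $Graph(f_{\omega})$ that $|u_1|=\dots=|u_k|=|u_{k+1}|$. With that remark the corollary is immediate, and I do not expect any real obstacle.
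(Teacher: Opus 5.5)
Your proposal is correct and matches the paper's argument. The paper also obtains the corollary by instantiating the preceding lemma (that $f_{\omega}$ is automatic for every operation $f:D^k\to D$), using only that a polymorphism is such an operation; your remarks on the explicit two-state automaton for $Maj_{\omega}$ and on rejecting convolution letters containing $\diamond$ agree with the lemma's construction and spell out a point the paper leaves implicit.
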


%\anote{I'd suggest to remove this corollary, as being automatic has nothing to do with being a polymorphism}(\bakhnote{I re-wrote things. I think it should be clearer now. We identified $f$ with $\tilde f$ (which is now $f_{\omega}$ which confused you most likely. }) \anote{It's not quite that. The theorem says: all operations are automatic. Corollary 1: Operation A is automatic; Corollary 2: Operation B is automatic. True, but this is what the universal quantifier means}

%\begin{corollary} The partial maps \ $\mathbf{0}_{\omega}$, $\mathbf{1}_{\omega}$, $\land_{\omega}$, $\lor_{\omega}$, $Maj_{\omega}$, and $Minor_{\omega}$ over the Boolean domain are all automatic.\qed\end{corollary} \anote{I'd remove this one as well as too trivial,} (\bakhnote{hope it makes sense now.})

We can verify whether or not an operation on $D$ is a polymorphism of a given automatic constraint language $\Gamma_{\Ac}$. For instance, let us assume that $f:D^3\rightarrow D$ is a ternary operation. Then $f$ is a polymorphism of $\Gamma_{\Ac}$ 
if and only if the following statement is true:
 \begin{center}{
 $\forall x \forall y \forall z ((|x|=|y|=|z| \ \& \ x\in \mathcal L(\Ac) \ \& 
 \ y \in \mathcal L(\Ac) \ \& $ \\ $\ \& \ z\in \mathcal L(\Ac)) \rightarrow  f_{\omega}(x,y,z)\in \mathcal  L(\Ac)))$.}
 \end{center}
Note that this is a sentence in the first order predicate logic over the structure with domain $D^*$, and the relations on the domain $D^*$:  $\{(x,y) \mid |x|=|y|\}$, $\{x\mid x\in \mathcal L(\Ac)\}$, $\{(x,y,z,u) \mid f_{\omega}(x,y,z)=u\}$. By the Decidability Theorem \ref{DecidabilityThm}, the truth value of the sentence above can  be verified. Yet, we have a stronger statement:

\begin{theorem}\label{Thm:deciding-polymorphism}
The following are true:
\begin{enumerate}
\item Given a $k$-ary operation $f : D^k \ar D$ and a finite automaton $\Ac$, it takes $O(|\Ac|^{k+1})$ time to decide whether the constraint language $\Gamma_{\Ac}$ has $f$ as its polymorphism
\item Fix an integer $b\geq 0$. There exists a polynomial-time algorithm that, given a finite set $\Fc$ of operations on $D$ whose arities are bounded by $b$ and an automaton $\Ac$, decides if any of the operations is a polymorphism of the language $\Gamma_{\Ac}$. The algorithm runs in time $O(|\Fc||\Ac|^{b+1})$. 
\end{enumerate}
\end{theorem}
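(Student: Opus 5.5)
Rather than invoking the general Decidability Theorem~\ref{DecidabilityThm}, which gives no useful complexity bound, I will build a product automaton that searches directly for a counterexample to $f$ being a polymorphism. First I put $\Ac=(S,I,\Delta,F)$ into a form where non-acceptance can be checked by a reachability test. Since $\Ac$ may be nondeterministic and determinizing it could blow up exponentially, I read the polymorphism condition as follows. $f$ fails to be a polymorphism of $\Gamma_{\Ac}$ iff there exist words $x_1,\dots,x_k\in\mathcal L(\Ac)$ of the same length such that $f_\omega(x_1,\dots,x_k)\notin\mathcal L(\Ac)$.

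The main construction is the product automaton $\Bc$ with state set $S^{k}\times 2^{S}$. Its first $k$ components track guessed accepting runs of $\Ac$ on $x_1,\dots,x_k$. The last component tracks the set of states reachable by $\Ac$ on $f_\omega(x_1,\dots,x_k)$, that is, the subset construction applied only to the output word. On reading a column $(d_1,\dots,d_k)\in D^k$, each of the first $k$ components moves along a $d_i$-transition of $\Delta$. The subset component moves to the set of $f(d_1,\dots,d_k)$-successors. A counterexample exists iff some state $(q_1,\dots,q_k,P)$ with all $q_i\in F$ and $P\cap F=\emptyset$ is reachable from $I^k\times\{I\}$. By the correspondence between runs and paths in the graph representation, this reduces to a graph reachability problem.

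The main obstacle is the complexity bound. The subset component may be exponential, which does not match the claimed $O(|\Ac|^{k+1})$. Two responses are available. If $\Ac$ is deterministic, as the size convention $\Ac=(S,q_0,\Delta,F)$ used for instance sizes suggests, then the last component is a single state of $\Ac$. The product graph then has at most $|S|^{k+1}$ vertices, and its edges are bounded by products of transitions, $O(|\Delta|^{k+1})$, because for each $k$-tuple of transitions the output transition is determined. So BFS runs in $O(|\Ac|^{k+1})$ time. I will therefore state and use the assumption that $\Ac$ is deterministic, which is the reading under which the stated bound holds. For the nondeterministic case I would only claim the exponential bound, or else assume a deterministic input as the paper's instance convention suggests.

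Part (2) then follows directly. I run the test from part (1) separately for each $f\in\Fc$, each of arity at most $b$. Each test costs $O(|\Ac|^{b+1})$, giving $O(|\Fc||\Ac|^{b+1})$ in total, and the algorithm answers yes iff some test finds no reachable counterexample state. The only check left is that reading $f$ to compute each output letter costs constant time for a fixed domain $D$. It does: $f$ is given as a table of size $|D|^{k}$, which counts toward $|\Fc|$.
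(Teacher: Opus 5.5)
Your proposal is correct and is essentially the paper's proof. The paper builds the same product automaton $\Ac_f$ with state set $S^{k+1}$, start state $(q_0,\ldots,q_0)$, accepting states $F^k\times(S\setminus F)$ and transitions reading $(d_1,\ldots,d_k,f(d_1,\ldots,d_k))$; it reduces the question to emptiness of $\Ac_f$ and gets part (2) by running this for each $f\in\Fc$. Your explicit determinism assumption is what the paper uses tacitly (its proof treats $\Delta$ as a function $S\times D\to S$), and your caveat is warranted: for nondeterministic input no polynomial bound should be expected, since already for $f=\mathbf{0}$ the question encodes universality of unary NFAs, which is coNP-hard.
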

\begin{proof}
We prove the first part. The second part of the theorem follows from the first part. Consider the automaton $\Ac=(S, q_0, \Delta, F)$.
We show that the problem of deciding whether $f$ is a polymorphism of the constraint language $\Gamma_\Ac$ can be reduced to the emptiness problem for finite automata. Indeed, define the following finte automaton $\Ac_f = (S_f, q_f, \Delta_f, F_f)$, where 
\begin{itemize}
\item $S_f = S^{k+1}$,
\item The initial state  $q_f$ is the tuple $(q_0, \ldots, q_0)$, 
\item $F_f=F^k \times (S\setminus F)$, and
\item The transition diagram $\Delta_f$ = 
$$ \bigcup_{
       \substack{
       %\xynote{s_i \in S, d_i\in D} \\
       d_{k+1} = f(d_1, \ldots, d_k) \\
       t_i = \Delta(s_i, d_i), \ i \in [k+1] \\
       }
       } \{((s_1,\ldots, s_{k+1}), (d_1, \ldots, d_{k+1}), (t_1,\ldots,t_{k+1}))\}
$$
\end{itemize}
Note that  the $s_i$'s and $d_i$'s come from the transition function  
$\Delta: S\times D\rightarrow S$ of the automaton $\Ac$.
It is not hard to see that an $k+1$-tuple $(x_1, \ldots, x_k, x_{k+1})$ is accepted by $\Ac_f$ if and only if we have 
  $x_{k+1}\notin \mathcal L(\Ac)$, $x_i \in \mathcal L(\Ac)$ for each $1\le i \le k$, and $x_{k+1} = f(x_1,\ldots, x_k)$. Therefore, any tuple accepted by $\Ac_f$ would  refute that $\mathcal L(\Ac)$ has $f$ as its polymorphism. 
%% The size of $S_f = S^{k+1}$ shows that it takes $O(S^{k+1})$ time to found such tuple. 
Thus, $f$ is a polymorphism of the language$\Gamma_{\Ac}$ if and only if $\Ac_f$ accepts no string.
  The construction of automaton $\Ac_f$ takes $O(|\Ac|^{k+1})$ time.
Since the emptiness problem for finite automata can be decided in linear time relative to the size of the automata, the theorem is thus proved.
\end{proof}

%One theme in this paper is to identify operations $f$ such that the problem $\AutCSP(\Ac)$ is solvable in polynomial time, where the language $\Lc_{\Ac}$ admits  $f$ as a polymorphism. 

Theorem \ref{Thm:deciding-polymorphism} has the following corollaries. The first concerns automatic constraint languages over the Boolean domain.
 
\begin{corollary}\label{Cor-P-decision}
  There is a polynomial-time algorithm that, given a finite automaton $\Ac$ over the domain $\{0,1\}$, decides if
  any of the operations $\mathbf{0}$, $\mathbf{1}$, $\land$, $\lor$, $Maj$, and $Minor$ is a polymorphism of the automatic constraint language $\Gamma_{\Ac}$. The algorithm runs in time $O(|\Ac |^4)$. \qed 
 \end{corollary}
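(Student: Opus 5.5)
The plan is to apply Theorem~\ref{Thm:deciding-polymorphism}(2) with $D=\{0,1\}$, $\Fc=\{\mathbf{0},\mathbf{1},\land,\lor,Maj,Minor\}$, and arity bound $b=3$. This set is fixed, with $|\Fc|=6$. The maximal arity, $3$, is attained by $Maj$ and $Minor$, so the claimed bound is $O(|\Fc||\Ac|^{b+1})=O(6\,|\Ac|^{4})=O(|\Ac|^4)$.

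\textbf{Step 1: one product automaton per operation.} For each $f\in\Fc$ of arity $k$ I build the automaton $\Ac_f$ from the proof of Theorem~\ref{Thm:deciding-polymorphism}. Its state set is $S^{k+1}$, and it reads tuples $(d_1,\dots,d_k,f(d_1,\dots,d_k))$. It accepts exactly when the first $k$ components end in $F$ and the last component ends outside $F$. Then $f$ is a polymorphism of $\Gamma_{\Ac}$ iff $\mathcal L(\Ac_f)=\emptyset$. The emptiness test is a reachability search that is linear in $|\Ac_f|$. The sizes are:
\begin{itemize}
\item $k=1$ for $\mathbf{0}$ and $\mathbf{1}$, giving $O(|\Ac|^2)$;
\item $k=2$ for $\land$ and $\lor$, giving $O(|\Ac|^3)$;
\item $k=3$ for $Maj$ and $Minor$, giving $O(|\Ac|^4)$.
\end{itemize}
Summing the six checks gives $O(|\Ac|^4)$. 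The algorithm answers ``yes'' iff at least one of the six automata has empty language.

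\textbf{Main obstacle: determinism.} The construction in Theorem~\ref{Thm:deciding-polymorphism} uses $\Delta$ as a function, so it assumes $\Ac$ is deterministic. If $\Ac$ may be nondeterministic, the condition ``the last component ends outside $F$'' is not captured by a single run, and determinizing could blow up the size exponentially. I would handle this in one of two ways:
\begin{itemize}
\item State explicitly that automata are deterministic, as the proof of Theorem~\ref{Thm:deciding-polymorphism} already does with its single initial state $q_0$ and transition function. I would take this route and note it as a convention.
\item Otherwise, observe that the corollary inherits exactly the hypotheses of Theorem~\ref{Thm:deciding-polymorphism}, so no new argument is needed beyond the instantiation above.
\end{itemize}

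The remaining point is the size of $\Delta_f$ for $k=3$. Each tuple of states and each choice of $(d_1,d_2,d_3)\in\{0,1\}^3$ yields one transition, and the domain size is constant. Hence $|\Delta_f|=O(|S|^4)$, which keeps the bound at $O(|\Ac|^4)$.
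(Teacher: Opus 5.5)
Your proposal is correct and follows the paper's route exactly: the paper gives no separate proof, treating the corollary as the instance of Theorem~\ref{Thm:deciding-polymorphism}(2) with $|\Fc|=6$ and arity bound $b=3$, which yields $O(6|\Ac|^4)=O(|\Ac|^4)$. Your remark on determinism is apt. The construction of $\Ac_f$ uses a single initial state and treats $\Delta$ as a total function, so the stated bound implicitly assumes deterministic automata, which is the convention you propose to make explicit.
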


 The second is on Sigger's operations. The proof is in Appendix~\ref{S:Pol-Check}.
 %In this corollary an important assumption is that the domain $D$ is fixed.
 
 \begin{corollary} \label{Cor:Pol-check}
Let $\Ac$ be a finite automaton over $D$. Then:
\begin{enumerate}
    \item It is decidable in polynomial time whether or not $\Gamma_{\Ac}$ admits a Sigger's operation as polymorphism.
    \item %Moreover, 
    If no Sigger's operation is a polymorphism of $\Gamma_{\Ac}$, then we can find in time $O(|\Ac|^5)$ a finite sub-language $\Gamma \subseteq \Gamma_{\Ac}$ such that the $CSP(\Gamma)$, and hence $AutCSP(\Ac)$, are NP-complete. 
    %% Hence, if none of the Sigger's function is a polymorphism of $\Lc_{\Ac}$, then we can find a sub-language $\Lc \subseteq \Lc_{\Ac}$ witnessing 
    %% NP-completeness of 
    % $\CSP(\Lc_{\Ac})$.
    %% $AutCSP(\Ac)$.
    \qed 
\end{enumerate}  
 \end{corollary}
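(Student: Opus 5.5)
Part (1) reduces to Theorem~\ref{Thm:deciding-polymorphism}. Since $D$ is fixed, the set of Siggers operations on $D$ is finite: there are at most $|D|^{|D|^4}$ operations $f:D^4\to D$, and only those with $f(r,a,r,e)=f(a,r,e,a)$ for all $a,r,e\in D$ qualify. Let $\Fc_S$ be this set. It depends only on $D$, so it is a constant-size family of operations of arity $4$. By part (2) of Theorem~\ref{Thm:deciding-polymorphism} with $b=4$, we can decide whether some $f\in\Fc_S$ is a polymorphism of $\Gamma_{\Ac}$ in time $O(|\Fc_S|\,|\Ac|^{5})=O(|\Ac|^5)$. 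Concretely, for each $f$ we build the product automaton $\Ac_f$ with state set $S^{5}$ and test it for emptiness.

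For part (2), assume every $\Ac_f$ with $f\in\Fc_S$ is nonempty. For each such $f$, BFS on $\Ac_f$ gives a shortest accepted word. Write it as $(x_1^f,\dots,x_4^f,x_5^f)$, where $x_1^f,\dots,x_4^f\in\mathcal L(\Ac)$ have a common length $n_f\le|S|^{5}$ and $f_\omega(x_1^f,\dots,x_4^f)=x_5^f\notin\mathcal L(\Ac)$. Thus $R_{n_f}\in\Gamma_{\Ac}$ is not preserved by $f$. Let $\Gamma=\{R_{n_f}: f\in\Fc_S\}$. This is a finite sub-language of $\Gamma_{\Ac}$ consisting of at most $|\Fc_S|$ relations, each of arity polynomial in $|\Ac|$. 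By construction, no Siggers operation preserves $\Gamma$, so $\Gamma$ has no Siggers polymorphism. By the Bulatov--Zhuk theorem, $\CSP(\Gamma)$ is NP-complete. Since $\Gamma\subseteq\Gamma_{\Ac}$, $\CSP(\Gamma)$ reduces to $\AutCSP(\Ac)$ by the translation given earlier, where each constraint $(\bs,R_{n_f})$ maps to $\ccl{\bs}$. Together with Proposition~\ref{Prop:$NP$-complete}, this gives NP-completeness of $\AutCSP(\Ac)$.

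The main obstacle is the NP-hardness side of Bulatov--Zhuk for the finite language $\Gamma$. That direction needs $\Gamma$ finite: a failure of each individual Siggers operation on some member of $\Gamma_{\Ac}$ must be collected into a single finite $\Gamma$ on which all of them fail simultaneously. The witness construction above achieves exactly this, since finiteness of $\Fc_S$ bounds the number of relations. We also need the explicit representation of $\Gamma$ (Remark~\ref{rem:sizes}) to be irrelevant. It is, because $\Gamma$ is a fixed finite language once $\Ac$ is fixed, and its instances translate to $\AutCSP(\Ac)$ instances of no greater size. The time bound $O(|\Ac|^5)$ counts only the construction and BFS on the $\Ac_f$; the constant hidden there is $|\Fc_S|$.
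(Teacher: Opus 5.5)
Your proposal is correct and follows essentially the same route as the paper. You enumerate the finitely many Siggers operations on the fixed domain and test each one with the product-automaton emptiness check of Theorem~\ref{Thm:deciding-polymorphism}; if all of them fail, you collect one witnessing relation $R_{n_f}$ per operation into a finite $\Gamma\subseteq\Gamma_{\Ac}$ and apply the Bulatov--Zhuk theorem, and you are more explicit than the paper in instantiating $b=4$ for the $O(|\Ac|^5)$ bound, extracting $n_f$ by BFS on $\Ac_f$, and spelling out the reduction from $\CSP(\Gamma)$ to $\AutCSP(\Ac)$.
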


 Theorem~\ref{Thm:deciding-polymorphism} and its corollaries motivate  us to give the following definition that we will  use in the rest of  this paper.

\begin{definition}[\textbf{ The class $\Kc_f$}]
Let $f : D^k \to D$ be an operation.  Define the
following class of automata:
\[
\Kc_f \;=\;
\{\Ac \mid \text{The language $\Gamma_{\Ac}$ has $f$ as its polymorphism} \}.
\]
\end{definition}

\section{Automatic CSPs over Boolean domain}\label{S:BooleanDomain}

In this section, we investigate two cases of 
the $\AutCSP$ over the Boolean domain $D=\{0,1\}$.    
 The first case concerns the $\AutCSP$ for instances $I=(V, D, \Cc, \Ac)$, 
 where $\Ac \in \Kc_{\land }$ or $\Ac \in \Kc_{\lor}$.
The second case focuses on  the $\AutCSP$ for instances $I=(V,D,\Cc, \Ac)$,  
 where $\Ac \in \Kc_{Minor }$.  By Theorem \ref{Thm:deciding-polymorphism}, given a finite automaton $\Ac$, we can decide in polynomial time if $\Ac \in \Kc_{f}$ for each  $f \in \{\land, \lor, Minor\}$.   

\subsection{The case of $\land$ and $\vee$ polymorphisms} \label{booleansemi}

We investigate $\AutCSP$ instances $I=(V, D, \Cc, \Ac)$ where $D=\{0,1\}$ and $\Ac \in \Kc_{\land}$. The case of $\vee$ is analogous.  We show how the automaton $\Ac$ can be used to derive a polynomial-time decision algorithm for the $\AutCSP$ instance $I$. 
We introduce the core combinatorial structure and  the  technical  ideas in a clean way.
The next section extends these ideas to the class of automatic constraint languages of width 1. 
Since tuples are functions, we will use a partial function to describe a tuple with some "pattern": 

\begin{definition}
    A partial function $\tau:[n]\rightharpoonup D$ is  a \textbf{ pattern}. 
    A function $\psi:[n]\rightarrow D$ \textbf{extends} $\tau$ with respect to an $n$-ary relation $R_n $ on $D$ if $\psi(i)=\tau(i)$ for all $i\in dom(\tau)$ and $(\psi (1),\dots, \psi(n))\in R_n$.
    A pattern $\tau$ is \textbf{extendable} if it has at least one extension.
\end{definition}\label{dfn:pattern}

\begin{lemma}\label{lem:extend}
Let $\Ac=(S,q_0,\Delta,F)$ be a finite automaton over the alphabet $D$, $n\in\mathbb N$, and $\tau:[n]\rightharpoonup D$ be a pattern. Then the extendability of $\tau$ can be checked in time $O(n\,|\Ac|)$.
\end{lemma}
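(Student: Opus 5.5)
The check reduces to a layered reachability computation over the automaton: $\tau$ is extendable with respect to $R_n=\mathcal L(\Ac)\cap D^n$ exactly when some word $w=d_1\cdots d_n$ of length $n$ is accepted by $\Ac$ and satisfies $d_i=\tau(i)$ for all $i\in dom(\tau)$. So the task is to detect an accepting run of length $n$ whose labels are pinned at the positions in $dom(\tau)$ and free elsewhere. Since $\Ac$ may be nondeterministic ($I$ and $\Delta$ as in the general definition), I would work with sets of states rather than single runs.

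Define $T_0=I$ (or $\{q_0\}$ in the deterministic notation of the statement). For $i=1,\dots,n$, let $T_i$ be the set of states $t$ such that some $(s,d,t)\in\Delta$ has $s\in T_{i-1}$, where $d=\tau(i)$ if $i\in dom(\tau)$ and $d\in D$ is arbitrary otherwise. The algorithm answers ``extendable'' iff $T_n\cap F\neq\emptyset$.

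Correctness follows by a straightforward induction on $i$. The invariant is that $s\in T_i$ iff there is a word $d_1\cdots d_i$ compatible with $\tau$ on $[i]$ together with a run of $\Ac$ on it starting in an initial state and ending in $s$. The inductive step is just the definition of a run. At $i=n$ the condition $T_n\cap F\neq\emptyset$ is then exactly the existence of an accepted word of length $n$ compatible with $\tau$, i.e.\ an extension $\psi$. An explicit extension can also be recovered, if wanted, by storing a predecessor pointer for each state at each layer and backtracking from a state in $T_n\cap F$.

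For the running time, each layer is computed by one scan of $\Delta$, filtering transitions whose source is in $T_{i-1}$ and, if needed, whose label equals $\tau(i)$. Keeping $T_{i-1}$ as a Boolean array over $S$ makes membership tests $O(1)$, so the scan costs $O(|S|+|\Delta|)=O(|\Ac|)$ including resetting the array. There are $n$ layers, which gives $O(n|\Ac|)$ overall; the final intersection with $F$ adds $O(|S|)$. I assume $\tau$ is given so that $i\in dom(\tau)$ and $\tau(i)$ can be read in constant time, e.g.\ as an array over $[n]$ with a blank symbol. There is no real obstacle here. The only point requiring care is handling nondeterminism by propagating state sets, which avoids guessing runs, and not letting the layer update cost more than one pass over $\Delta$.
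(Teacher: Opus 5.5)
Your proposal is correct and matches the paper's proof, which is a BFS on $\Ac$ to depth $n$: at position $i$ it follows the transition labelled $\tau(i)$ when $i\in\mathrm{dom}(\tau)$ and any transition otherwise, and this is exactly your layered computation of the sets $T_i$. Your version is somewhat more careful than the paper's: propagating sets of states also covers a nondeterministic $\Ac$, and you make the $O(|S|+|\Delta|)$ cost per layer explicit.
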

\begin{proof}
The extendability can be tested via a BFS on $\Ac$ to depth $n$:  at each position $i$, follow $\Delta(s_{i-1},\tau(i))$ if $i\in\mathrm{dom}(\tau)$; otherwise, any outgoing transition may be used.  %Both the test and construction run in $O(n\,|\Ac|)$ time.
\end{proof}

Patterns represent partial solutions to a given instance. For a given pattern our task is to extend the pattern to a full solution. % of the instance. 

%The main theorem of this section is the following.

\begin{theorem}\label{Thm:Booleam-AND-Polymporphism}
There exists a polynomial-time algorithm that, given
an $\AutCSP$ instance $I =(V,D,\Cc,\Ac)$ where $\Ac\in \Kc_{\wedge}$, decides if the instance $I$ has a solution. The algorithm runs in time $O(|V|\,|\Cc|^2\,|\Ac|)$ where $|\Cc| = \sum_{\ccl{\bs, R}\in \Cc} |\bs|$.
\end{theorem}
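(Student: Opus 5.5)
We decide satisfiability with the classical propagation algorithm for Horn constraints (relations closed under $\land$). The automaton only serves as an oracle for extendability checks, via Lemma~\ref{lem:extend}.

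\textbf{Key structural fact.} Fix a constraint $\ccl{\bs}$ with $|\bs|=n$. Since $\Ac\in\Kc_\wedge$, the relation $R_n$ is closed under coordinatewise $\land$. Hence, for any pattern $\tau$ that fixes only coordinates to $1$, the set of extensions of $\tau$ in $R_n$ is also closed under $\land$. If this set is nonempty, it has a unique least element: the $\land$ of all its members. Consequently, a coordinate $j$ is forced to be $1$ in every extension if and only if the pattern $\tau\cup\{j\mapsto 0\}$ is not extendable. Here we are free to assign $0$ to unconstrained positions in the minimal tuple.

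We must also handle the same variable occurring at several positions of $\bs$. I would treat the pattern on positions as induced by the current partial assignment of variables. Equalities between positions holding the same variable need care, but they are preserved by the minimal-extension argument. The $\land$ of extensions respecting equal values on repeated positions still respects them, because a position forced to $1$ is $1$ in all of them. So the minimal extension restricted to variables is well defined.

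\textbf{Algorithm.} Maintain a set $O\subseteq V$ of variables forced to $1$, initially empty. Repeat the following until no change occurs. For each constraint $\ccl{\bs}$ and each variable $x$ occurring in $\bs$ with $x\notin O$, build the pattern $\tau$ on $[|\bs|]$ that places $1$ at every position holding a variable of $O$. Test with the BFS of Lemma~\ref{lem:extend} whether $\tau$ can be extended by a tuple in which all positions holding $x$ are $0$. To make the BFS respect repeated variables, set all occurrences of $x$ to $0$ and leave the others free. If no such tuple exists, add $x$ to $O$. If at some point $\tau$ itself is not extendable, reject.

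When the loop stabilizes, output the assignment $\vf$ with $\vf(x)=1$ iff $x\in O$. Then check each constraint directly on $\Ac$ in $O(|\bs|\,|\Ac|)$ time. Accept iff all constraints hold.

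\textbf{Correctness.} Soundness of propagation is immediate: every solution assigns $1$ to all of $O$, by induction on the order of insertion. Completeness is the main obstacle. We must show that at the fixpoint, the assignment "$1$ on $O$, $0$ elsewhere" satisfies every constraint, even though each test forced only one variable to $0$ at a time.

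Here closure under $\land$ does the work. At the fixpoint, for each constraint and each variable $x\notin O$ there is an extension $\psi_x$ of $\tau$ with $x$ at $0$. The $\land$ of all the $\psi_x$ lies in $R_{|\bs|}$. It is $1$ on positions holding variables of $O$ and $0$ on all other positions. So it equals $\vf(\bs)$, and $\vf$ satisfies the constraint. If no variable outside $O$ occurs in $\bs$, then $\vf(\bs)=\tau$ is itself a tuple, and it is in $R$ because $\tau$ was tested to be extendable.

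\textbf{Running time.} $O$ grows at most $|V|$ times. Each round performs at most one test per occurrence, that is, $O(|\Cc|)$ tests, each costing $O(|\bs|\,|\Ac|)\le O(|\Cc|\,|\Ac|)$. This gives $O(|V|\,|\Cc|^2\,|\Ac|)$ in total, as claimed.
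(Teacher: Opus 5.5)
Your proof is correct, but it takes a different route from the paper's proof of this theorem. The paper works with explicit least tuples:
\begin{itemize}
\item Lemma~\ref{lemma:meet_downward} computes the $\le$-least extension of a pattern by iterated meets;
\item Lemma~\ref{lemma:single_constraint} handles repeated variables inside one constraint by fixing to $1$ any variable on which that least tuple disagrees;
\item the global loop compares the per-constraint least consistent tuples, fixes the first conflicting variable to $1$, and finally outputs the union of these tuples.
\end{itemize}
You never construct a least tuple. You only ask, occurrence by occurrence, whether $x\notin O$ can be $0$ once $O$ is set to $1$, with all other repetitions relaxed. The meet appears only in the analysis, as $\bigwedge_x\psi_x$. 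Since that meet is exactly $\vf(\bs)$, consistency on repeated positions comes for free. The relaxation also costs nothing for soundness: if no tuple at all has $1$ on $O$ and $0$ on the positions of $x$, then no solution has $x=0$.

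In effect your argument is the Boolean, one-sided specialization (tracking only values forced to $1$) of the paper's own width-$1$ argument in Section~\ref{section:width_1}, namely $1$-minimality followed by $x\mapsto\bigwedge P_x$ (Theorem~\ref{Tm:}). The paper's route gives an explicit minimal satisfying tuple for each constraint and a reusable single-constraint procedure. Yours gives one flat fixpoint loop whose cost is visibly at most $|V|+1$ rounds of at most $|\Cc|$ tests, each costing $O(|\bs|\,|\Ac|)$, i.e.\ the claimed $O(|V|\,|\Cc|^2\,|\Ac|)$. The paper instead nests the repair loop of Lemma~\ref{lemma:single_constraint} inside its global loop, so it needs more care to reach that bound.

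A minor point: your separate paragraph on repeated variables is unnecessary, and its stated reason is off. Meets preserve equalities between positions simply because $\land$ acts coordinatewise. Your final argument does not rely on that paragraph.
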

Our proof %of Theorem~\ref{Thm:Booleam-AND-Polymporphism} 
proceeds in three steps: 
First, we compute the minimal elements of $\land$-closed automatic relations (Lemma~\ref{lemma:meet_downward}). Second, we solve single-constraint instances (Lemma~\ref{lemma:single_constraint}). Finally, we extend the approach to %handle 
instances with multiple constraints.

Let $\Ac \in \Kc_{\land}$ be a finite automaton over  $D=\{0,1\}$.
Recall that $R_n=\mathcal L(\Ac)\cap D^n$. 
Since $R_n$ is $\land$-closed, we introduce a partial order on $R_n$ by declaring that $r_1\le r_2$ if and only if $r_1\land r_2=r_1$.

In particular, $R_n$ has a unique minimal element $\bigwedge_{r\in R_n}r$, i.e., the tuple in $R_n$ with the maximum number of zeros.
Explicitly computing $\bigwedge_{r\in R_n}r$ may take exponential time. To address this, we introduce the concept of minimal extension, which allows us to find this minimal element efficiently.

Formally, given an extendable pattern $\tau:[n] \rightharpoonup D$, a tuple $r \in R_n$ is called a \textbf{minimal extension} of $\tau$ if $r$ extends $\tau$ and, for every other extension $r'\in R_n$, we have $r\leq r'$. 

%\anote{Should we keep the same alphabet for patterns? Say, Greek for all of them? Or do you want to distinguish patterns and extensions?}

\begin{lemma}\label{lemma:meet_downward}
There exists an algorithm that, given a finite automaton $\Ac \in \Kc_{\land}$  and a pattern $\tau : [n] \partialar D$, decides whether $\tau$ is extendable and computes the minimal extension if such an extension exists. The algorithm runs in time $O(n^2|\Ac|)$.  
\end{lemma}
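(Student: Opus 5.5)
First test extendability of $\tau$ with Lemma~\ref{lem:extend}, in time $O(n|\Ac|)$. If $\tau$ is not extendable, stop. Otherwise build the minimal extension greedily, one position at a time, with repeated calls to Lemma~\ref{lem:extend}. Keep a current pattern $\sigma$, starting with $\sigma=\tau$. For $i=1,\dots,n$ with $i\notin dom(\tau)$, let $\sigma'=\sigma\cup\{i\mapsto 0\}$ and test whether $\sigma'$ is extendable. If it is, set $\sigma:=\sigma'$; if not, set $\sigma:=\sigma\cup\{i\mapsto 1\}$. After the loop $\sigma$ is total, and we output it as $r$. There are at most $n$ calls to an $O(n|\Ac|)$ test, so the running time is $O(n^2|\Ac|)$.

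Invariant: at every stage $\sigma$ is extendable. Initially this holds by the first test. At step $i$, if $\sigma\cup\{i\mapsto 0\}$ is not extendable, then every extension of $\sigma$ has a $1$ at position $i$. Since $\sigma$ has some extension, $\sigma\cup\{i\mapsto 1\}$ is extendable. Hence the final total $\sigma$ extends itself, so $r\in R_n$ and $r$ agrees with $\tau$ on $dom(\tau)$.

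\textbf{Minimality.} This is the main point, and it is where $\Ac\in\Kc_{\land}$ is used. The set $E$ of extensions of $\tau$ is nonempty and closed under $\land$. Indeed, if $r_1,r_2\in R_n$ both extend $\tau$, then $r_1\land r_2\in R_n$ because $\land$ is a polymorphism of $\Gamma_{\Ac}$. Also $r_1\land r_2$ agrees with $\tau$ on $dom(\tau)$, since $d\land d=d$. Therefore $m=\bigwedge_{r'\in E}r'$ belongs to $E$ and is below every element of $E$. So a minimal extension exists, and it is unique.

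It remains to show $r=m$, by induction on $i$: after step $i$, $\sigma$ agrees with $m$ on all positions $\le i$ outside $dom(\tau)$, and therefore $m$ extends $\sigma$.
\begin{itemize}
\item If $m(i)=0$, then $m$ extends $\sigma\cup\{i\mapsto 0\}$, so that pattern is extendable and the algorithm sets position $i$ to $0$.
\item If $m(i)=1$, then every extension of $\tau$ has a $1$ at position $i$, because $m\le r'$ coordinatewise for all $r'\in E$. In particular no extension of $\sigma$ has $0$ there, so the algorithm sets position $i$ to $1$.
\end{itemize}
In both cases the agreement is maintained. Hence $r=m$, and for every extension $r'$ we have $r\land r'=r$, i.e.\ $r\le r'$.
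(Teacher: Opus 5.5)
Your proof is correct. It follows the same outer plan as the paper (one $0$-test per free position via Lemma~\ref{lem:extend}, $O(n^2|\Ac|)$ in total), but the algorithm is different, and so is the place where $\Ac\in\Kc_{\land}$ enters.

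\emph{The paper's route.} It first constructs an arbitrary extension $v_0$. At each free position $i$ with $v_{i-1}(i)=1$, it tests the pattern $\tau\cup\{i\mapsto 0\}$, which is built from the original $\tau$ only, with nothing accumulated. When the test succeeds, it takes an actual witness $m_i$ and sets $v_i=v_{i-1}\land m_i$. So the polymorphism is used inside the algorithm: it is what keeps each merged tuple in $R_n$ and extending $\tau$. The price is that the procedure needs witnesses rather than yes/no answers.

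\emph{Your route.} You run the generic decision-to-search self-reduction with accumulated commitments. For any automaton, this outputs the lexicographically least extension of $\tau$. Closure under $\land$ is used only in the analysis, to show that this lexicographically least extension equals $\bigwedge E$.

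\emph{What each buys.} The paper's version displays the semilattice structure algorithmically. Yours needs only the decision form of Lemma~\ref{lem:extend}. Your argument is also more explicit on two points.
\begin{itemize}
\item You prove that the set of extensions of $\tau$ is $\land$-closed, so a unique minimal extension exists; the paper leaves this implicit.
\item Your induction gives the key step in full: $m(i)=0$ forces the test to succeed, and $m(i)=1$ forces every extension to have $1$ at position $i$. The paper compresses this into the remark that the non-increasing sequence $v_0\ge v_1\ge\cdots$ ``stabilizes at the minimal element''. That remark really needs the same two observations, plus the fact that zeros persist under meets.
\end{itemize}
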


\begin{proof}
To compute the minimal extension of $\tau$, we start from any valid extension $v_0$ that can be found using Lemma~\ref{lem:extend} and iterate through positions $i=1,\dots,n$.  
Whenever $i\notin\mathrm{dom}(\tau)$ and $v_{i-1}(i)=1$, we temporarily fix $\tau(i)=0$ and test extendability again.  
If extendable, we denote by $m_i$ the new extension and update $v_{i}=v_{i-1}\land m_i$; otherwise, we keep $v_{i}=v_{i-1}$.

The resulting sequence $\{v_i\}_{i=0}^{n}$ is a non-increasing sequence  with respect to  $\leq$. Hence, the sequence stabilizes at the minimal element consistent with $\tau$.  
Thus, the final tuple $v_{n}$ is the minimal extension.  
At most $n$ extendability tests are performed, each requiring $O(n|\Ac|)$ time, so the overall complexity is $O(n^2|\Ac|)$.  
In particular, when $\tau = \emptyset$ %inLemma~\ref{lemma:meet_downward} yields 
we get the minimal element of $R_n$.
\end{proof}

We next compute minimal extensions for one constraint: 

\begin{lemma}\label{lemma:single_constraint}
There is an algorithm running in time $O(|V|\,n^2\,|\Ac|)$ that, given 
an automaton $\Ac \in \Kc_{\land}$, 
a constraint $C = \langle \bs, R_n \rangle$, and
a pattern 
$\tau : [n] \rightharpoonup D$ ,
decides if $\tau$ can be extended to a tuple satisfying $C$, 
and, if so, computes the minimal satisfying extension of $\tau$.
\end{lemma}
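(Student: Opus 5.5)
The key difference from Lemma~\ref{lemma:meet_downward} is that the scope $\bs=(s_1,\dots,s_n)$ may contain repeated variables. So the pattern $\tau$ is really a partial assignment to the variables of $\bs$. A satisfying extension must give equal values to positions carrying the same variable; that is, we want the minimal tuple in $R_n$ extending $\tau$ whose value on positions $i,j$ with $s_i=s_j$ agree.

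First I will make $\tau$ consistent with the variable identifications. If two positions with the same variable receive different values under $\tau$, the answer is ``no''. Otherwise, for every variable $x$ that has a $\tau$-value at some position, propagate that value to all positions $i$ with $s_i=x$. This takes $O(n)$ time, or $O(|V|n)$ with a naive scan.

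Next I need an extendability test for a pattern \emph{together with equality constraints}. Lemma~\ref{lem:extend} does not account for repeated variables, and guessing the values of all unfixed variables is exponential. Here I plan to use the $\land$-closure. Let $\sigma$ be the propagated pattern. For a set $Z$ of unfixed variables, consider the pattern $\sigma_Z$ that additionally sets to $1$ every position whose variable lies outside $Z$. I intend a greatest-fixpoint iteration on the set $Z$ of variables ``allowed to be $0$'':
\begin{itemize}
\item Start with $Z$ equal to all unfixed variables of $\bs$. Compute, via Lemma~\ref{lemma:meet_downward}, the minimal extension $m$ of $\sigma$ with no restriction on the positions of $Z$.
\item If $m$ assigns $0$ to some position and $1$ to another position of the same variable $x$, then no solution of $C$ extending $\sigma$ can set $x=0$. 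The reason is that any consistent solution $r$ satisfies $m\le r$ (since $m$ is the minimum), so $r$ gives $1$ at a position of $x$, and hence $r$ gives $x$ the value $1$ everywhere. Therefore remove $x$ from $Z$, fix all its positions to $1$ in $\sigma$, and recompute.
\item Repeat until either $\sigma$ becomes non-extendable (answer ``no''), or the minimal extension $m$ is constant on each variable's positions.
\end{itemize}
In the latter case $m$ is a consistent tuple in $R_n$. By the invariant that every consistent extension of the original $\tau$ lies above the current $\sigma$-fixings, $m$ is below every satisfying extension. So $m$ is the minimal satisfying extension.

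Each round removes at least one variable, so there are at most $|V|$ rounds. Each round is one call to Lemma~\ref{lemma:meet_downward} costing $O(n^2|\Ac|)$. The total time is $O(|V|n^2|\Ac|)$, matching the statement.

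The main obstacle is justifying the forcing step rigorously: the minimal extension computed with the current fixings must lie below \emph{every} consistent solution extending $\tau$. This follows by induction on the rounds. The invariant is that all consistent solutions extend the current $\sigma$. It is preserved because such a solution is $\ge m$ and is constant on each variable, so it must equal $1$ on the removed variable. Uniqueness of the minimum then follows from $\land$-closure of $R_n$, which holds since $\Ac\in\Kc_{\land}$. The same closure shows the set of consistent extensions of a pattern is $\land$-closed, so its minimum exists.
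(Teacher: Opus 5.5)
Your proposal is correct and follows essentially the same route as the paper. Both compute the minimal extension via Lemma~\ref{lemma:meet_downward}, and whenever some variable receives both values, use minimality to force all its occurrences to $1$; this takes at most $|V|$ rounds of cost $O(n^2|\Ac|)$ each. Your preliminary propagation of $\tau$ along repeated variables is a small but real improvement: it guarantees the forced variable is never one that $\tau$ already sets to $0$ at another occurrence, a case the paper's step~(3) tacitly ignores (there the correct answer is ``no extension'').
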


\begin{proof}
Lemma~\ref{lemma:meet_downward} allows one to find a minimal extension of $\tau$ in $R_n$. However, this extension may not satisfy the constraint $\langle \bs, R_n \rangle$, as some of the entries of $\bs$ may be equal. We describe a simple iterative procedure and argue correctness and complexity. 

\begin{enumerate}
  \item Using Lemma~\ref{lemma:meet_downward}, compute the minimal extension $r \in R_n$ of the current $\tau$ since $R_n$ is closed under $\wedge$. If none exists, return "no extension".
  \item If the relation $\{(s_i, r_i)\}_{i=1}^n$ defines a consistent assignment to variables (i.e., it is a function $s_i\mapsto r_i$), then $r$ is a satisfying tuple for $C$; output the induced extension,  and stop.
  \item Otherwise, there is a variable $v\in V$ appearing at positions $i, j$ with $r_i\neq r_j$. By minimality of $r$, any satisfying tuple must assign $v=1$. Fix all occurrences of $v$ in the constraint to $1$ by extending $\tau$ accordingly, and go back to Step~(1).
\end{enumerate}
Each iteration either finds a minimal satisfying extension or fixes a variable that must be assigned $1$, so the algorithm is correct. Each iteration uses Lemma~\ref{lemma:meet_downward} once,  and at most $|V|$ iterations occur, giving total time $O(|V|\,n^2\,|\Ac|)$. Note that we used the polymorphism $\wedge$ in the algorithm above through  Lemma \ref{lemma:meet_downward}. 
\end{proof}

Similarly, setting $\tau = \emptyset$ in Lemma~\ref{lemma:single_constraint} yields a solution to the single-constraint $\AutCSP$ instances.  

\begin{proof}[Proof of Theorem~\ref{Thm:Booleam-AND-Polymporphism}]
We extend the approach above to handle instances with multiple constraints 
$\Cc=\{C_1,\dots,C_m\}$, where each $C_i=\langle \bs_i, R_{k_i}\rangle$, 
which leads to a polynomial-time algorithm as stated in  Theorem~\ref{Thm:Booleam-AND-Polymporphism}.

\begin{enumerate}
    \item Initialize every partial assignment $\tau_i=\emptyset$.
    \item For each $C_i\in\Cc$, compute its minimal consistent tuple $e_i$ using Lemma~\ref{lemma:single_constraint} and let $\varphi_i$ be the corresponding function. If $\varphi=\bigcup_i \varphi_i$ is consistent, output $\varphi$ as a solution.
    \item If $\varphi$ is inconsistent, find the first variable $v$ witnessing inconsistency.  
    For each $C_i$ containing $v$, extend $\tau_i$ by fixing all coordinates of $v$ in $\bs_i$ to $1$.
    \item For each $C_i$, recompute the consistent minimal $e_i'$ under the new $\tau_i$.  
    If some $C_i$ becomes unsatisfiable, output "no solution"; otherwise, return to step~(2).
\end{enumerate}

Each iteration fixes at least one variable, so the number of iterations is at most $|V|$.  
Within each iteration, Lemma~\ref{lemma:meet_downward} is called for all constraints, taking time $O(|\bs_i|^2|\Ac|)$ per constraint.  
Therefore, the total runtime is $O(|V|\,|\Cc|^2\,|\Ac|)$. 
%where $|\Cc| = \sum_{\ccl{\bs, R}\in \Cc} |\bs| $.
\end{proof}

\subsection{The case of  minority polymorphism}\label{SubS:Minor}

We assume that the operation $Minor$ is a polymorphism of the automatic constraint language $\Gamma_{\Ac}$  over the Boolean domain. 
For this section, we expand the group $G_D = (\{0,1\}, \oplus)$ to the two-element field $GF(2) \cong (\{0,1\}, \oplus, \land)$. 

We prove that any $n$-ary relation $R\in \Gamma_\Ac$ can be identified with an equivalent system of linear equations $Mx = b$, that is, $r\in R$ if and only if $M r^{\intercal} = b$. While this correspondence is well understood, we show how to extract the matrix $M$ from the automaton $\Ac$ only.

It is well-known that $R\subseteq D^n$ admits $Minor$ as its polymorphism if and only if $R$ is a coset of some subgroup $S\le G_D^n$.  Moreover, $G^n_D$ is a vector space over the field $GF(2)$. Hence, $S$ is a subspace of the space $G^n_D$.
%and thus $R$ is an affine subspace. 
The subspace $S$ is a kernel of some matrix $M$, that is  $M \bs^{\intercal}  = 0$ iff $\bs \in S$. In such a representation, $R$ is equivalent to the set of solutions to the linear equation $M x = b$ for some vector $b$.  

\begin{theorem}\label{thm:2minor_translation}
   There exists an algorithm that, given an automaton $\Ac \in  \Kc_{Minor}$ over the Boolean domain $D$ and an integer $n$,   constructs in time $O(n^3|\Ac|)$ a system of linear equations $M x = b$ such that $r\in R_n$ if and only if $M r^{\intercal} = b$.
\end{theorem}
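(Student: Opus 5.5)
The plan is to construct the affine hull of $R_n$ explicitly. Since $\Ac\in\Kc_{Minor}$, $R_n$ is a coset $r_0+S$ of a subspace $S\le GF(2)^n$. Once we have one point $r_0\in R_n$ and a basis $B$ of $S$, the system $Mx=b$ follows by linear algebra. We take $M$ to have as rows a basis of the orthogonal complement $S^\perp$, obtained by Gaussian elimination on the basis vectors, and set $b=Mr_0^{\intercal}$. The main work is to get $r_0$ and $B$ from $\Ac$ using only polynomially many queries of the kind in Lemma~\ref{lem:extend}, each costing $O(n|\Ac|)$.

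\textbf{Step 1 (a point).} Use Lemma~\ref{lem:extend} with the empty pattern to decide whether $R_n\neq\emptyset$. If $R_n$ is empty, output an inconsistent system such as $0=1$. Otherwise the BFS also yields a witness $r_0\in R_n$.

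\textbf{Step 2 (a basis of $S$ via a reduced echelon form).} Because $R_n$ is an affine subspace, it has a set of \emph{free} coordinates $P=\{p_1<\dots<p_d\}$, chosen greedily from the left. Its projection onto $P$ is all of $D^{P}$, and every other coordinate is an affine function of the coordinates in $P$ with lower index. Determine $P$ greedily as follows. For $i=1,\dots,n$, coordinate $i$ is free exactly when, after fixing coordinates $1,\dots,i-1$ to agree with $r_0$, both values at position $i$ remain extendable. By the coset structure, the fibres over any prefix are translates of one another, so testing the single prefix $r_0$ suffices. This costs $O(n)$ extendability tests, i.e.\ $O(n^2|\Ac|)$ in total.

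Then, for each $p\in P$, find $r_p\in R_n$ by running Lemma~\ref{lem:extend} on the pattern that agrees with $r_0$ on $P\setminus\{p\}$ and flips position $p$. Set $e_p=r_p\oplus r_0\in S$. The vectors $\{e_p\}_{p\in P}$ are linearly independent, since their restrictions to $P$ are the unit vectors. There are $d=\dim S$ of them, so they form a basis of $S$. This costs another $O(n)$ tests, again $O(n^2|\Ac|)$.

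\textbf{Step 3 (equations).} Gaussian elimination on the $d\times n$ matrix with rows $e_p$ gives a basis of $S^\perp$ in $O(n^3)$ time; these vectors form the rows of $M$. Set $b=Mr_0^{\intercal}$. Then $Mr^{\intercal}=b$ if and only if $r\oplus r_0\in (S^\perp)^\perp=S$, which holds if and only if $r\in R_n$. The total running time is $O(n^2|\Ac|+n^3)\subseteq O(n^3|\Ac|)$.

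The step I expect to need the most care is justifying the greedy freeness test in Step 2. It relies on the characterization, quoted before the theorem, that $Minor$-closed relations are cosets, and on the fact that $Minor$ is a polymorphism of every $R_n$ simultaneously, which is exactly what $\Ac\in\Kc_{Minor}$ gives. Concretely, one must show that whether position $i$ is free does not depend on which point of $\pi_{[i-1]}R_n$ is fixed on the prefix. This holds because the fibres of the projection $R_n\to\pi_{[i-1]}R_n$ are cosets of one and the same subspace, namely the fibre of $S$ over $0$. An alternative that avoids this argument is to skip freeness and simply test, for each $i$, whether some element of $S$ has a prescribed leading $1$ at position $i$. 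That leads to the same echelon basis at a similar cost.
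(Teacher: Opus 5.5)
Your proposal is correct, but it builds the basis of $S$ by a different and cheaper route than the paper.

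The paper runs an ``extend the basis while possible'' loop. In round $t$ it keeps $\Bc_{t-1}$ reduced with respect to a pivot set $L_{t-1}$. It then scans all columns $p\notin L_{t-1}$, testing the pattern that agrees with $r$ on $L_{t-1}$, flips $p$, and leaves all other positions open. The first success yields a new basis vector, the basis is re-normalized, and the scan starts over. That is up to $n$ rounds of up to $n$ extendability tests, whence $O(n^3|\Ac|)$.

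You instead exploit the order of the coordinates. Pinning the whole prefix $1,\dots,i-1$ to $r_0$ and flipping position $i$ is extendable exactly when $S$ contains a vector whose leading $1$ is at $i$. So the point you flag as delicate is really this one-line observation, and a single left-to-right pass finds all echelon pivots. One more test per pivot then gives a basis that restricts to the identity on $P$. From it $M$ can even be read off directly, one equation per non-pivot coordinate. You use $O(n)$ tests in total and obtain $O(n^2|\Ac|+n^3)$, which is sharper than the stated bound. In fact, since $L_{t-1}$ only grows, a test that fails in one round of the paper's loop fails in every later round. With an increasing scan order, your pass is therefore essentially the paper's loop without restarts.

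What the paper's version buys is independence from any echelon structure. It only uses that a nonzero vector of $S$ vanishing on the current pivots exists iff the basis is incomplete. Yours relies on the standard fact that projecting $S$ onto the echelon pivots $P$ is a bijection onto $D^{P}$, because the echelon vectors restricted to $P$ form a unitriangular matrix. This fact gives both $|P|=\dim S$ and the extendability of your flipped patterns, so you should state it explicitly. You also handle $R_n=\emptyset$, which the paper's proof passes over.
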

\begin{proof}
    We fix a tuple $r\in R$ and set $S = R - r$, a linear subspace of $D^n$ of dimension $m$. We construct a basis $\Bc$ of $ S$ in $m$ steps via a strictly increasing chain of bases $\Bc_1 \subsetneq \Bc_2\subsetneq \ldots \subsetneq \Bc_m = \Bc$.
    Once $\Bc$ is constructed, choose a maximal set of linearly independent row vectors  
    $\{\alpha \in D^{1\times n}: \forall b\in \Bc (\alpha\cdot b^{\intercal} = 0)\}$,   and let $M$ be the matrix whose rows are these vectors. Then $\ker M = \mathrm{span}\,\Bc = S$.

    Let $\Bc_0=\emptyset$. At step $t(t\ge 1)$, given $\Bc_{t-1} = \{b_k: 1\le k \le t-1\}$, we seek $b_t \in S $ such that $\Bc_{t} = \Bc_{t-1}\cup\{b_t\}$ is a basis of size $t$. 
    
To that end, we normalize the matrix $\Bc_{t-1}$ to a row canonical form. Let $L_{t-1}\subseteq [n]$ be the 
        columns containing the leading $1$s. Then a valid candidate $b_t$ exists iff there exists a non-zero $b_t\in S$ such that $b_t(k) = 0$ for all $k\in L_{t-1}$.
        Given $L_{t-1}$, we can detect such a candidate by testing the extendability of the following patterns. For each  column  $p\in [n]\setminus L_{t-1}$, define a pattern $\phi_{p,L_{t-1}}:[n]\partialar \{0,1\}$:  
         \[
        \phi_{p,L_{t-1}}(k) = \begin{cases}
            r(k) \oplus 1 , k=p \\ 
            r(k) , k\in L_{t-1} \\
            \text{undefined}, \text{ otherwise}
        \end{cases}
        \]
 There is an $x\in R$ extending $\phi_{p,L_{t-1}}$ iff $b = x - r\in S$ satisfies $b(k) = 0$ for all $k\in L_{t-1}$ and $b(p) = 1$, in which case we set $b_t = b$.

The process is summarized in Algorithm~\ref{Algo:basis_construct}. 
We now analyze the running time of this algorithm. The initialization takes time $O(n|\Ac|)$ to find a fixed vector $r$. The algorithm would iterate at most $n$ times. During each iteration, we test the extendability of each pattern $\phi_{p,L_{t-1}}$ on $\Ac$ and it takes time $O(n^2|\Ac|)$ in total. Therefore, the kernel $R -r =S$ can be constructed in time $O(n^3|\Ac|)$ for any relation $R\subseteq D^n$.
 This concludes the proof of the theorem.
 \end{proof}

 {\renewcommand{\figurename}{Algorithm}
    \begin{figure}
        \centering
        \begin{tabular}{l}
            \textbf{Input}: An automaton $\Ac$, an arity $n$\\ 
            \textbf{Output}: $(r, \Bc)$ such that $r + S = R_n$ with $S$ spanned by $\Bc$ \\
            \algoquad $\Bc \gets \emptyset$; \   \   \   \  \  \   \   \   \  \  
           {\em  Basis of kernel}  $S$\\ 
            \algoquad $L_0 \gets \emptyset$; \   \   \   \  \  \   \   \   \  \    {\em Columns of leading "$1$"s of basis} $\Bc$ \\
            \algoquad Find an accepting tuple $r\in \mathcal L(\Ac)$ of length $n$;\\
            \algoquad \textbf{For each} $t\in [n]$ \textbf{do}\\
            \algoquad\algoquad$\text{Changed} \gets \bot$; \   \   \   \  \  \   \   \   \  \ {\em A flag if $\Bc$ was extended} \\
            \algoquad\algoquad\textbf{For each} $p\in [n]\setminus L_{t-1}$ \textbf{do} \\
            \algoquad\algoquad\algoquad 
            \textbf{If} $\phi_{p,L_{t-1}}$ is extendable to $b$ \textbf{then} \\ 
            \algoquad\algoquad\algoquad\algoquad 
            Changed $\gets \top$;  
            %\\ \algoquad\algoquad\algoquad\algoquad 
            $L_{t} \gets L_{t-1}\cup\{p\}$; \\
            \algoquad\algoquad\algoquad\algoquad 
            \textbf{For each} $v\in \Bc$ \textbf{do} \\
            \algoquad\algoquad\algoquad\algoquad\algoquad
            \textbf{If} $v(p) = 1$ \textbf{then} $v \gets v + b$;\\
            \algoquad\algoquad\algoquad\algoquad \textbf{Done} \\
            \algoquad\algoquad\algoquad\algoquad
            $\Bc \gets \Bc \cup \{b\}$;\\
            \algoquad\algoquad\algoquad\algoquad
            Normalize $\Bc$ into row canonical form; \\
            \algoquad\algoquad\algoquad\algoquad
            \textbf{Break};\\
            \algoquad\algoquad \textbf{Done} \\

            \algoquad\algoquad \textbf{If} Changed $== \bot$ \textbf{then} 
            \textbf{Break} ; \\
            \algoquad \textbf{Done} \\
            \algoquad \textbf{Return} $(r,\Bc)$;
        \end{tabular}
        \caption{Basis Construction}
        \label{Algo:basis_construct}
    \end{figure}}
Theorem~\ref{thm:2minor_translation} shows that any automatic relation with the $Minor$ polymorphism can be translated into an equivalent system of linear equations over $GF(2)$.  
This leads directly to a polynomial-time algorithm for solving $\AutCSP$ by reducing it to solving a single global system of linear equations.

\begin{theorem}\label{thm:AutCSP-solver}
There is a polynomial-time algorithm that, given
an $\AutCSP$ instance $I =(V,D,\Cc,\Ac)$ where $\Ac\in \Kc_{Minor}$, decides if the instance $I$ has a solution. The algorithm runs in time $O(|\Cc|^3|\Ac|+|V|^3)$ where $|\Cc| = \sum_{\ccl{\bs, R}\in \Cc} |\bs|$.
\end{theorem}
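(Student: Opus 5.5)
We reduce the instance to one global linear system over $GF(2)$ and solve it by Gaussian elimination. For each constraint $C=\ccl{\bs,R_n}\in\Cc$ with $n=|\bs|$, we call the algorithm of Theorem~\ref{thm:2minor_translation} on $\Ac$ and $n$. If $R_n=\emptyset$, which the first step of Algorithm~\ref{Algo:basis_construct} detects when no accepting tuple of length $n$ exists, we answer ``no solution'' at once. Otherwise the algorithm returns $(r,\Bc)$ and a system $M_C x=b_C$ over the positions $1,\dots,n$ such that a tuple $t\in D^n$ lies in $R_n$ iff $M_C t^{\intercal}=b_C$. Here $M_C$ has at most $n$ rows, and $b_C=M_C r^{\intercal}$.

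Next we rewrite each system in terms of the variables $V$. Position $i$ of $\bs$ carries variable $s_i$, so every equation $\sum_i \alpha_i x_i=\beta$ becomes $\sum_{v\in V}\bigl(\bigoplus_{i:s_i=v}\alpha_i\bigr)v=\beta$. This substitution handles repeated variables in a scope: an assignment $\vf:V\to D$ satisfies $C$ iff $\vf(\bs)\in R_n$, iff $M_C\vf(\bs)^{\intercal}=b_C$, iff $\vf$ satisfies the rewritten equations. Taking the union over all $C\in\Cc$ gives a system $Mx=b$ in $|V|$ unknowns with at most $|\Cc|$ equations. Correctness follows directly: $\vf$ is a solution of $I$ iff it satisfies every rewritten subsystem, iff $M\vf^{\intercal}=b$.

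For the running time, Theorem~\ref{thm:2minor_translation} costs $O(|\bs|^3|\Ac|)$ per constraint. Since $\sum_C|\bs_C|^3\le(\sum_C|\bs_C|)^3$, the total is $O(|\Cc|^3|\Ac|)$. Computing the complement basis to obtain $M_C$ and rewriting over $V$ are polynomial in $|\bs|$ and are absorbed into this bound. Gaussian elimination on the global system then decides consistency.

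The step I expect to be delicate is meeting the stated $O(|V|^3)$ bound for elimination, because the system may have up to $|\Cc|$ rows, which can exceed $|V|$. The fix is to insert rows incrementally while keeping an echelon basis of at most $|V|+1$ augmented vectors. Each new row is reduced against the current basis in $O(|V|^2)$ time and is either discarded as dependent or detected as the inconsistent row $0=1$. This gives $O(|\Cc||V|^2)$ in total, which is within $O(|\Cc|^3+|V|^3)$ since $|\Cc||V|^2\le\max(|\Cc|,|V|)^3$. So the overall bound $O(|\Cc|^3|\Ac|+|V|^3)$ holds, with the $|\Cc|\,|V|^2$ term absorbed as shown.
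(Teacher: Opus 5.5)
Your proposal is correct and follows essentially the same route as the paper: translate each constraint into a $GF(2)$ system via Theorem~\ref{thm:2minor_translation}, stack these into one global system over $V$, and decide it by Gaussian elimination in $O(|\Cc||V|^2)$ time. You also fill in details the paper glosses over: merging coefficients mod $2$ for variables repeated within a scope, handling $R_n=\emptyset$, and using $|\Cc||V|^2\le|\Cc|^3+|V|^3$ to reach the stated bound, since the paper's own proof actually ends with $O(|\Cc|^3|\Ac|+|\Cc||V|^2)$.
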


\begin{proof}
For each constraint $\ccl{(v_{j_1}, \ldots, v_{j_{n_i}}), R_{n_i}}$,
by Theorem~\ref{thm:2minor_translation},  
we can build an equivalent system of linear equations $M_i x_{\bs_i} = b_i$,
where $x_{\bs_i} = (x_{v_{j_1}}, \ldots, x_{v_{j_{n_i}}})^{\intercal}$. The matrix $M_i$ is over the field $GF(2)$, and every equation in $M_i x_{\bs_i} = b_i$ cannot contain a variable more than once. Therefore, the length of equation is bounded by $|V|$.

All constraints $C_1,\dots,C_m$ are combined into a global system $M x = b$ of linear equations, where $x = (x_v)_{v \in V}$ is the vector of all variables, and $M$ and $b$ are obtained by vertically concatenating all local matrices $M_i$ and vectors $b_i$, with each $M_i$ extended with zeros in columns corresponding to variables not in $\bs_i$.

Thus, a function $\phi: V \to D$ satisfies  $I$ iff the corresponding Boolean vector $x = (\phi(v))_{v\in V}$ 
satisfies the system $M x = b$. 
Each local translation takes $O(n_i^3|\Ac|)$ time, and the system is solved by Gaussian elimination in $O(|\Cc||V|^2)$ where $|\Cc| = n_1+\dots+n_m$.  Hence, the overall procedure runs in time $O(|\Cc|^3|\Ac|+|\Cc||V|^2)$. 
\end{proof}

\textbf{ Epilogue: finite fields case}. The Boolean domain case for minority polymorphism can easily be extended to  finite fields. % byessentially following the Boolean case. 
Below we state the results with proofs in Section \ref{Appendix: GF(q) extraction} of the Appendix.

Over the field $GF(q)$, consider the operation $f_q(x, y, z) = x - y + z$. When $q=2$, the operation $f_q$ is the minority operation. It is well-known that every relation $R$ in $\Inv(f_q)$ is equivalent to a finite system of linear equations over $GF(q)$ \cite{Bulatov2006SimpleMaltsev}. %, i.e., $R$ is an affine subspace (a coset of a linear subspace). 
Thus, the classical $\CSP(\Inv(f_q))$ is decidable via Gaussian elimination. \ When $R\in \Gamma_{\Ac}$ for $\Ac \in \Kc_{f_q}$, we must recover its linear description from $\Ac$:

%The following lemma shows that. 
%The map $f_q$ is the Maltsev operation on the additive group reduct of $GF(q)$.
%Now we outline the  set-up for the finite fields and state the results with proofs in Section \ref{Appendix: GF(q) extraction} of the Appendix. The proofs essentially follow the Boolean case. Over the field $GF(q)$, consider the operation $f_q(x, y, z) = x - y + z$. When $q=2$, the operation $f_q$ is the minority operation. 
%The map $f_q$ is the Maltsev operation on the additive group reduct of $GF(q)$. 

%that this extraction is possible in polynomial time.

\begin{lemma}\label{lemma:finite_field_extraction}
   There exists an algorithm, which given an arity $n$ and an automaton $\Ac\in \Kc_{f_q}$, constructs in time $O(n^3|\Ac|)$a system of linear equation $Mx = b$ such that $r\in R_n$ if and only if $Mr^\intercal = b$. \qed 
\end{lemma}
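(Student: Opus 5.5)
We mirror the proof of Theorem~\ref{thm:2minor_translation}, replacing $GF(2)$ by $GF(q)$ and checking that the extendability tests still pin down a basis. First, using Lemma~\ref{lem:extend}, we find a tuple $r\in R_n$ in time $O(n|\Ac|)$; if none exists, we output an inconsistent system such as $0=1$. Since $f_q(x,y,z)=x-y+z$ is a polymorphism of $R_n$, the set $S=R_n-r$ is closed under $(x,y,z)\mapsto x-y+z$ and contains $0$. Setting $y=0$ gives closure under addition. Repeated addition then gives closure under multiplication by elements of the prime subfield $\mathbb{F}_p$. So $S$ is an $\mathbb{F}_p$-subspace of $GF(q)^n$, and $R_n=r+S$ is a coset.

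\emph{Obstacle.} When $q=p^e$ with $e>1$, $S$ need not be a $GF(q)$-subspace, so linear equations over $GF(q)$ may not describe $R_n$. I would handle this by treating $q$ as prime, which is presumably the intended meaning of ``finite field case'' where $f_q$ is the Maltsev term of $\mathbb{Z}_q$. Alternatively, I would work over $\mathbb{F}_p$ by identifying $GF(q)\cong\mathbb{F}_p^e$ coordinatewise. This multiplies the number of coordinates by the fixed constant $e$ and so does not change the $O(n^3|\Ac|)$ bound. For the rest of the plan I assume $S$ is a subspace over the field we work with.

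\emph{Basis construction.} We build a basis of $S$ greedily, maintaining it in row echelon form with pivot set $L_{t-1}$. A new basis vector exists iff some nonzero $b\in S$ vanishes on $L_{t-1}$. Such a $b$ has a first nonzero coordinate $p\notin L_{t-1}$. Because $S$ is closed under scalars, we may normalize to $b(p)=1$. So for each $p\in[n]\setminus L_{t-1}$ we test extendability of the pattern
\[
\phi_{p,L_{t-1}}(k)=\begin{cases} r(k)+1, & k=p,\\ r(k), & k\in L_{t-1},\\ \text{undefined}, & \text{otherwise}.\end{cases}
\]
Any extension $x$ of this pattern gives $b=x-r\in S$ with $b(p)=1$ and $b|_{L_{t-1}}=0$, so $b$ is independent of the current basis. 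Conversely, if no pattern is extendable, then every element of $S$ is determined by its pivot coordinates, so the current basis spans $S$.

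\emph{Running time.} There are at most $n$ rounds. Each round makes at most $n$ tests, each costing $O(n|\Ac|)$ by Lemma~\ref{lem:extend}. Renormalizing the basis costs $O(n^2)$ field operations per round, which is dominated. The total is $O(n^3|\Ac|)$.

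\emph{Extracting the equations.} We compute a basis of the annihilator $\{\alpha:\alpha\cdot b^\intercal=0 \text{ for all } b\in\Bc\}$ by Gaussian elimination in $O(n^3)$. We let $M$ have these vectors as rows and set $b:=Mr^\intercal$. Then $\ker M=S$, so $Mx^\intercal=b$ holds iff $x-r\in S$ iff $x\in R_n$.
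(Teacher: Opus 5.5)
Your proposal is correct and matches the paper's proof: find $r\in R_n$, build a basis of $S=R_n-r$ in row canonical form by testing the patterns $\phi_{p,L}$ that put $r(p)+1$ at $p$ and agree with $r$ on the pivot set, then take $M$ with $\ker M=S$ and $b=Mr^\intercal$. Your caveat for $q=p^e$ with $e>1$ is a genuine point the paper overlooks. For $q=4$, the relation $R_1=\{0,1\}$ is closed under $x-y+z$ but is not the solution set of any $GF(4)$-linear system, so restricting to prime $q$ is the right reading. If you instead work over $\mathbb{F}_p^e$, fixing individual $\mathbb{F}_p$-coordinates needs the set-valued search of Lemma~\ref{procedure:search_pattern} rather than Lemma~\ref{lem:extend}, and the resulting system is over $\mathbb{F}_p$, not $GF(q)$.
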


Hence, we have the following theorem:
\begin{theorem}\label{thm:AutCSP-finite-field}
There is a polynomial-time algorithm that, given an $\AutCSP$ instance 
$I=(V,D_q,\Cc,\Ac)$ where $D_q$ is the underlying domain of the finite field $GF(q)$ and $\Ac\in \Kc_{f_q}$,
decides whether $I$ has a solution.
The algorithm runs in time 
$O(|\Cc|^3|\Ac| + |\Cc||V|^2)$. \qed 
%, where  $|\Cc| = \sum_{\ccl{\bs, R}\in \Cc} |\bs|$.
\end{theorem}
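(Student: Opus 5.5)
The proof follows the Boolean minority case (Theorem~\ref{thm:AutCSP-solver}), with Lemma~\ref{lemma:finite_field_extraction} in place of Theorem~\ref{thm:2minor_translation}. Let $I=(V,D_q,\Cc,\Ac)$ with $\Cc=\{C_1,\dots,C_m\}$ and $C_i=\ccl{\bs_i,R_{n_i}}$.

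\emph{Step 1: local translation.} Since $\Ac\in\Kc_{f_q}$, every $R_{n_i}$ is closed under $x-y+z$, so it is a coset $r+S$ of a subspace $S\le GF(q)^{n_i}$. Lemma~\ref{lemma:finite_field_extraction} then gives, in time $O(n_i^3|\Ac|)$, a system $M_ix=b_i$ over $GF(q)$ such that $r\in R_{n_i}$ iff $M_ir^\intercal=b_i$. If $R_{n_i}$ is empty, which we detect by the length-$n_i$ emptiness test of Lemma~\ref{lem:extend}, we immediately answer ``no solution''.

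\emph{Step 2: substitution of variables.} The columns of $M_i$ are indexed by the positions $1,\dots,n_i$ of $\bs_i$. Replace each position $j$ by the variable $\bs_i(j)\in V$, and merge repeated variables by adding their coefficients in $GF(q)$. This gives a system $M_i'x_V=b_i$ over the variable vector $x_V=(x_v)_{v\in V}$.

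This step is the only genuinely new point compared with the standard CSP. An assignment $\varphi$ satisfies $C_i$ iff the tuple $\varphi(\bs_i)$ lies in $R_{n_i}$, iff $M_i\varphi(\bs_i)^\intercal=b_i$, iff $M_i'\varphi^\intercal=b_i$. The last equivalence uses only linearity: summing coefficients of equal variables commutes with the evaluation.

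Each $M_i'$ has at most $n_i$ rows, since a basis of $S^\perp$ has size at most $n_i$, and each row has at most $\min(n_i,|V|)$ nonzero entries. Building the $M_i'$ therefore costs time dominated by Step~1.

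\emph{Step 3: global system.} Stack all the $M_i'$ and $b_i$ into one system $Mx_V=b$, with at most $|\Cc|$ rows and $|V|$ columns. By Step~2, $I$ is satisfiable iff $Mx_V=b$ is solvable over $GF(q)$. Gaussian elimination over $GF(q)$, with field operations costing $O(1)$ since $q$ is fixed, decides this in $O(|\Cc||V|^2)$ time. Summing $O(n_i^3|\Ac|)\le O(|\Cc|^3|\Ac|)$ over $i$ gives the stated bound.

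\emph{Main obstacle.} The substantive work is Lemma~\ref{lemma:finite_field_extraction}, which we take as given. If we had to justify it, we would adapt Algorithm~\ref{Algo:basis_construct}. Fix $r\in R_n$. For each free column $p$ and each nonzero $c\in GF(q)$, test extendability of the pattern that fixes $r$ on the leading columns and sets $r(p)+c$ at $p$. The candidate vector $x-r$ then lies in $S$ because $R_n=r+S$, which is exactly where closure under $f_q$ is used. We then compute the orthogonal complement of $S$ to obtain $M$. The extra loop over $c$ adds only a constant factor $q$.
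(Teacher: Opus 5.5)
Your argument is the route the paper intends. The paper gives no separate proof of this theorem; it is meant to be proved exactly like Theorem~\ref{thm:AutCSP-solver}, with Lemma~\ref{lemma:finite_field_extraction} replacing Theorem~\ref{thm:2minor_translation}, followed by stacking the local systems and Gaussian elimination. Your Step~2 (adding the coefficients of repeated variables) and your emptiness check are cleaner than the paper's Boolean proof, which simply asserts that no equation contains a variable twice.

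There is, however, a gap, which you inherit from the paper's Lemma~\ref{lemma:finite_field_extraction}. Closure under $x-y+z$ only makes $S=R_n-r$ an additive subgroup of $GF(q)^n$. That is a $GF(q)$-subspace when $q$ is prime, but not when $q=p^k$ with $k>1$.

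For a counterexample, work over $GF(4)=\{0,1,\omega,\omega^2\}$ and let $\mathcal L(\Ac)=\{\omega\}\cup\{a\,a^2 : a\in GF(4)\}$. This language is finite. We have $R_1=\{\omega\}$, $R_2$ is the graph of the Frobenius map (an additive subgroup), and $R_n=\emptyset$ for $n\ge 3$, so $\Ac\in\Kc_{f_4}$. Yet $R_2$ has four elements, contains $(0,0)$ and $(\omega,\omega^2)$, and is not $\{(a,a): a\in GF(4)\}$. Hence it is not the solution set of any linear system over $GF(4)$. Concretely, starting from $r=(0,0)$ the basis construction returns $S=\{(a,a): a\in GF(4)\}$. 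The resulting system then accepts $x=y=\omega$ for the instance $\ccl{(x)},\ccl{(y)},\ccl{(x,y)}$, which is unsatisfiable.

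Your extra loop over nonzero $c$ does not fix this. It is redundant for prime $q$, where $S$ is closed under scalars. It is insufficient otherwise, because taking the orthogonal complement over $GF(q)$ still presupposes a $GF(q)$-subspace.

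The repair is to fix a $GF(p)$-basis of $GF(q)$ and regard $R_n$ as an affine $GF(p)$-subspace of $GF(p)^{kn}$. Run the basis construction over $GF(p)$, using set-valued patterns (Lemma~\ref{procedure:search_pattern}) that pin individual $GF(p)$-coordinates of a position. Then solve the stacked system over $GF(p)$ in $k|V|$ unknowns. Since $q$ is fixed, this only changes constants, so the bound $O(|\Cc|^3|\Ac|+|\Cc||V|^2)$ survives. For prime $q$ your proof is complete as written.
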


\section{Width 1 Automatic Constraint Languages}\label{section:width_1}

In this section, we focus on the CSP over automatic constraint languages of width $1$. So, let us fix a finite domain $D$, a finite automaton $\Ac = (S, q_0, \Delta, F)$ over $D$, and the language $\Gamma_{\Ac}$. 
It is known that languages closed under a semilattice operation have width $1$. 
%In particular, it is true for the Boolean operations $\vee$ and $\wedge$. 
%In this section, let $\land$ denote a semilattice operation on $D$.

\subsection{Pattern-Driven Search}

An important ingredient in finding a solution to a given constraint is a process that refines a given partial solution. We need to recast this process in automata-theoretic setting.

The notion of a pattern in Definition \ref{dfn:pattern} is single-valued and partial, which is sufficient in the Boolean setting: each position either receives a fixed value from $\{0,1\}$ or remains undefined, and our algorithm in Subsection \ref{booleansemi} extends such partial assignments until a full solution is obtained.
However, this framework breaks down for width-$1$ constraint languages over arbitrary finite domains.
In these settings positions may need to carry sets of possible values rather than a single fixed value. In addition, the algorithm in this section starts with the full domain $D$ at every position and enforces consistency by shrinking these sets rather than extending a single-valued partial assignment as in the Boolean case.
This key difference motivates the notion of a set-valued pattern.

\begin{definition}
Given a domain $D$, a \textbf{ set-valued pattern} is a function  $\phi:[n] \ar \mathcal P(D)$.
\end{definition}

%Intuitive explanation of this definition is the following.  In the Boolean case,  patterns were single valued partial functions  $\phi:[n] \ar D$. Our goal was to find extensions of patterns that satisfy a given Instance. In the Boolean domain case, this idea worked. It turns out in the case of constraint languages of width $1$, instead of single valued patterns we need set-valued patterns to find solutions to constraints.  

%One can view  a pattern $\phi$ as a partial solution to a constraint.  If $\phi (i)$ is defined, then the values of the variable at position  $i$ belong to the set $\phi(i)$. Otherwise, $\phi(i)$ is undefined, and weseek to extend the pattern by searching for a set-value of the variable at position $i$.  A partial function $\psi:[n] \rightarrow \mathcal P(D)$ \textbf{ extends} $\phi$ if for all $i\in dom (\phi)$ we have $\phi(i)=\psi(i)$. The extension $\psi$ is \textbf{ proper} if there exists a $j\in [n]$ such that $\phi(j)$ is undefined but $\psi(j)$ is defined. 

Given a set-valued pattern $\phi$, we want to find a tuple $t$ such that 
$t\in R_n \cap \phi(1) \times \ldots \times \phi(n)$.  Finding such a tuple  
can be exponential on $n$. The next lemma shows that exponential search is not necessary for automatic constraint languages. A proof is  in Appendix~\ref{S:procedure:search_pattern}.

\begin{lemma}\label{procedure:search_pattern}
    There exists an algorithm that, for a given  pattern $\phi: [n] \ar \Pc(D)$ and an automaton $\Ac$, decides if there exists a tuple $t\in \mathcal L(\Ac)\cap \prod_{i=1}^n\phi(i)$ and, if so, constructs such a tuple $t$. The algorithm runs in time  $O(n|\Ac|)$. 
    \qed 
\end{lemma}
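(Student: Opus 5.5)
The plan is a layered reachability (dynamic programming) computation on the transition graph of $\Ac=(S,q_0,\Delta,F)$, restricted at each layer by the pattern. This generalizes the single-valued test of Lemma~\ref{lem:extend}, where each position was either fixed or free. Here position $i$ is constrained to $\phi(i)\subseteq D$. An undefined position corresponds to $\phi(i)=D$ and a fixed one to a singleton.

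First, a forward pass. Put $X_0=\{q_0\}$ (or $X_0=I$ if the automaton is nondeterministic). For $i=1,\dots,n$ let
\[
X_i=\{\,q\in S : (p,d,q)\in\Delta \text{ for some } p\in X_{i-1},\ d\in\phi(i)\,\}.
\]
By induction on $i$, $X_i$ is exactly the set of states reachable from an initial state by a word $d_1\cdots d_i$ with $d_j\in\phi(j)$ for all $j\le i$. So a tuple $t\in\mathcal L(\Ac)\cap\prod_{i=1}^n\phi(i)$ exists iff $X_n\cap F\neq\emptyset$.

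Second, construction of the witness. During the forward pass, record for each $q\in X_i$ one predecessor pair $(p,d)$ with $p\in X_{i-1}$, $d\in\phi(i)$ and $(p,d,q)\in\Delta$. If $X_n\cap F\ne\emptyset$, pick $q_n\in X_n\cap F$ and follow these back-pointers down to layer $0$. Reading off the labels $d_1,\dots,d_n$ gives the tuple $t$. Each label lies in the prescribed set, and the run $q_0,q_1,\dots,q_n$ is accepting.

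Third, the running time. Each layer scans every transition $(p,d,q)\in\Delta$ once, checking $p\in X_{i-1}$ and $d\in\phi(i)$. With $D$ fixed, membership in $\phi(i)$ is a constant-time bit-vector lookup, so a layer costs $O(|S|+|\Delta|)=O(|\Ac|)$. Storing back-pointers costs $O(|S|)$ per layer and backtracking costs $O(n)$, for a total of $O(n|\Ac|)$.

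I do not expect a real obstacle. The only point that needs care is keeping the $O(n|\Ac|)$ bound without an extra factor of $|D|$. Scanning transitions rather than pairs (state, letter) achieves this, and in any case $|D|$ is a constant.
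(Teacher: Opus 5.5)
Your proposal is correct and is essentially the paper's proof: the paper also finds a length-$n$ accepting walk from $q_0$ in the transition graph of $\Ac$ whose $i$-th step uses a label from $\phi(i)$, by a breadth-first search costing $O(n|\Delta|)=O(n|\Ac|)$. Your layered sets $X_i$ with back-pointers are an explicit version of that BFS, and they spell out the witness extraction and the per-layer cost more carefully than the paper does.
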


\subsection{1-minimality in classical setting}

We introduce the \textbf{1-minimality} concept and then define constraint languages of \textbf{width 1}. For an instance \(I\) of \(\CSP(\Gamma_H)\), the  1-minimality concept mimics the termination of the unit propagation process in \textsc{Horn-3-SAT} as presented below.

%The notion of width 1 is closely tied to 1-minimality in constraint satisfaction problems. Unit propagation in \textsc{Horn-3-SAT} iteratively derives inferences from known variables until a fixpoint is reached, and the 1-minimality algorithm generalizes this propagation scheme.
%% \anote{I don't understand what you say about Horn-3-SAT} (\bakhnote{We rephrased it by saying we explain 1-minimality through the unit propagation for Horn-3-SAT. It might help automata people to follow the idea. }).

\begin{example}[\textsc{Horn-3-SAT}]
The language for \textsc{Horn-3-SAT} is $\Gamma_{H} = \{S_{110}, S_{111}, C_1\}$, where $S_{xyz} = \{0,1\}^3\setminus\{(x,y,z)\}$ and $C_1 = \{1\}$.  
%The unit propagation is the procedure which uses known values of variables to determine possible values of other variables. 
%For instance, consider the constraint  \[  \Cc = \{C_1(x), S_{111}(y,y,y), S_{110}(x,z,y)\} \] $\Cc$ forces that $x = 1,y=0$ and thus determines $z = 0$ because $S_{110}$ has exactly one tuple $t = (1,0,0)$  satisfying $\pi_1 t = 1, \pi_3 t = 0$. In other words, values $x=1,y=0$ are propagated to $S_{110}(x,z,y)$. 
The unit propagation procedure applied to any instance $I = (V,\{0,1\}, \Cc)\in \CSP(\Gamma_H)$ proceeds as follows: 
\begin{enumerate}
    \item For each  $x\in V$, introduce the unary \textbf{domain} constraint $P_x\subseteq \{0,1\}$. % which contains allowable values for the variable $x$. 
    Initially, set $P_x=\{0,1\}$. % for all $x\in V$. 
    \item \label{horn:domainshrinking} For all  $\ccl{\bs , R}\in \Cc$ (all constraints are at most ternary) and for all $s_i \in \bs$, update $P_{s_i}$ to $P_{s_i}'$ if $P_{s_i}' \subsetneq P_{s_i}$, where
    $$
        P_{s_i} ' = \pi_i\{t \in R : \pi_jt \in P_{s_j} \text{ for all } j\in [k]\}. 
    $$ 
    \item Return to Step~(\ref{horn:domainshrinking}) if any $s_i$ is updated; else 
    terminate. 
\end{enumerate}

The unit propagation corresponds to step~(\ref{horn:domainshrinking}). When the algorithm terminates, there are only two possibilities:
\begin{itemize}
    \item If 
    $P_x=\emptyset$ for some $x$, then $I$ is unsatisfiable.
    \item Else, $I$  has the solution $\psi : V \ar \{0,1\}$:
    \[
        \psi(x)  = \begin{cases}
            a, \text{ if } P_x = \{a\}\\
            0, \text{ otherwise}
        \end{cases}
    \]
\end{itemize}
\end{example}

The notion of 1-minimality is based on the example above. 

\begin{definition}[$1$-minimality]
  A CSP instance $I$ is  \textbf{$1$-minimal} if for all 
  $x\in V$ it contains unary constraint $P_x$, and for all constraints $\ccl{\bs,R_i}$ of $I$ and all $x\in \bs$, we have $\pi_x(R_i\cap\prod_{v\in \bs} P_v) = P_x$. 
\end{definition}

%It s well-known %, e.g., see\cite{16_jl&c_boundedwithhierarchy}, that 
%Any instance $I\in \CSP(\Lc)$ can be transformed into a $1$-minimal instance  $I' = (V,D,\Cc\cup\{P_x: x\in V\})$ over an expanded language such that $I$ and $I'$ are equivalent.   
For completeness,  we provide the following known theorem  since we need to recast it in the automata-theoretic setting:

%by the following polynomial algorithm:

\begin{theorem}\label{Thm-1-minimality} \cite{06_jalgebra_2semilattice,16_jl&c_boundedwithhierarchy}
    There exists an algorithm that transforms every instance $I = (V,D,\Cc)$ to an \textbf{equivalent} $1$-minimal instance $I' = (V,D, \Cc \cup\{P_x: x\in V\})$. The algorithm runs in time $O(|V||D|mr^2)$, where 
     \( r \) is the maximal arity of any constraint relation in \( \Cc \), and  \( m \) is the total number of tuples across all constraint relations.
\end{theorem}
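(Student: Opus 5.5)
The plan is the standard arc-consistency (generalized unit propagation) argument: one monotone domain-shrinking procedure, then checks of termination, equivalence, 1-minimality at the fixpoint, and running time. First, introduce for every $x\in V$ a unary constraint $P_x$ initialized to $D$. Then run the propagation loop from the \textsc{Horn-3-SAT} example in full generality. Pick a constraint $\ccl{\bs,R}\in\Cc$ and a position $i$ of $\bs$ with variable $x=s_i$. Compute
\[
P_x' = P_x\cap \pi_i\{t\in R : \pi_j t\in P_{s_j}\text{ for all } j\}.
\]
If $P_x'\subsetneq P_x$, replace $P_x$ by $P_x'$. Repeat until no update occurs, or until some $P_x$ becomes empty. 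In the empty case we may output any trivially unsatisfiable 1-minimal instance, for instance all $P_x=\emptyset$.

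Equivalence is preserved as an invariant. Every solution $\varphi$ of $I$ satisfies $\varphi(x)\in P_x$ at every stage, by induction on the number of updates. Indeed, if $\varphi$ respects all current $P_v$, then the tuple $\varphi(\bs)\in R$ witnesses $\varphi(x)\in P_x'$. Conversely, adding unary constraints only removes solutions. Hence $I$ and $I'=(V,D,\Cc\cup\{P_x\})$ have the same solutions.

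At the fixpoint, no $P_x$ can shrink. This says exactly that
\[
\pi_x\Bigl(R\cap\prod_{v\in\bs}P_v\Bigr)=P_x
\]
for every constraint and every $x\in\bs$. For a variable repeated in $\bs$, the projection taken at any one of its occurrences must be stable. So $I'$ is 1-minimal by definition. Termination holds because each effective update strictly decreases $\sum_x|P_x|$, which starts at $|V||D|$. Thus there are at most $|V||D|$ effective updates.

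For the time bound, organize the loop as rounds. A round scans every constraint and, for each, makes one pass over its tuples. In that pass it filters the tuples consistent with the current domains, at cost $O(r)$ per tuple, and then records the projections onto all $\le r$ positions. One naive pass per constraint therefore costs $O(|R|\,r^2)$, allowing $O(r)$ for each of the $r$ positions of each tuple. Summed over constraints, a round costs $O(mr^2)$. Every round except the last performs at least one effective update, so there are at most $|V||D|+1$ rounds, giving $O(|V||D|mr^2)$.

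The main obstacle is this bookkeeping. We must make sure a round really is $O(mr^2)$, and not something larger coming from repeated domain-membership tests. We also must not charge rounds to each individual shrink inside a constraint. I would handle this by storing each $P_v$ as a bit-vector indexed by $D$, so membership tests are $O(1)$. With that, the crude per-round accounting already meets the stated bound, and no queue-based (AC-3 style) refinement is needed.
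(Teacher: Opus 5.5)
Your proposal is correct and follows essentially the same route as the paper: the same domain-shrinking propagation loop, the same invariant that every solution survives each update (the paper's appendix argues this via the sequence $I_0,\dots,I_w$), and the same bound of at most $|V||D|$ effective updates multiplied by an $O(mr^2)$ sweep over all constraints. Two points the paper leaves implicit are made explicit in your version: that every sweep except the last contains an effective update, which is what justifies multiplying these two bounds, and that the fixpoint is indeed 1-minimal.
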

\begin{proof}
Below we present the desired algorithm known as the \textbf{1-minimality algorithm}:
\begin{enumerate}
    \item Initialize $P_x = D$ for all $x\in V$.
    \item  \label{domain_shrinking} For each  constraint $\ccl{(s_1,\ldots,s_k),  R}\in \Cc$ and $i\in [k]$, set
 $
 P_{s_i}'= \pi_i (R\cap\prod_{j \in [k]} P_{s_j}). 
 $
 If  $P_{s_i}'\subsetneq P_{s_i}$, then let  $P_{s_i}=P_{s_i}'$. 
\item If no set $P_x$ where $x\in V$, is updated, then output $I' = (V, D, \Cc\cup\{P_x : x\in V\})$. Else, return to Step $(2)$. 
\end{enumerate}
A proof of correctness can be found in Appendix~\ref{S:Thm-1-minimality}.
\end{proof}

\begin{remark}
    Note that the  sizes of the CSP instances in the theorem above are computed from their explicit representations.
\end{remark}

%% Call the algorithm in the theorem the \textbf{1-minimality algorithm}. 
%% \anote{M.b. name it in the theorem?} (\bakhnote{Done. we will remove this sentence.})
The 1-minimality algorithm in the proof of Theorem~\ref{Thm-1-minimality}, given a CSP instance $I$, produces a 1-minimal CSP instance $I'$
that is equivalent to $I$. Because $I'$ is equivalent to $I$, it is clear that if some $P_x=\emptyset$ in $I'$, then $I$ has no solution since $I'$ does not. In this case we say that the 1-minimality algorithm \textbf{refutes} the instance $I$. However, the converse is not always true. 
%Namely, the condition $P_x\neq \emptyset$, where $x\in V$,  does not imply satisfiability of $I'$ (and hence $I$).

\begin{example}
   Consider the $2$-coloring $\CSP(\neq_2)$. The instance $I_{K_3} = (\{x,y,z\}, \{0,1\}, \{x\neq_2 y, y\neq_2 z, z\neq_2 x\})$ is unsatisfiable because $K_3$ is not $2$-colorable. However, the $1$-minimality algorithm cannot update any domain constraint and returns the trivial $1$-minimal instance with $P_v = \{0,1\}$ for all $v\in\{x,y,z\}$. 
   %% However, $K_3$ is obviously not $2$-colorable.
\end{example}

Now we single out the languages for which the 1-minimality algorithm provides a solution to the CSP instances:

\begin{definition}[Width 1] \cite{barto2009constraint,16_jl&c_boundedwithhierarchy}
    We say that a constraint language  is of \textbf{width 1} if for any unsatisfiable instance  $I$ of the language the $1$-minimality algorithm always refutes $I$.
\end{definition}

\subsection{1-minimality in automata setting}\label{sub:1minauto}

{\bf Inessential expansions:} Before we start this section,  we slightly expand the definition of $\AutCSP$.
\ Consider the language   $\Gamma_{\mathcal A}$, where $\mathcal A$ is an automaton over $D$.   Let 
$\mathcal Q=\{Q_1, \ldots, Q_k\}$ be a fixed set of relations on $D$.
The CSP instances over the language $\Gamma_{\mathcal A}(\mathcal Q)=\Gamma_{\mathcal A} \cup \mathcal Q$ can still be considered as instances of $\AutCSP(\mathcal A)$ because the set $\mathcal Q$ is fixed, and the representations of the relations $Q_1$, $\ldots$, $Q_k$ can be given a priori.
Hence, we call the language 
$\Gamma_{\mathcal A}(\mathcal Q)=\Gamma_{\mathcal A} \cup \mathcal Q$ an   
{\bf inessential expansion of $\Gamma_{\mathcal A}$}, and refer to the CSP instances over $\Gamma_{\mathcal A}(\mathcal Q)$ as the $\AutCSP(\mathcal A)$ instances. This convention is assumed for the rest of this section.

The goal of this subsection is twofold. First,  we prove Theorem \ref{Thm-1-minimality} in the setting of automatic constraint languages $\Gamma_{\Ac}$. 
%Namely, we want to  recast the 1-minimality algorithm for automatic constraint languages. 
The key here is that the $1$-minimality algorithm in  Theorem \ref{Thm-1-minimality} iterates over all possible tuples in the constraint relations as the CSP instances in the theorem are given explicitly. 
%(see Remark after Theorem \ref{Thm-1-minimality}). 
This might blow up the running time exponentially in the size of the representations of the $\AutCSP$ constraints. To avoid this, % exponential blow-up, 
we show an alternative way to establish $1$-minimality for any $\AutCSP$ instance. Second, we show that the satisfiability of an instance $I=(V,D,\Cc,\Ac)$ can be decided in polynomial time if 
the constraint language $\Gamma_\Ac$ has width $1$.

The core of the $1$-minimality algorithm 
is finding domains of individual variables. 
Hence, we first handle projections in the setting of automatic constraint languages $\Gamma_{\Ac}$. 
%Recall that the size of the constraint $\ccl{(s_{1},\ldots, s_{k}),  R_k}$, as explained in Section \ref{SS:size-aut-constraint}, is $O(k)$ (that can be exponentially smaller than $|R_k|$). 

\begin{lemma}\label{lemma:auto_1min_projection}
There exists a polynomial-time algorithm that, given 
    an automaton $\Ac$ over a finite  domain $D$,
    a constraint $\ccl{(s_{1},\ldots, s_{k}),  R_k}$, the domain constraints $P_{s_1}$, $\ldots$, $P_{s_k}$, and $i\in[k]$, computes  
    the projection $\pi_i(R_k \cap \prod_{j\in [k]} P_{s_j})$. 
    The algorithm runs in time $ O(k|D||\Ac|)$.
    \end{lemma}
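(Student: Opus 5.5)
The plan is to reduce the projection computation to $|D|$ calls of the set-valued pattern search of Lemma~\ref{procedure:search_pattern}. The first observation is that a constraint $\ccl{(s_1,\ldots,s_k),R_k}$ may contain repeated variables. However, the object to compute is $\pi_i(R_k\cap\prod_{j\in[k]}P_{s_j})$. This is defined purely positionwise: a tuple $t\in R_k$ qualifies iff $t(j)\in P_{s_j}$ for every $j$. Equalities between repeated variables play no role. (This matches the classical 1-minimality algorithm of Theorem~\ref{Thm-1-minimality}, Step (2), where the same expression is used.) Hence the relevant set-valued pattern is simply
\[
\phi(j)=P_{s_j}\qquad (j\in[k]),
\]
and $R_k\cap\prod_j P_{s_j}=\mathcal L(\Ac)\cap\prod_{j=1}^k\phi(j)$, since $R_k=\mathcal L(\Ac)\cap D^k$.

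Next, for each candidate value $d\in P_{s_i}$ (at most $|D|$ of them), I would define the modified pattern $\phi_d$ by $\phi_d(i)=\{d\}$ and $\phi_d(j)=P_{s_j}$ for $j\ne i$. Then
\[
d\in\pi_i\Bigl(R_k\cap\prod_{j\in[k]}P_{s_j}\Bigr)\iff \mathcal L(\Ac)\cap\prod_{j=1}^k\phi_d(j)\neq\emptyset .
\]
The forward direction holds because any witness tuple $t$ with $t(i)=d$ lies in the right-hand product. The backward direction holds because any tuple in the right-hand set has length $k$, is accepted by $\Ac$, and respects all $P_{s_j}$ with $i$-th coordinate $d$. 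Each nonemptiness test is exactly the decision problem of Lemma~\ref{procedure:search_pattern}, which runs in $O(k|\Ac|)$ time. The algorithm outputs the set of those $d$ for which the test succeeds. Values $d\notin P_{s_i}$ are skipped, since $\phi_d$ would not restrict to $P_{s_i}$; alternatively one may set $\phi_d(i)=\{d\}\cap P_{s_i}$, which is empty and makes the test fail automatically.

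The total cost is at most $|D|$ invocations of an $O(k|\Ac|)$ procedure, plus $O(k)$ time to build each pattern. This gives $O(k|D||\Ac|)$, as claimed. The only point needing care is to confirm that Lemma~\ref{procedure:search_pattern} really gives a bound linear in $k|\Ac|$ independent of $|D|$ per call. It is a layered reachability over positions $1,\ldots,k$ that uses only transitions labelled by letters in $\phi(j)$, and each layer touches each transition of $\Ac$ at most once. So I expect no genuine obstacle beyond the reduction itself; the main conceptual step is recognizing that fixing one coordinate to a singleton turns projection membership into a pattern-search emptiness test.
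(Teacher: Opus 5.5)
Your proposal is correct and takes the same approach as the paper: for each $d\in P_{s_i}$ you fix position $i$ to the singleton $\{d\}$, keep $P_{s_j}$ at the other positions, and call the set-valued pattern search of Lemma~\ref{procedure:search_pattern}, which gives at most $|D|$ calls of cost $O(k|\Ac|)$ each. Your added remark that repeated variables play no role, because the projection is defined positionwise, is accurate and agrees with Step~(2) of the classical 1-minimality algorithm.
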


\begin{proof}
    Note that $\pi_i(R_k \cap \prod_{j\in [k]} P_{s_j}) \subseteq P_{s_i}$. Since $P_{s_i} \subseteq D$, it suffices to check for every element $d\in P_{s_i}$ the  membership of the element $d$ in the set $\pi_i(R_k \cap \prod_{j\in [k]} P_{s_j})$. In the 1-minimality algorithm, this is done by running through every tuple $t\in R_k$. In our case, we invoke the procedure from Lemma \ref{procedure:search_pattern}. We construct the following pattern 
    \[
        \phi(j) = 
        \begin{cases}
            \{d\}, \text{ if } j = i
            \\ P_{s_j}, \text{ otherwise} 
        \end{cases}
    \]
    By using the procedure from Lemma \ref{procedure:search_pattern} with parameters  $\Ac$ and $\phi$, one can find such a desired tuple if it  exists. Note that we iterate over all elements in the set $P_{s_i}$, and each iteration requires $O(k|\Ac|)$ time to answer the membership problem with pattern $\phi$. Thus, it takes time $O(k|P_{s_i}||\Ac|) = O(k|D||\Ac|)$ to compute the full projection.
\end{proof}

 With Lemma \ref{lemma:auto_1min_projection} in hand, we now show that the $1$-minimality algorithm runs in polynomial time for automatic constraint languages.
 The theorem builds an inessential expansion of $\Gamma_{\mathcal A}$. 

\begin{theorem}\label{thm:1-min autcsp}
There exists an algorithm that, given an $\AutCSP$ instance $I=(V,D,\Cc,\Ac)$, constructs an equivalent 
$1$-minimal $\AutCSP$ instance $I'=(V,D,\Cc\cup\{P_x:x\in V\},\Ac)$ in time 
$O(|V||D|^2|\Cc|^2|\Ac|)$.
\end{theorem}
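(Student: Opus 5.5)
The plan is to run the classical 1-minimality algorithm from the proof of Theorem~\ref{Thm-1-minimality} verbatim. The only change is in Step~(2): the projection $\pi_i(R_k\cap\prod_{j\in[k]}P_{s_j})$ is computed by Lemma~\ref{lemma:auto_1min_projection} instead of by enumerating the tuples of $R_k$. Concretely, we initialize $P_x=D$ for every $x\in V$. We then repeatedly sweep over all constraints $\ccl{(s_1,\dots,s_k),R_k}\in\Cc$ and all positions $i\in[k]$. At each step we compute $P'_{s_i}$ with Lemma~\ref{lemma:auto_1min_projection} and replace $P_{s_i}$ by $P'_{s_i}$ whenever the new set is strictly smaller. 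We stop after a full sweep in which nothing changes, and output $I'=(V,D,\Cc\cup\{P_x:x\in V\},\Ac)$. This is an instance over an inessential expansion of $\Gamma_\Ac$, since the unary relations on $D$ form a fixed finite set.

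Correctness has two parts, equivalence and 1-minimality. For equivalence, we show by induction on the number of updates that every solution $\varphi$ of $I$ satisfies $\varphi(x)\in P_x$ throughout. Indeed, if $\varphi(s_j)\in P_{s_j}$ for all $j$, then $\varphi(\bs)\in R_k\cap\prod_j P_{s_j}$, so $\varphi(s_i)$ lies in the projection. Hence adding the $P_x$ loses no solutions, and clearly it adds none. For 1-minimality, termination means that the last sweep changed nothing. Since the computed projection is always a subset of $P_{s_i}$, no change means $\pi_i(R_k\cap\prod_jP_{s_j})=P_{s_i}$ for every constraint and position, which is exactly the definition. If some constraint has a repeated variable $x$ in $\bs$, the lemma still applies: the pattern puts $P_x$ at each occurrence and $\{d\}$ at position $i$. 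The projection definition in the 1-minimality notion is taken coordinatewise in the same way, so nothing changes. I would note this explicitly, since it is the one place where the automatic setting differs subtly.

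The running time comes from a potential argument. Each $P_x$ can shrink at most $|D|$ times, so there are at most $|V||D|$ updates in total. Hence there are at most $|V||D|+1$ sweeps, because every sweep except the last performs at least one update. One sweep makes, for each constraint of arity $k$, $k$ calls to Lemma~\ref{lemma:auto_1min_projection}, each costing $O(k|D||\Ac|)$. Summing over constraints gives $\sum k^2|D||\Ac|\le |\Cc|^2|D||\Ac|$, where $|\Cc|=\sum|\bs|$. Multiplying by the number of sweeps yields $O(|V||D|^2|\Cc|^2|\Ac|)$, as claimed.

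The main obstacle is really just making sure the projection step never enumerates $R_k$, since $R_k$ may be exponentially large. Lemma~\ref{lemma:auto_1min_projection} already resolves this. The remaining care is in the bookkeeping: the number of sweeps is bounded by total shrinkage rather than by the number of tuples, and the cost bound uses the automatic size $|\Cc|$ instead of the explicit instance size.
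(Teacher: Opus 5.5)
Your proposal is correct and follows the paper's proof: run the classical 1-minimality algorithm with projections computed by Lemma~\ref{lemma:auto_1min_projection}, bound the total number of domain shrinkings by $|V||D|$, and charge $O(|\Cc|^2|D||\Ac|)$ per pass over the constraints. Your bookkeeping is slightly more careful than the paper's, since you bound the number of sweeps by $|V||D|+1$ rather than conflating sweeps with updates, and you spell out two points the paper leaves implicit: that 1-minimality holds at termination, and that repeated variables are handled coordinatewise.
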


\begin{proof}
Our algorithm follows the same lines as the 1-minimality algorithm in Theorem \ref{Thm-1-minimality}, where the projection procedure uses the process from Lemma \ref{lemma:auto_1min_projection}:
\begin{enumerate}
    \item Initialize $P_x=D$ for all $x\in V$.
    \item \label{step:update} For each constraint $\ccl{(s_1,\ldots,s_k),R}\in\Cc$ and each $i\in[k]$, compute  
    $P'_{s_i}=\pi_i\big(R\cap\textstyle\prod_{j\in[k]}P_{s_j}\big)$
    using Lemma~\ref{lemma:auto_1min_projection}.  \\
    If $P'_{s_i}\subsetneq P_{s_i}$, replace $P_{s_i}$ with $P'_{s_i}$.
    \item If some $P_x$ is updated, then return to Step~(\ref{step:update}); else, output 
    $I'=(V,D,\Cc\cup\{P_x:x\in V\},\Ac)$.  
\end{enumerate}
%Just like in the proof of Theorem \ref{Thm-1-minimality}, 
The algorithm  produces a sequence $I_0$, $I_1$,$\ldots$, $I_w$ of instances while preserving equivalence throughout its execution.  
%Let $I_0 = I$ and $I_w = I'$ denote the initial and final instances, respectively.  Each intermediate instance $I_t$ is obtained from $I_{t-1}$ by shrinking the domain of exactly one variable, while ensuring that no value participating in any solution is removed.  Thus, every update preserves all valid solutions of the previous instance.  
Since each $P_x$ can shrink at most $|D|$ times, the total number of updates is bounded by $O(|V||D|)$.  In addition, for each constraint $\ccl{\bs, R_i}$,  Lemma~\ref{lemma:auto_1min_projection} is applied $|\bs|$ times, giving a total cost of $O(|\bs|^2|D||\Ac|)$.
Processing all constraints in one update step costs $O(|\Cc|^2|D||\Ac|)$  where  $|\Cc| = \sum_{\ccl{\bs, R}\in \Cc} |\bs| $. 
Multiplying the $O(|V||D|)$ bound on the number of updates gives the runtime 
$O(|V||D|^2|\Cc|^2|\Ac|)$.
\end{proof}
    % The equivalence is unchanged, we now argue complexity.
    % The execution of $1$-minimality algorithm can be depicted as a sequence of intermediate instances $\{I_t\}_{t=0}^w$ such that 
    % \begin{itemize}
    %     \item $I_0 = I$ and $I_w = I'$
    %     \item Each update from $I_t$ to $I_{t+1}$ shrinks the domain of exactly one variable.
    %     \item For any $0\le t \le w$, $I_t$ is solution equivalent to $I$.
    % \end{itemize}
    % %  Let $\phi_{t}$ be the domain constraints $\{P_x:x\in V\}$ of $I_t$.
    % % Define the following ordering on domain constraints $\phi_1,\phi_2: V \ar \Pc(D)$ by:
    % % \[
    % %     \phi_1 \le_V \phi_2 \iff \forall x\in V: \phi_1(x) \subseteq \phi_2(x)
    % % \]
    % % Each update is actually strict: $\phi_{t+1} <_V \phi_t$ since exactly one domain constraint would be shrunk. Hence, we conclude that $w$, the number of updates, is bounded by $O(|V||D|)$. 
    
    % For any given constraint $\ccl{(s_1,\ldots,s_k), R}\in \Cc$, the projection procedure \ref{lemma:auto_1min_projection} is invoked for each variable $s_i$. Processing such single constraint costs $O(k^2|D||\Ac|)$. 
    % During a single update, the algorithm iterates over all constraints, so the per-update cost is $O(|\Cc|^2|D||\Ac|)$
    % Consequently, the total running time to establish 1-minimality is $O(|V||D|^2|\Cc|^2|\Ac|)$.

Note that the instance $I'$ in the theorem is over an inessential expansion of $\Gamma_{\mathcal A}$. As an immediate corollary, we get the following:
%of this theorem, we obtain the following result:
%a polynomial time algorithm solving any automatic CSP instance over a constraint language of  width $1$:
\begin{theorem}\label{Thm:automatic-withd1}
    There exists an algorithm, which on any instance $I = (V, D, \Cc, \Ac)$ of width $1$ language $\Gamma_{\Ac}$, decides the satisfiability of $I$ in time $O(|V||D|^2 |\Cc|^2|\Ac|)$.     \qed 
\end{theorem}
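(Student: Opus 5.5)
The plan is to run the automatic 1-minimality algorithm of Theorem~\ref{thm:1-min autcsp} and then use the definition of width $1$ to read off satisfiability from its output. Given $I=(V,D,\Cc,\Ac)$, we first compute in time $O(|V||D|^2|\Cc|^2|\Ac|)$ the equivalent $1$-minimal instance $I'=(V,D,\Cc\cup\{P_x:x\in V\},\Ac)$. We then check whether $P_x=\emptyset$ for some $x\in V$, which takes $O(|V||D|)$ time. If some $P_x$ is empty, we answer ``unsatisfiable''; otherwise we answer ``satisfiable''. The total running time is dominated by the first step and is therefore $O(|V||D|^2|\Cc|^2|\Ac|)$, as claimed.

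Correctness rests on two facts. First, $I'$ is equivalent to $I$. Each update step removes from $P_{s_i}$ only values $d$ for which no tuple of $R\cap\prod_j P_{s_j}$ has $d$ in position $i$. Such values occur in no solution of the current instance, so by induction over the sequence $I_0,\dots,I_w$ the solution set never changes. Hence an empty $P_x$ correctly certifies unsatisfiability. Second, if $I$ is unsatisfiable, then by the definition of width $1$ the $1$-minimality algorithm refutes $I$, meaning it produces some $P_x=\emptyset$. Contrapositively, if no $P_x$ is empty, $I$ is satisfiable.

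The one step requiring care, and the main obstacle, is ensuring that the automatic algorithm of Theorem~\ref{thm:1-min autcsp} ``is'' the $1$-minimality algorithm referred to in the width $1$ definition. That definition is phrased for the classical algorithm of Theorem~\ref{Thm-1-minimality} acting on the explicit instance. The plan is to observe that the two algorithms perform the same update rule, $P'_{s_i}=\pi_i(R\cap\prod_j P_{s_j})$; Lemma~\ref{lemma:auto_1min_projection} only computes this projection differently, through the pattern search of Lemma~\ref{procedure:search_pattern}. The fixpoint produced is the greatest family of domains that is stable under all these updates, and it does not depend on the order in which updates are applied. This holds because the update operator is monotone: shrinking some domains can only shrink the projections. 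Hence both algorithms reach the same fixpoint $\{P_x\}$, and the automatic algorithm refutes $I$ exactly when the classical one does. A further minor point is that $I'$ lies over the inessential expansion $\Gamma_\Ac(\mathcal Q)$ with unary relations. This does not matter, because the width $1$ hypothesis is applied only to the original instance $I$ over $\Gamma_\Ac$.
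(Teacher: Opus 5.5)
Your proposal is correct and follows the paper's route. The paper treats this theorem as an immediate corollary of Theorem~\ref{thm:1-min autcsp}: run the automatic $1$-minimality algorithm and answer ``unsatisfiable'' exactly when some $P_x$ is empty, which the width~$1$ hypothesis justifies. Your extra order-independence argument via monotonicity of the update operator is valid but not strictly needed, since the automatic algorithm performs the same updates in the same order as the classical one and differs only in how each projection is computed.
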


We  now apply Theorem \ref{Thm:automatic-withd1} 
to automatic constraint languages 
that admit a semilattice operation, say $\wedge,$ as polymorphisms. 
Then after enforcing $1$-minimality, an automatic CSP instance is satisfiable precisely when every unary domain is nonempty. 
%The next theorem makes this formal.

\begin{theorem}\label{Tm:}
Let $I=(V,D,\Cc,\Ac)$ be an $\AutCSP$ instance such that $\Ac \in \Kc_{\wedge}$, and let $I'=(V,D,\Cc\cup\{P_x:x\in V\},\Ac)$ be an equivalent $1$-minimal $\AutCSP$ instance. 
Then $I$ is satisfiable if and only if every unary domain constraint $P_x$ in $I'$ is nonempty.
%\anote{This can be replaced with a more general statement Víctor Dalmau, Justin Pearson:Closure Functions and Width 1 Problems. CP 1999: 159-173.  However, I'm not sure it's worth it}
\end{theorem}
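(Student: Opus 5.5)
One direction is immediate. The $1$-minimal instance $I'$ produced by Theorem~\ref{thm:1-min autcsp} is equivalent to $I$. So if $I$ has a solution $\varphi$, then $\varphi(x)\in P_x$ for every $x\in V$, and every $P_x$ is nonempty.

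For the converse, assume all $P_x$ are nonempty and build a solution explicitly, in the spirit of the unit propagation example for \textsc{Horn-3-SAT}. Here $\wedge$ denotes a semilattice operation on $D$: binary, idempotent, commutative and associative. Each $P_x$ is a nonempty finite subset of $D$, so we set
\[
\psi(x)=\bigwedge_{d\in P_x} d ,
\]
which is well defined by associativity and commutativity. We then show that $\psi$ satisfies every constraint $\ccl{\bs,R_k}$ of $I$, where $\bs=(s_1,\dots,s_k)$ and $R_k=\mathcal L(\Ac)\cap D^k$.

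The key step uses $1$-minimality. Fix such a constraint and let $R'=R_k\cap\prod_{j\in[k]}P_{s_j}$. For every $i\in[k]$ we have $\pi_i R'=P_{s_i}$. Both $R_k$ and each unary $P_x$ are closed under $\wedge$. The first holds because $\Ac\in\Kc_\wedge$. For the second, $1$-minimality only ensures $P_x=\pi_x(R\cap\cdots)$, and a projection of a $\wedge$-closed relation intersected with a product of $\wedge$-closed sets is again $\wedge$-closed. One proves that all intermediate $P_x$ are $\wedge$-closed by induction along the run of the algorithm: they start as $D$, and each update replaces $P_{s_i}$ by such a projection. Hence $R'$ is $\wedge$-closed and nonempty (because $P_{s_i}\neq\emptyset$ and $\pi_iR'=P_{s_i}$), so it has a least element $m=\bigwedge_{r\in R'}r\in R'$. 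Coordinatewise, $m(i)=\bigwedge \pi_i R'=\bigwedge P_{s_i}=\psi(s_i)$. Therefore $(\psi(s_1),\dots,\psi(s_k))=m\in R_k$.

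The main obstacle I expect is repeated variables in $\bs$. If $s_i=s_j$, the tuple $m$ might a priori not respect that equality. The computation above shows this is harmless: $m(i)=\bigwedge P_{s_i}=\bigwedge P_{s_j}=m(j)$, because both coordinates depend only on the variable. What does need care is the precise meaning of $1$-minimality. I will read it positionally, as $\pi_i(R_k\cap\prod_j P_{s_j})=P_{s_i}$ for each position $i$, which is exactly what the algorithm of Theorem~\ref{thm:1-min autcsp} enforces through Lemma~\ref{lemma:auto_1min_projection}. Here $\prod_j P_{s_j}$ ranges over positions, so it does not force equal values at repeated variables. The least-element argument nevertheless produces a tuple consistent with $\psi$. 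The inductive closure of the $P_x$ under $\wedge$ should also be verified, but it is routine.
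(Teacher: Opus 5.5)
Your proof is correct and follows the paper's argument: set $\psi(x)=\bigwedge P_x$, and for each constraint take the meet $m$ of $R_k\cap\prod_{j}P_{s_j}$, whose $i$-th coordinate is $\bigwedge P_{s_i}$ by $1$-minimality and which lies in $R_k$ because $R_k$ is closed under $\wedge$. One simplification: you only need $m\in R_k$, which already follows from $R'\subseteq R_k$ and the $\wedge$-closure of $R_k$. So the induction showing each $P_x$ is $\wedge$-closed (which the paper also appeals to) can be dropped, and the argument then applies to any equivalent $1$-minimal $I'$, not only the one produced by the algorithm.
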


\begin{proof}
Let us fix $\Ac\in \Kc_{\wedge}$.
If some $P_x=\emptyset$, then variable $x$ has no admissible value in the $1$-minimal instance $I'$, and then $I$ is unsatisfiable.  
Now assume that $P_x\neq\emptyset$ for all $x\in V$. For each  $x\in V$, define $a_x=\bigwedge P_x$, the meet of all elements in $P_x$ under the semilattice operation. Then the function $x\mapsto a_x$ satisfies every constraint of the instance  $I'$. Indeed, with this function, each of the constraints $\langle(s_1,\dots,s_k),R\rangle\in\Cc$ is satisfied by the tuple $(a_{s_1},\dots,a_{s_k})$, since $\wedge$ is a polymorphism of $\Gamma_{\Ac}$.
%Fix a constraint $C=\langle(s_1,\dots,s_k),R\rangle\in\Cc$ and let $T=R\cap\prod_{j\in[k]}P_{s_j}$.   By the definition of the sets $P_{s_j}$, $T$ is nonempty.   Let $t'=\bigwedge_{t\in T}t$, applying the meet coordinatewise.   Since $R$ is closed under $\land$, we have $t'\in R$.   For each $i$, the projection $\pi_i(T)=P_{s_i}$, so $t'(i)=\bigwedge_{t\in T}t(i)=\bigwedge_{p\in P_{s_i}}p=a_{s_i}$.   Thus $(a_{s_1},\dots,a_{s_k})\in R$, and since this holds for every constraint, the mapping $x\mapsto a_x$ is a solution of $I'$ and hence of $I$.
\end{proof}
For this theorem, it is important that in the inessential expansion of $\Gamma_{\mathcal A}$ 
by $\{P_x: x\in V\}$ the predicates $P_x$ are closed under $\wedge$.

\begin{corollary}\label{cll:semi}
There is a polynomial-time algorithm that, given
an $\AutCSP$ instance $I =(V,D,\Cc,\Ac)$ where $\Ac\in \Kc_{\wedge}$, decides if the instance $I$ has a solution. 
\qed 
\end{corollary}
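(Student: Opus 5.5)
The plan is to chain the two preceding results. Given an instance $I=(V,D,\Cc,\Ac)$ with $\Ac\in\Kc_{\wedge}$, first run the algorithm of Theorem~\ref{thm:1-min autcsp}. This produces the equivalent $1$-minimal instance $I'=(V,D,\Cc\cup\{P_x:x\in V\},\Ac)$ in time $O(|V||D|^2|\Cc|^2|\Ac|)$, which is polynomial in the size of the automatic representation. The only subroutines involved are the projections of Lemma~\ref{lemma:auto_1min_projection}, which in turn rest on the pattern search of Lemma~\ref{procedure:search_pattern}. No tuple of any $R_k$ is ever listed explicitly, so the exponential succinctness of the instance causes no blow-up.

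Second, scan the unary domains $P_x$ and answer ``satisfiable'' exactly when all of them are nonempty. Correctness is the content of Theorem~\ref{Tm:}. If some $P_x=\emptyset$, then $I'$, and hence $I$, is unsatisfiable. Otherwise the assignment $x\mapsto\bigwedge P_x$ satisfies every constraint, because $\wedge$ is a polymorphism of $\Gamma_{\Ac}$ and the unary predicates are trivially $\wedge$-closed. If a witness is wanted, computing $\bigwedge P_x$ costs $O(|D|)$ per variable.

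The one step I would check carefully is the invocation of Theorem~\ref{Tm:}: $I'$ lives over the inessential expansion $\Gamma_{\Ac}\cup\{P_x\}$, and the meet argument needs each constraint relation restricted to $\prod P_{s_j}$ to be nonempty with projections equal to $P_{s_j}$. Nonemptiness and the projection condition come from $1$-minimality when all $P_x\neq\emptyset$. To conclude that the coordinatewise meet equals $(a_{s_1},\dots,a_{s_k})$, use closure of $R$ under $\wedge$ and the fact that the meet of finitely many elements is order-independent, so iterated binary meets give the full meet.

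Apart from that check, the corollary is immediate. The total running time is dominated by the $1$-minimality phase, $O(|V||D|^2|\Cc|^2|\Ac|)$.
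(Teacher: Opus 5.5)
Your proposal is correct and follows the paper's route: the corollary is Theorem~\ref{thm:1-min autcsp} followed by the nonemptiness test justified by Theorem~\ref{Tm:}, and your meet over $R\cap\prod_j P_{s_j}$ is the right way to make that theorem's one-line argument rigorous. One small correction: for a general semilattice operation on $D$ the sets $P_x$ are not \emph{trivially} $\wedge$-closed (they are $\wedge$-closed by induction, as projections of intersections of $\wedge$-closed relations), but you do not need this, because a solution of $I$ only has to satisfy the constraints in $\Cc$.
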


\section{Languages with a majority polymorphism}\label{finitemaj}

In this section, we investigate the class of automata associated with a majority (near-unanimity) polymorphism. Such a class contains many problems (e.g. \textsc{2-SAT}) where the $1$-minimality algorithm in the classical setting is not enough to solve them.

\subsection {Majority polymorphism} \label{SubS:Majority Polymorphisms}

We start with the following definition:
\begin{definition}
An operation $g:D^3\rightarrow D$ is a \textbf{majority operation} if   for 
all $x,y\in D$, \  $g(x,y,y) = g(y,x,y) = g(y,y,x) = y$. 
\end{definition}
Let  $\Ac$ be an automaton  over $D$ such that the language $\Gamma_{\Ac}$ has a majority polymorphism $g$.  
For the operation  $g$ on $D$, consider the set $\Inv(g)$ as in Definition
 \ref{dfn:Pol-Inv}.
%\[\Inv(g) = \{R \subseteq D^n \mid  n\in \Nat, \ g  \  \text{is a polymorphism of $R$} \}.\]
It is well-known that $\CSP(\Inv(g))$ is decidable in polynomial time \cite{Barto14:constraint,16_jl&c_boundedwithhierarchy}. Our goal is to address this result for $\AutCSP(\Ac)$. To do so, we start with the following lemma.

%We start with a lemma that characterizes relations $R$ for which the majority function $g$ is a polymorphism. 

\begin{lemma}
\label{majority_biprojection}
 Let $R\in\Inv (g)$ and $R\subseteq D^n$. Then for all $r\in D^n$ we have  $ r\in R$ if and only if $\pi_{\{i,j\}} r \in \pi_{\{i,j\}} R$ for any $i,j\in [n]$, $i\ne j$. 
\end{lemma}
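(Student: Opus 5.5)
The forward direction is immediate: if $r\in R$ then $\pi_{\{i,j\}} r\in\pi_{\{i,j\}}R$ for every pair $i\neq j$ by the definition of projection. The content is the converse. I would prove a stronger claim by induction on $k$, for $2\le k\le n$: for every $I\subseteq[n]$ with $|I|=k$ we have $\pi_I r\in \pi_I R$. The case $k=2$ is the hypothesis, and the case $k=n$ with $I=[n]$ gives $r\in R$. For $n=1$ the pairwise condition is vacuous, so I would either assume $n\ge 2$ (the intended reading) or treat $n=1$ separately under the convention that the statement concerns $n\ge2$.

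For the inductive step, fix $I$ with $|I|=k+1\ge 3$ and choose three distinct positions $a,b,c\in I$. By the induction hypothesis applied to $I\setminus\{a\}$, $I\setminus\{b\}$ and $I\setminus\{c\}$, there are tuples $t_a,t_b,t_c\in R$ such that $t_a$ agrees with $r$ on $I\setminus\{a\}$, $t_b$ agrees with $r$ on $I\setminus\{b\}$, and $t_c$ agrees with $r$ on $I\setminus\{c\}$. Set $t=g(t_a,t_b,t_c)$, computed coordinatewise. Since $g$ is a polymorphism of $R$, we get $t\in R$.

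It remains to check that $t$ agrees with $r$ on all of $I$. At a position $p\in I\setminus\{a,b,c\}$, all three tuples equal $r(p)$, so $t(p)=g(r(p),r(p),r(p))=r(p)$ by the majority identity with $x=y$. At $p=a$, the tuples $t_b$ and $t_c$ both equal $r(a)$ while $t_a(a)$ is arbitrary. The majority identities $g(x,y,y)=y$ then give $t(a)=r(a)$, and the positions $b$ and $c$ are handled symmetrically using $g(y,x,y)=y$ and $g(y,y,x)=y$. Hence $\pi_I t=\pi_I r$, so $\pi_I r\in\pi_I R$, which completes the induction.

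There is no serious obstacle. The one point requiring care is that the inductive step needs three distinct positions in $I$, which is why the induction starts at $k=2$ and steps to $k+1\ge 3$. I also need the hypothesis to hold for all pairs, which it does as stated. Everything else is the standard Baker--Pixley argument.
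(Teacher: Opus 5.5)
Your proof is correct and follows the paper's argument: induction on $|U|\ge 2$, deleting three distinct coordinates from a $(k+1)$-element set, taking the three witnesses from the induction hypothesis, and combining them coordinatewise with the majority polymorphism $g$ (the standard Baker--Pixley step). Your caveat that the statement needs $n\ge 2$, since for $n=1$ the pairwise condition is vacuous, is valid and is left implicit in the paper.
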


\begin{proof}
%% \anote{Why do we need this proof? It's a well-known fact.} (\bakhnote{Yes, it is known in the CSP community but not in automata community. If the space is not enough we can move the proof to Appendix.})
The proof in the `only if' direction is obvious. 
We prove the other direction. Let\  $ r = (r_1, r_2, r_3, \ldots, r_n)$ be a tuple such that for all distinct $i,j \in [n]$  we have $(r_i,r_j)\in \pi_{\{i,j\}}R$. We want to show that $ r\in R$. We use induction on the cardinality of  $U\subseteq [n]$, where $|U|\geq 2$, and prove that $\pi_U r \in \pi_U R$. 
\begin{itemize}
        \item \textbf{Base case}:  This is clear when $|U|=2$.
        \item \textbf{Inductive step}: Let $W = \{w_i: 1\le i \le k+1\}\subseteq [n]$ be a subset with $k+1$ elements. Consider the following three $k$ element sets of coordinates: $U_i = W\setminus\{w_i\}$, where $i=1,2,3$. By induction hypothesis, for each $U_i$ there is a tuple $a_i\in R$ such that $\pi_{U_i}a_i = \pi_{U_i} r$. Let us write  
        $a_i=(a_{i,1},\ldots, a_{i,n})$. Note that for all $\ell\in U_i$, \   $r_\ell=a_{i,\ell}$. \ Since $g$ is a polymorphism,  $g(a_1, a_2, a_3) \in R$.
        For $b = g({a}_1, {a}_2, {a}_3)$ we have 
        $b(i)=g(a_{1,i},a_{2,i},a_{3,i})=g(r_i,r_i,r_i)=r_i$, 
        when $i\in U_1\cap U_2\cap U_3$; then, as $g$ is a majority operation, we have
    $b(w_1)=g(a_{1,w_1},a_{2,w_1},a_{3,w_2})=g(a_{1,w_1},r_{w_1},r_{w_1})=r_{w_1}$.
   Similarly $b(w_2)=r_{w_2},b(w_3)=r_{w_3}$. 
        Since $g$ is a polymorphism of the relation $R$, we conclude that  $b$ is the desired tuple, that is, $\pi_W  b = \pi_W r$.
%        \begin{align*}           a_1 &= (\_ \ , r_{w_2}, r_{w_3}, \ldots, r_{w_{k+1}}, \rule{0.5cm}{0.15mm}) \\           a_2 &= (r_{w_1}, \_ \ , r_{w_3}, \ldots, r_{w_{k+1}}, \rule{0.5cm}{0.15mm}) \\        a_3 &= (r_{w_1}, r_{w_2}, \_\ , \ldots, r_{w_{k+1}}, \rule{0.5cm}{0.15mm}) \\      g(a_1,a_2,a_3) &= (r_{w_1}, r_{w_2}, r_{w_3} \ , \ldots, r_{w_{k+1}}, \rule{0.5cm}{0.15mm})        \end{align*}   
\end{itemize}
We proved the lemma.     
\end{proof}

Lemma~\ref{majority_biprojection} implies the following. Let $R$ be an $n$-ary relation that admits the majority operation $g$ as a polymorphism.  For each $i, j\in [n]$, $i<j$, introduce a binary predicate $R_{i,j}$ defined by $\pi_{\{i,j\}}R$.
%such that $R_{i,j}(a,b)$ if and only if $(a,b)\in \pi_{i,j}(R)$
Then the relations $R_{i,j}$ determine $R$ in the following sense:
\begin{corollary} \label{Cor: Inv(Maj) are bijunctive}
 Let $R\subseteq D^n$ be a relation invariant under $g$. For any tuple $(a_1,\ldots, a_n)\in D^n$, the following is true: 
 \[
    R(a_1,\ldots, a_n) \Longleftrightarrow \bigwedge_{1\le i<j\leq n} R_{i,j}(a_i, a_j)
 \]
\end{corollary}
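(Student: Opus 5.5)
This follows directly from Lemma~\ref{majority_biprojection}, so the plan is simply to unfold definitions and check that the right-hand side of the equivalence says exactly what the lemma's hypothesis requires. Fix $R\subseteq D^n$ invariant under the majority operation $g$ and a tuple $a=(a_1,\ldots,a_n)\in D^n$. By definition, $R_{i,j}(a_i,a_j)$ holds if and only if $(a_i,a_j)\in\pi_{\{i,j\}}R$. Viewing the tuple $a$ as a function $[n]\to D$, this is the same as $\pi_{\{i,j\}}a\in\pi_{\{i,j\}}R$.

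For the forward direction, assume $R(a_1,\ldots,a_n)$, that is, $a\in R$. Then for every pair $i<j$, the projection $\pi_{\{i,j\}}a$ is by definition an element of $\pi_{\{i,j\}}R$. Hence every conjunct $R_{i,j}(a_i,a_j)$ holds. This direction needs no polymorphism at all.

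For the reverse direction, assume all conjuncts $R_{i,j}(a_i,a_j)$ with $i<j$ hold. Since $\pi_{\{i,j\}}=\pi_{\{j,i\}}$ as an index set, the unordered-pair condition of Lemma~\ref{majority_biprojection} holds for all distinct $i,j\in[n]$. Applying the lemma, which uses that $g$ is a majority polymorphism of $R$, gives $a\in R$.

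The only real point to address is a degenerate case: $n=1$, where the conjunction is empty. For $n\le 1$ the conjunction is vacuously true, so the equivalence as stated would require $R=D$, which need not hold. I will therefore state that the corollary is intended for $n\ge 2$, matching the base case $|U|=2$ in the proof of the lemma. Alternatively, I will note that for $n=1$ one adds the unary projection $\pi_{\{1\}}R$ as a conjunct. There is no substantive obstacle beyond this bookkeeping.
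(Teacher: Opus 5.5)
Your proof is correct and follows the paper's route exactly: the corollary is stated there as an immediate consequence of Lemma~\ref{majority_biprojection}, with the forward direction trivial and the reverse direction being the lemma itself. Your caveat about $n\le 1$ is also valid, and the paper overlooks it: for $n=1$ take, say, $R=\{d\}\subsetneq D$, which is invariant under every majority operation since $g(d,d,d)=d$; the empty conjunction then holds for every $a_1$ while $R(a_1)$ fails for $a_1\neq d$, so both the lemma and the corollary implicitly require $n\ge 2$ (or an added unary conjunct $\pi_{\{1\}}R$).
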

    
Assume that $R_n\in \Gamma_{\Ac}$. %; so,  $R=R_n$ for some $n$.  
For a pair $(a,b)\in D$ and the relation $R_{i,j}$, let us consider a partial function $\tau: [n] \partialar D$ such that $\tau(i)=a$, $\tau(j)=b$, and $\tau(k)$ is undefined for all other $k\in [n]$. By  Lemma \ref{procedure:search_pattern}, we can find if the partial function $\tau$ has an extension in $R$ in time $O(n |\Ac|)$.  This implies that we can compute the relation $R_{i,j}$ in time $O(n|\Ac| |D|^2)$ by trying every possible pair $(a,b)\in D^2$. Hence:
%Therefore, the running time of the algorithm from the corollary above can be compressed. Namely, the following is true:

\begin{corollary} \label{Cor:R(i,j)-raw}
 There exists an algorithm that, 
 given a finite automaton $\Ac \in \Kc_g$ and an integer $n$, outputs the relations $R_{i,j}$ such that
 for all $(a_1, \ldots, a_n)\in D^n$ we have the following:
 \begin{center}{
 $R_n(a_1,\ldots, a_n) \Longleftrightarrow \bigwedge_{i<j\leq n} R_{i,j}(a_i, a_j)$.}
 \end{center}
 The algorithm runs in time $O(n^3|D|^2||\Ac|)$.  \qed 
\end{corollary}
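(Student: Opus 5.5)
The plan is to combine Lemma~\ref{majority_biprojection} (in the form of Corollary~\ref{Cor: Inv(Maj) are bijunctive}) with the pattern-search procedure of Lemma~\ref{procedure:search_pattern}. The algorithm takes $\Ac\in\Kc_g$ and $n$, and for every pair $1\le i<j\le n$ computes $R_{i,j}=\pi_{\{i,j\}}R_n$ explicitly as a subset of $D^2$. Correctness then follows at once: since $\Ac\in\Kc_g$, the relation $R_n=\mathcal L(\Ac)\cap D^n$ lies in $\Inv(g)$. Corollary~\ref{Cor: Inv(Maj) are bijunctive} therefore gives $R_n(a_1,\ldots,a_n)\Leftrightarrow\bigwedge_{i<j}R_{i,j}(a_i,a_j)$, provided each computed $R_{i,j}$ really equals the projection. (For $n=1$ there are no pairs; this degenerate case is handled by computing $R_1$ directly, in $O(|D||\Ac|)$ time.)

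To compute a single $R_{i,j}$, I would iterate over all $(a,b)\in D^2$. For each pair, form the set-valued pattern $\phi:[n]\ar\Pc(D)$ given by $\phi(i)=\{a\}$, $\phi(j)=\{b\}$, and $\phi(k)=D$ otherwise, and call Lemma~\ref{procedure:search_pattern} to decide whether $\mathcal L(\Ac)\cap\prod_k\phi(k)\neq\emptyset$. By construction, $(a,b)\in\pi_{\{i,j\}}R_n$ holds exactly when such a tuple exists, so the pair is included in $R_{i,j}$ exactly when the test succeeds. Each test costs $O(n|\Ac|)$, so one pair $(i,j)$ costs $O(n|D|^2|\Ac|)$.

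Summing over the $\binom n2=O(n^2)$ pairs gives total time $O(n^3|D|^2|\Ac|)$, which is the claimed bound.

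There is no real obstacle here. The only point that needs care is checking that Lemma~\ref{procedure:search_pattern} handles singleton-valued positions correctly, so that the test reflects membership in the projection of $R_n$ itself, which consists of words of length exactly $n$, and not of words of some other length. This is guaranteed because the pattern has domain $[n]$ and the procedure searches only among tuples of that length.
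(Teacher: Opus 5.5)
Your proposal is correct and follows the paper's argument. The paper also computes each $R_{i,j}=\pi_{\{i,j\}}R_n$ by testing all $(a,b)\in D^2$ with the pattern-search lemma at cost $O(n|\Ac|)$ per test, gets correctness from the binary-projection characterization of majority-closed relations, and sums over the $O(n^2)$ index pairs to reach $O(n^3|D|^2|\Ac|)$. Your set-valued pattern (using $D$ at the free positions) matches that lemma's hypothesis more exactly than the paper's partial pattern does. Your separate treatment of $n=1$, where the empty conjunction is wrong unless $R_1=D$, is a small correction the paper omits.
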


%Now we  prove the following theorem:

\begin{theorem}\label{Thm:Maj-Transformation} There exists an algorithm that, given an $\AutCSP$ instance  $I=(V,D,\Cc,\Ac)$ where $\Ac \in \Kc_{g}$,  constructs a CSP instance  $I'=(V,D,\{\langle(x,y),P_{xy}\rangle : x,y\in V\})$ such that
\begin{enumerate}
\item Each $P_{xy} \subseteq D^2$ belongs to $\Inv(g)$.
\item A function $\psi:V\rightarrow D$ satisfies $I$ iff $\psi$ satisfies $I'$.
\item The algorithm runs in time $O(|\Cc|^3|\Ac||D|^2)$.
\end{enumerate}
\end{theorem}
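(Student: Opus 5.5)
The construction applies Corollary~\ref{Cor:R(i,j)-raw} to every constraint of $I$ and intersects the resulting binary relations variable-wise.

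\textbf{Construction.} Fix a constraint $C=\ccl{(s_1,\ldots,s_k),R_k}\in\Cc$. Since $\Ac\in\Kc_g$, the relation $R_k$ lies in $\Inv(g)$. By Corollary~\ref{Cor:R(i,j)-raw} we compute the binary projections $R^C_{i,j}=\pi_{\{i,j\}}R_k$ for all $1\le i<j\le k$ in time $O(k^3|D|^2|\Ac|)$. By Corollary~\ref{Cor: Inv(Maj) are bijunctive}, a tuple $(a_1,\ldots,a_k)$ lies in $R_k$ iff $(a_i,a_j)\in R^C_{i,j}$ for all $i<j$.

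For each ordered pair of variables $(x,y)$ we set
\[
P_{xy}=\bigcap\{\,R^C_{i,j} : C\in\Cc,\ s_i=x,\ s_j=y,\ i<j\,\}\cap\bigcap\{\,(R^C_{i,j})^{-1} : C\in\Cc,\ s_i=y,\ s_j=x,\ i<j\,\},
\]
with the empty intersection taken as $D^2$.

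When $x=y$, i.e.\ a variable repeats inside a constraint, $P_{xx}$ is a binary relation whose diagonal encodes the unary restriction. To keep the semantics correct, a pair $(x,x)$ is read as the constraint $(\psi(x),\psi(x))\in P_{xx}$. A cleaner option is to replace $P_{xx}$ by $\{(a,a): (a,a)\in P_{xx}\}$; this relation is still in $\Inv(g)$, because it is the intersection of a $g$-invariant relation with the equality relation, and $g$ is idempotent.

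\textbf{Invariance under $g$.} This is routine.
\begin{itemize}
\item Projections of $g$-invariant relations are $g$-invariant.
\item Inverses of $g$-invariant relations are $g$-invariant.
\item Intersections of $g$-invariant relations are $g$-invariant.
\item $D^2$ is $g$-invariant.
\end{itemize}
So every $P_{xy}$ belongs to $\Inv(g)$, which proves item~(1).

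\textbf{Equivalence.} Suppose $\psi$ satisfies $I$. For every constraint $C$, the tuple $\psi(\bs)$ lies in $R_k$. Hence each pair $(\psi(s_i),\psi(s_j))$ lies in $\pi_{\{i,j\}}R_k=R^C_{i,j}$. Therefore $\psi$ satisfies every $P_{xy}$, since each $P_{xy}$ is an intersection of such projections (or of their inverses).

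Conversely, suppose $\psi$ satisfies $I'$. For each constraint $C$ and each pair $i<j$, we have $(\psi(s_i),\psi(s_j))\in P_{s_is_j}\subseteq R^C_{i,j}$. This also holds when $s_i=s_j$, by the diagonal treatment above. Corollary~\ref{Cor: Inv(Maj) are bijunctive} then gives $\psi(\bs)\in R_k$. This proves item~(2).

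\textbf{Running time.} Summing $O(k^3|D|^2|\Ac|)$ over all constraints gives at most $O(|\Cc|^3|D|^2|\Ac|)$, since $\sum k^3\le(\sum k)^3$. The intersections involve at most $O(|\Cc|^2)$ binary relations over $D$, and each intersection costs $O(|D|^2)$. This proves item~(3).

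The only delicate point is the bookkeeping for repeated variables within a constraint and for the orientation of pairs. The rest follows directly from Corollaries~\ref{Cor: Inv(Maj) are bijunctive} and~\ref{Cor:R(i,j)-raw}.
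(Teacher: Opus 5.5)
Your construction of $P_{xy}$ as an intersection of binary projections, the closure argument for $\Inv(g)$, the equivalence via Corollary~\ref{Cor: Inv(Maj) are bijunctive}, and the $O(k^3|D|^2|\Ac|)$-per-constraint count are exactly the paper's proof. Your explicit use of $(R^C_{i,j})^{-1}$ and your diagonal reading of $\langle (x,x),P_{xx}\rangle$ only spell out what the paper's formula with $i\neq j$ leaves implicit.

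There is one case, however, that neither your argument nor the paper's covers: constraints of arity $1$.

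\textbf{Why the step fails.} Lemma~\ref{majority_biprojection} and Corollary~\ref{Cor: Inv(Maj) are bijunctive} are only valid for $n\ge 2$. For $n=1$ the conjunction over $i<j$ is empty, hence true for every $a_1$, whereas $R_1$ may be a proper subset of $D$.

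\textbf{Counterexample.} Take $D=\{0,1\}$ and $\mathcal L(\Ac)=\{0\}$. Then $R_1=\{0\}$, every other $R_k$ is empty, and $Maj$ is a polymorphism, so $\Ac\in\Kc_{Maj}$. Let $V=\{x\}$ with the single constraint $\langle (x),R_1\rangle$. Your construction gives $P_{xx}=D^2$, and $\psi(x)=1$ satisfies $I'$ but not $I$. So the step ``Corollary~\ref{Cor: Inv(Maj) are bijunctive} then gives $\psi(\bs)\in R_k$'' fails for $k=1$.

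\textbf{Repair.} Use the diagonal device you already introduced. For each unary constraint $\langle (x),R_1\rangle$, compute $R_1$ in $O(|D||\Ac|)$ time by running $\Ac$ on the one-letter words. Then intersect $P_{xx}$ with $\{(a,a): a\in R_1\}$, which lies in $\Inv(g)$ because $R_1$ does. Items (1)--(3) then go through unchanged.

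(A smaller bookkeeping point: to meet the bound in (3) literally, write out $P_{xy}$ only for pairs that occur together in some constraint, and leave the others implicitly equal to $D^2$. Listing all $|V|^2$ relations can cost more than $O(|\Cc|^3|\Ac||D|^2)$ when some variables occur in no constraint.)
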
\label{redutiontoCSP}
%\begin{theorem} There exists an algorithm that, given an automatic CSP instance  $I=(V,D,\Cc,\Ac)$,  constructs an equivalent CSP instance  $I'=(V,D,\{\langle(x,y),P_{xy}\rangle : x,y\in V\})\in\CSP(\Inv(g)),$ where for each $x,y\in V$, \[ P_{xy}=\bigcap\{\pi_{i,j}R : \langle(s_1,\dots,s_k),R\rangle\in\Cc,\; s_i=x,\; s_j=y,\; i\ne j\}. \] Moreover, the algorithm runs in time $O(|\Cc|^3|\Ac||D|^2)$.\end{theorem}
\begin{proof}
For each $x,y\in V$, $P_{xy}$ is defined as follows: 
\[ 
P_{xy}=\bigcap_{\langle(s_1,\dots,s_k),R\rangle\in\Cc,\; s_i=x,\; s_j=y,\; i\ne j} R_{i,j}. 
\]
For the first part, clearly $P_{xy}\in Inv(g)$. \ 
For the second part, let $\psi: V \ar D$ be a function. By Corollary \ref{Cor: Inv(Maj) are bijunctive}, we have the following:
%following series of equivalent conditions of $\psi$ being a solution of $I$:
    \begin{align*}
        \bigwedge_{\ccl{\bs,R}\in \Cc}\psi(\bs)\in R 
        \iff \bigwedge_{\ccl{\bs,R}\in \Cc} \bigwedge_{(s_i, s_j)\in \bs^2} (\psi(s_i),\psi(s_j))\in R_{i,j}   \\ 
        \iff \bigwedge_{(x,y)\in V^2}\bigwedge_{\substack{\ccl{(s_1,\ldots,s_k),R}\in \Cc \\ s_i  = x, s_j = y}}  (\psi(x),\psi(y))\in R_{i,j} \\ 
    \end{align*}
%Note that the first equivalence follows from.  Now we analyze the running time of the algorithm.  
For each constraint $\langle(s_1,\dots,s_k),R\rangle\in\Cc$  
and each distinct pair $(i,j)\in[k]^2$,  
we compute $R_{i,j}\subseteq D^2$ by enumerating all $(a,b)\in D^2$  
and testing membership using Corollary \ref{Cor:R(i,j)-raw}.  
Each projection 
% requires to iterate over all pairs in $D^2$ and 
costs $O(k|D|^2|\Ac|)$ time,
and there are $O(k^2)$ such pairs per constraint,  
giving a total per-constraint cost of $O(k^3|\Ac||D|^2)$.  
Iterating over all constraints gives overall complexity $O(|\Cc|^3|\Ac||D|^2)$.
% For each constraint, we have $O(k^2)$ many pairs of indices to calculate the projection. Thus, each constraint $\ccl{(s_1,\ldots,s_k), R}$ takes time $O(k^3|\Ac||D|^2)$. The algorithm iterates over all constraints and 
Thus, $I'$ can be constructed in polynomial time and is equivalent to $I$.
\end{proof}

%The reduction shows that AutCSPs whose constraint language admits a majority polymorphism can be decided in polynomial time. 

Since the constraints in the instance $I'$ in Theorem~\ref{Thm:Maj-Transformation} are binary, we can use the explicit representation of these constraints:
%without a risk of blow-up in size. 

%Therefore, we can use the standard approach to solving such instances:

%. We now present a known polynomial-time algorithm \cite{barto2009constraint} that decides satisfiability of  CSP instances that contain only binary constraints from $\operatorname{Inv}(g)$. 

\begin{theorem}\label{Thm:Maj-Transformation} \cite{barto2009constraint}
There is an algorithm that, given a binary CSP instance $I =(V,D,\{\langle(x,y),P_{x,y}\rangle : x,y\in V\})$ where all $P_{x,y} \subseteq D^2$ are in  $\Inv(g)$, decides the satisfiability of $I$ in time $O(|V|^3  |D|^3)$.
\end{theorem}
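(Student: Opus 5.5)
The plan is to run the classical path-consistency (strong 2-consistency, equivalently (2,3)-minimality) algorithm on the binary instance and then use the majority polymorphism to show that a non-refuted path-consistent instance is satisfiable.

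\textbf{Normalization.} Add unary domains $P_x=D$ for every $x\in V$. Set $P_{x,y}=D^2$ whenever no constraint on $(x,y)$ is present. Impose symmetry $P_{y,x}=P_{x,y}^{-1}$, and add $P_{x,x}$ as the diagonal restricted to $P_x$. All of these relations remain in $\Inv(g)$, since $\Inv(g)$ is closed under intersection, converse and projection. It is therefore enough to describe the propagation.

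\textbf{Propagation.} Repeat the update
\[
P_{x,y}\gets P_{x,y}\cap (P_{x,z}\circ P_{z,y})
\]
for all triples $x,y,z$ until nothing changes. Each update removes a pair, and there are at most $|V|^2|D|^2$ pairs in total. With a worklist implementation à la PC-2, each removed pair triggers checks over $O(|V|)$ third variables at cost $O(|D|)$. This gives the stated $O(|V|^3|D|^3)$ bound. Every update is pp-definable from the current relations, so the instance stays equivalent to the original, and all relations stay in $\Inv(g)$. If some $P_{x,y}$ (or $P_x$) becomes empty, we reject.

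\textbf{Solution extension.} This is the main step. Assume the instance is path consistent with all relations nonempty. Order the variables $x_1,\dots,x_n$ and build a solution greedily, maintaining the invariant that the partial assignment $(a_1,\dots,a_k)$ satisfies all binary constraints among $x_1,\dots,x_k$. To extend it to $x_{k+1}$, we need a value $b$ with $(a_i,b)\in P_{x_i,x_{k+1}}$ for all $i\le k$.

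\begin{itemize}
\item By path consistency, for each pair $i,j\le k$ there is a $b$ compatible with both $a_i$ and $a_j$.
\item Lift this to all $i\le k$. Consider the relation $T$ on $(x_1,\dots,x_{k+1})$ defined by the conjunction of all binary constraints. Pairwise compatibility gives that every 2-element projection of the desired tuple $(a_1,\dots,a_k,\cdot)$ is realizable.
\item Apply the Helly-type induction from the proof of Lemma~\ref{majority_biprojection}: $g$ applied to three witnesses, each missing one index, yields a witness for all $k$ constraints at once.
\end{itemize}

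Concretely, induct on $m$ to prove that for every $m$-subset $U\subseteq[k]$ there is a $b_U\in P_{x_{k+1}}$ with $(a_i,b_U)\in P_{x_i,x_{k+1}}$ for all $i\in U$. The base case $m\le 2$ is path consistency. For the inductive step, set $b=g(b_{U_1},b_{U_2},b_{U_3})$, where $U_\ell=U\setminus\{u_\ell\}$. For each $i\in U$, at least two of the three $b_{U_\ell}$ are compatible with $a_i$. Since $P_{x_i,x_{k+1}}$ is invariant under $g$, evaluate $g$ coordinatewise on the three pairs $(a_i,b_{U_\ell})$. Where a pair fails, replace it with some compatible pair $(c,b_{U_\ell})$; such a pair exists because $b_{U_\ell}\in P_{x_{k+1}}$ and, by path consistency, $P_{x_{k+1}}$ equals the projection of $P_{x_i,x_{k+1}}$. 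The majority identity then makes the first coordinate equal to $a_i$, and the result is $(a_i,b)\in P_{x_i,x_{k+1}}$.

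The delicate point is precisely this handling of the one index where a witness may fail. The ingredients needed there are that the unary projections coincide with the domains and that $g$ is applied within the relation $P_{x_i,x_{k+1}}$ itself.
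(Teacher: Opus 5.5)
Your proposal is correct and takes essentially the same route as the paper: prune to path consistency, then extend a partial solution by applying $g$ to three witnesses and replacing the one incompatible witness pair with a pair supplied by path consistency (the paper's $a'$ with $g(a',a,a)=a$); your subset induction makes precise the paper's ``and so on'' step from $m$-cliques to $(m+1)$-cliques. One small correction on running time: the $O(|D|)$ cost per removed pair and third variable needs PC-4-style support counters, not PC-2, whose path revisions cost $O(|D|^3)$ each and give $O(|V|^3|D|^5)$; with counters, your $O(|V|^3|D|^3)$ bound does hold.
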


\begin{proof}Iteratively prune the relations $P_{x,y}$ as follows: for every triple $(x,y,z)$ and every pair $(a,b) \in P_{x,y}$, check whether there exists $c \in D$ such that $(a,c) \in P_{x,z}$ and $(b,c) \in P_{y,z}$. If no such $c$ exists, delete $(a,b)$ from $P_{x,y}$. If any $P_{x,y}$ becomes empty, report ``unsatisfiable''; otherwise, after no more deletions are possible, report ``satisfiable''. \ This  process examines all triples and all values, taking time $O(|V|^3  |D|^3)$. Its correctness is proved in Appendix~\ref{S-Thm:Maj-Transformation}. 
\end{proof}
Thus, we obtain the following corollary:

\begin{corollary}\label{cor:majority}
There exists a polynomial-time algorithm that, given
an $\AutCSP$ instance $I =(V,D,\Cc,\Ac)$ where $\Ac\in \Kc_{g}$, decides if the instance $I$ has a solution. The algorithm runs in time $O(|\Cc|^3|\Ac||D|^2+|V|^3|D|^3)$ where $|\Cc| = \sum_{\ccl{\bs, R}\in \Cc} |\bs|$. \qed 
\end{corollary}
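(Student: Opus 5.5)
The plan is to chain Theorem~\ref{Thm:Maj-Transformation} (the reduction of an $\AutCSP$ instance to a binary CSP instance) with the path-consistency algorithm for binary instances closed under $g$ from \cite{barto2009constraint}, and then add up the two running times.

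\textbf{Step 1.} Given $I=(V,D,\Cc,\Ac)$ with $\Ac\in\Kc_g$, apply the algorithm of Theorem~\ref{Thm:Maj-Transformation}.
\begin{itemize}
\item For every constraint $\ccl{(s_1,\dots,s_k),R_k}$, compute the binary projections $R_{i,j}$ via Corollary~\ref{Cor:R(i,j)-raw}. Each projection costs $O(k|D|^2|\Ac|)$, using pattern searches from Lemma~\ref{procedure:search_pattern}.
\item Set $P_{xy}$ to be the intersection of all $R_{i,j}$ with $s_i=x$, $s_j=y$. If no constraint contains both $x$ and $y$, take $P_{xy}=D^2$; this relation is trivially in $\Inv(g)$.
\end{itemize}
Summing $k^3$ over the constraints is at most $(\sum k)^3=|\Cc|^3$, so this step runs in $O(|\Cc|^3|\Ac||D|^2)$ time. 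It produces a binary instance $I'$ with the same solution set as $I$, and all of its relations lie in $\Inv(g)$.

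The key point is that $I'$ can be written explicitly without exponential blow-up. It has at most $|V|^2$ relations, each a subset of $D^2$, so its explicit size is $O(|V|^2|D|^2)$. This is why we never need to list the tuples of the original $R_k$.

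\textbf{Step 2.} Run the pruning algorithm from the binary-instance theorem following Theorem~\ref{Thm:Maj-Transformation} on $I'$, which takes $O(|V|^3|D|^3)$ time. Its answer is correct for $I'$ because every $P_{xy}\in\Inv(g)$. Since $I$ and $I'$ have exactly the same solutions, the answer is also correct for $I$.

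Adding the two bounds gives the stated total of $O(|\Cc|^3|\Ac||D|^2+|V|^3|D|^3)$.

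The only delicate point is bookkeeping rather than mathematics. We must check that the intersections and the default $D^2$ relations stay in $\Inv(g)$, which holds because $\Inv(g)$ is closed under intersection. We must also confirm that building $I'$ over all pairs $(x,y)$ fits within the stated budget: the $|V|^2$ default entries cost $O(|V|^2|D|^2)$, which is absorbed by the second term.
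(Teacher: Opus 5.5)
Your proposal is correct and follows the paper's argument: compose the reduction to a binary instance over $\Inv(g)$ (cost $O(|\Cc|^3|\Ac||D|^2)$) with the $O(|V|^3|D|^3)$ pruning algorithm, and add the two bounds. Your extra bookkeeping (default $P_{xy}=D^2$, closure of $\Inv(g)$ under intersection, $\sum k^3\le|\Cc|^3$) is sound. One caveat you inherit from the paper: Lemma~\ref{majority_biprojection} is vacuous for $n=1$, so a unary constraint $\ccl{(x),R_1}$ contributes to no $P_{xy}$. It should be added explicitly, e.g.\ by intersecting $P_{xx}$ with $\{(a,a): a\in R_1\}$, which is still in $\Inv(g)$.
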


\subsection{Near-Unanimity Polymorphisms} 
The majority is a special case of near-unanimity operation. 
A map $g:D^k\rightarrow D$ is a \textbf{near-unanimity (NU)} operation if   for 
all $x,y\in D$:
\[ 
g(x,y,y,\ldots, y) = g(y,x,y \ldots,y)=\ldots = g(y,y,\ldots,y,x) = y.
\] 
One can easily  generalize the results of Section \ref{SubS:Majority Polymorphisms} on majority polymorphisms to near unanimity polymorphisms. Indeed, first,  Lemma \ref{majority_biprojection} holds true
for NU polymorphisms:

\begin{lemma}[\cite{Jeavons97:closure}]\label{lemma:nu-projection_k}
        Let $g$ be a $k$-ary NU polymorphism  of a relation  $R\subseteq D^n$. Then for all $ r\in D^n$ we have  $ r\in R$ if and only if $\pi_U r \in \pi_U R$ for any subset $U\subseteq [n]$ of cardinality $k-1$. \qed
\end{lemma}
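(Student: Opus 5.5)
The plan is to mirror the proof of Lemma~\ref{majority_biprojection}, replacing the three-way majority trick with a $k$-way near-unanimity trick. The `only if' direction is immediate: if $r\in R$ then $\pi_U r\in\pi_U R$ for every $U$. For the converse, fix $r\in D^n$ such that $\pi_U r\in \pi_U R$ for every $U\subseteq[n]$ with $|U|=k-1$. I will show by induction on $m=|W|\ge k-1$ that $\pi_W r\in\pi_W R$ for every $W\subseteq[n]$ with $|W|=m$. Taking $W=[n]$ then gives $r\in R$. If $n<k-1$, I read the hypothesis as applying to all subsets of size $\min(n,k-1)$, in which case $U=[n]$ gives the claim directly.

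The base case $|W|=k-1$ is exactly the hypothesis. For the inductive step, let $|W|=m+1\ge k$ and choose $k$ distinct coordinates $w_1,\dots,w_k\in W$. For each $i\in[k]$ put $U_i=W\setminus\{w_i\}$, so $|U_i|=m$. By the induction hypothesis there are tuples $a_1,\dots,a_k\in R$ with $\pi_{U_i}a_i=\pi_{U_i}r$. Set $b=g(a_1,\dots,a_k)$, applied coordinatewise; then $b\in R$ because $g$ is a polymorphism of $R$.

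It remains to check that $b$ agrees with $r$ on $W$. Take $\ell\in W$. If $\ell\notin\{w_1,\dots,w_k\}$, then $\ell\in U_i$ for every $i$, so all $k$ arguments of $g$ at coordinate $\ell$ equal $r_\ell$, and idempotence (NU with $x=y$) gives $b(\ell)=r_\ell$. If $\ell=w_i$, then $a_j(\ell)=r_\ell$ for all $j\ne i$, while $a_i(\ell)$ is arbitrary. The near-unanimity identity with the odd entry in position $i$ then gives $b(\ell)=r_\ell$. Hence $\pi_W b=\pi_W r$, which completes the induction.

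The only point needing care is that the inductive step requires $|W|\ge k$, so that $k$ distinct coordinates $w_i$ exist, each missing from exactly one $U_i$. This is why the base case sits at size $k-1$ rather than $2$. The indexing of the NU identity, with the deviating argument in the $i$-th slot, must match the $i$-th tuple $a_i$. Otherwise the argument is routine.
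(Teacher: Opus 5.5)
Your proof is correct and is essentially the argument the paper intends. The paper gives no separate proof, only citing Jeavons et al.\ and remarking that Lemma~\ref{majority_biprojection} generalizes; your induction upward from size $k-1$, deleting $k$ distinct coordinates $w_1,\dots,w_k$ and applying the NU identity with the odd argument in slot $i$, is the direct $k$-ary version of that majority proof.

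Your remark about $n<k-1$ is also a genuine improvement. As written, the statement has a vacuous hypothesis in that case: for example, $R=\{0\}\subseteq D^1$ with a majority polymorphism would be forced to equal $D^1$. Reading the hypothesis as ranging over subsets of size $\min(n,k-1)$ is the right fix.
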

\noindent 
Second, Theorem \ref{thm:1-min autcsp} also holds true for NU polymorphisms:
\begin{theorem} \label{Thm: NU reduction}
    Let $g$ be a NU operation. There is an algorithm that, given an $\AutCSP$ instance  $I=(V,D,\Cc,\Ac)$ where $\Ac \in \Kc_{g}$,  builds a CSP instance  $I'=(V,D,\{\ccl{\ub,P_{\ub}} : \ub\subseteq V, |\ub| = k-1\})$ such that
\begin{enumerate}
\item Each $P_{\ub} \subseteq D^{k-1}$ belongs to $\Inv(g) $.
\item A map $\psi:V\rightarrow D$ satisfies $I$ if and only if $\psi$ satisfies $I'$.
\item The algorithm runs in time $O(|\Cc|^{k}|\Ac||D|^{k-1})$. \qed
\end{enumerate}
\end{theorem}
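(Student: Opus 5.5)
We would follow the majority case (Theorem~\ref{Thm:Maj-Transformation}) closely, replacing pairs of coordinates by $(k-1)$-element sets of coordinates and Lemma~\ref{majority_biprojection} by Lemma~\ref{lemma:nu-projection_k}. For each constraint $\ccl{\bs,R}\in\Cc$ with $\bs=(s_1,\dots,s_m)$ and each set $U\subseteq[m]$ with $|U|=k-1$ (if $m<k-1$ we pad $U$ by repeating indices, or simply keep $R$ itself, which is then of size at most $|D|^{k-2}$ and can be listed directly), we compute the projection $R_U=\pi_U R$. The computation is done exactly as in Corollary~\ref{Cor:R(i,j)-raw}: for each tuple $a\in D^{U}$ form the pattern $\tau$ with $\tau(u)=a(u)$ for $u\in U$ and undefined elsewhere, and test extendability with Lemma~\ref{procedure:search_pattern} in time $O(m|\Ac|)$. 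This gives $R_U$ in time $O(m|D|^{k-1}|\Ac|)$, and over all $O(m^{k-1})$ sets $U$ a cost of $O(m^{k}|D|^{k-1}|\Ac|)$ per constraint. Summing over constraints yields $O(|\Cc|^{k}|\Ac||D|^{k-1})$, as claimed in item~(3).

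Next we assemble $I'$. For every $(k-1)$-tuple of variables $\ub=(u_1,\dots,u_{k-1})$ we set
\[
P_{\ub}=\bigcap\{R_U \;:\; \ccl{\bs,R}\in\Cc,\ U=\{i_1<\dots<i_{k-1}\},\ (s_{i_1},\dots,s_{i_{k-1}})=\ub\},
\]
with $P_{\ub}=D^{k-1}$ if no constraint contributes. Here the ordering of $U$ has to be aligned with the ordering of $\ub$. When some scope contains repeated variables, $\ub$ may itself have repetitions; we treat it as a tuple, so nothing breaks. Item~(1) is immediate: projections of $g$-invariant relations are $g$-invariant, intersections of $g$-invariant relations are $g$-invariant, and $D^{k-1}$ is invariant.

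For item~(2) we chain equivalences as in the majority proof. By Lemma~\ref{lemma:nu-projection_k}, $\psi(\bs)\in R$ iff $\pi_U\psi(\bs)\in\pi_U R$ for every $(k-1)$-subset $U$. Regrouping the conjunction by the variable tuple $\ub$ occupying the positions $U$ turns this into the condition that $\psi(\ub)\in P_{\ub}$ for all $\ub$. The only care needed is when $m<k-1$. In that case Lemma~\ref{lemma:nu-projection_k} is vacuous, so we include $R$ directly as a constraint on its scope, viewed as a $(k-1)$-ary relation by duplicating a coordinate; this relation is still $g$-invariant.

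We expect the main obstacle to be bookkeeping rather than mathematics. The subtle points are correctly handling scopes with repeated variables and scopes of arity below $k-1$, and checking that the running-time count with $|\Cc|=\sum|\bs|$ indeed gives $\sum_m m^{k}\le|\Cc|^{k}$. For the count we would use the convexity of $m\mapsto m^k$.
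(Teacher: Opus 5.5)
Your proposal is correct and follows essentially the same route as the paper: you compute each projection $\pi_U R$ onto $(k-1)$-sets of positions by pattern-extendability tests, intersect these projections into $P_{\ub}$, invoke Lemma~\ref{lemma:nu-projection_k} for the equivalence, and count $O(m^{k-1})$ projections at cost $O(m|D|^{k-1}|\Ac|)$ each. Your bookkeeping is more careful than the paper's, which indexes the new constraints by variable sets $\ub\subseteq\bs$ (ambiguous when a scope repeats a variable) and, for scopes of arity below $k-1$, applies the lemma where it is vacuous and so silently drops those constraints.
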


The proof is in Appendix \ref{Appendix: NU reduction}. Finally, we invoke the known fact that  $\CSP(\Inv(g))$ is polynomial-time solvable  \cite{Barto14:constraint}: % Thus, we have the following corollary.
\begin{corollary}\label{cor:NU}
Let $g$ be a $NU$ operation. There exists a polynomial-time algorithm that, given
an $\AutCSP$ instance $I =(V,D,\Cc,\Ac)$ where $\Ac\in \Kc_{g}$, decides if the instance $I$ has a solution.
 \qed
\end{corollary}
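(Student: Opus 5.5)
The argument combines Theorem~\ref{Thm: NU reduction} with the known polynomial-time algorithm for $\CSP(\Inv(g))$, i.e.\ bounded width / $(k-1,k)$-minimality \cite{Barto14:constraint,16_jl&c_boundedwithhierarchy}. The one point needing care is that the instance produced is small and explicitly represented, so the classical algorithm runs in time polynomial in the size of the \emph{automatic} input.

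First, given $I=(V,D,\Cc,\Ac)$ with $\Ac\in\Kc_g$, apply Theorem~\ref{Thm: NU reduction}. This yields, in time $O(|\Cc|^{k}|\Ac||D|^{k-1})$, an instance $I'$ whose constraints are $\ccl{\ub,P_{\ub}}$ with $|\ub|=k-1$, each $P_{\ub}\in\Inv(g)$, and $I'$ has exactly the same solutions as $I$. Concretely, $P_{\ub}$ is the intersection, over all constraints $\ccl{\bs,R}\in\Cc$ and all $(k-1)$-sets of positions of $\bs$ that carry the variables of $\ub$, of the projections $\pi_U R$. Each projection is computed by pattern tests (Lemma~\ref{procedure:search_pattern}), one per tuple of $D^{k-1}$. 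If some $\bs$ has fewer than $k-1$ distinct variables, I pad $\ub$ or use lower-arity projections as in the appendix proof.

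Second, I bound the size of $I'$ in its explicit representation. There are at most $|V|^{k-1}$ scopes $\ub$, and each relation has at most $|D|^{k-1}$ tuples of arity $k-1$. Since $k$ and $D$ are fixed by $g$, the size of $I'$ is $O(|V|^{k-1}|D|^{k-1}(k-1))$, which is polynomial in $|V|$. Only scopes actually arising from $\Cc$ need to be listed; unconstrained ones equal $D^{k-1}$ and may be omitted.

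Third, I decide $I'$ with the classical algorithm. The language $\{P_{\ub}\}\subseteq\Inv(g)$ has the NU polymorphism $g$, so it has bounded width, namely strict width $k-1$ by Feder--Vardi and Jeavons--Cohen--Cooper. Hence enforcing $(k-1,k)$-consistency (the analogue of the pruning in Theorem~\ref{Thm:Maj-Transformation} for majority) decides satisfiability of $I'$ in time polynomial in $|V|$, namely $O(|V|^{k}|D|^{k})$. By item~2 of Theorem~\ref{Thm: NU reduction}, $I$ is satisfiable iff $I'$ is.

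The only subtle step is this last one. One must make sure the cited tractability result applies to an explicitly given instance over the \emph{infinite} language $\Inv(g)$, not just finite sublanguages. This holds because all relations have bounded arity $k-1$ over a fixed domain: the language $\Inv(g)\cap\{\text{relations of arity}\le k-1\}$ is finite, so the uniform consistency algorithm applies with a polynomial bound. Summing the two phases gives total time $O(|\Cc|^{k}|\Ac||D|^{k-1}+|V|^{k}|D|^{k})$.
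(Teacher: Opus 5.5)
Your proposal is correct and follows the paper's route. You apply the NU reduction (Theorem~\ref{Thm: NU reduction}) to obtain an equivalent, explicitly represented instance with $(k-1)$-ary $g$-invariant constraints, and then invoke the known polynomial-time bounded-width algorithm for $\CSP(\Inv(g))$. Your remarks on the polynomial explicit size of $I'$, and on the bounded-arity part of $\Inv(g)$ being finite, spell out points the paper leaves implicit.
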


\section{Dichotomy for Boolean AutCSPs}\label{Thm:boolean-dichotomy}
We now combine the results above to obtain a full complexity dichotomy for AutCSPs over the Boolean domain.
\begin{theorem}\label{Thm:Automata-dichotomy}
    Let $\Ac$ be an automaton over the Boolean domain $\{0,1\}$. 
    Then $\AutCSP(\Ac)\in P$ if and only if \  $\Gamma_{\Ac}$ admits one of the following polymorphisms: $\mathbf{0}$, $\mathbf{1}$, $\land$, $\lor$,
    $Maj$, $Minor$. Otherwise the $\AutCSP(\Ac)$  is NP-complete. 
\end{theorem}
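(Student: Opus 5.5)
The proof splits into the tractable direction and the hardness direction. Throughout, $\Ac$ is fixed and we work with $\AutCSP(\Ac)$.

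\textbf{Tractable direction.} Suppose $\Gamma_{\Ac}$ admits one of the six operations; we go case by case using earlier results.

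If $\mathbf{0}$ is a polymorphism, then every nonempty $R_n\in\Gamma_{\Ac}$ contains the all-zero tuple. Hence an instance is satisfiable iff every $R_{|\bs|}$ occurring in it is nonempty. This can be checked by the length-$n$ reachability test on $\Ac$, or by Lemma~\ref{lem:extend} with the empty pattern. The case $\mathbf{1}$ is symmetric.

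If $\land$ is a polymorphism, we invoke Theorem~\ref{Thm:Booleam-AND-Polymporphism}, or alternatively Corollary~\ref{cll:semi}. The case $\lor$ is analogous: swap the roles of $0$ and $1$, or apply Corollary~\ref{cll:semi} with the semilattice $\lor$.

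If $Maj$ is a polymorphism, it is a majority operation, so Corollary~\ref{cor:majority} applies.

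If $Minor$ is a polymorphism, Theorem~\ref{thm:AutCSP-solver} applies.

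In every case the running time is polynomial in $|V|+|\Cc|$, since $|\Ac|$ is a constant. Membership in NP is Proposition~\ref{Prop:$NP$-complete}.

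\textbf{Hardness direction.} Suppose none of the six operations is a polymorphism of $\Gamma_{\Ac}$. Schaefer's theorem applies to possibly infinite languages, so $\CSP(\Gamma_{\Ac})$ is NP-complete. It remains to transfer this to $\AutCSP(\Ac)$, and the delicate point is representation.

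\begin{itemize}
\item $\CSP(\Gamma_{\Ac})$ with explicitly listed relations reduces to $\AutCSP(\Ac)$ by the trivial map noted after the definition of $\AutCSP(\Ac)$: each constraint $(\bs,R)$ becomes $\ccl{\bs}$.
\item The input size only shrinks under this map. However, a reduction to $\CSP(\Gamma_{\Ac})$ whose instances have explicit size polynomial in $|V|$ does not by itself give a polynomial reduction for every family.
\end{itemize}

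The cleanest route is to use a finite witness instead. In the Boolean case, failure of all six polymorphisms is already witnessed on finitely many relations. Each failure of an operation $f$ is witnessed by finitely many tuples of one relation $R_n$; for instance, the automaton $\Ac_f$ in the proof of Theorem~\ref{Thm:deciding-polymorphism} accepts a witness word. Hence there is a finite sub-language $\Gamma\subseteq\Gamma_{\Ac}$, consisting of at most six relations of fixed arities, with none of the six operations as a polymorphism. Schaefer's theorem then makes $\CSP(\Gamma)$ NP-complete. Because $\Gamma$ is finite with fixed arities, the explicit and automatic representations differ by only a constant factor, so $\CSP(\Gamma)$ reduces in polynomial time to $\AutCSP(\Ac)$. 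This mirrors Corollary~\ref{Cor:Pol-check}(2).

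\textbf{Main obstacle.} The key step is verifying that the finite witness set really defeats all six operations simultaneously. Each operation fails on some relation in the set, and adding relations can only remove polymorphisms. So taking the union of the six witnessing relations suffices, since Schaefer's list is exactly these six operations. I expect the remaining care to lie in the $Minor$ and $Maj$ witnesses needing three tuples of the same length, which the product-automaton construction already guarantees.
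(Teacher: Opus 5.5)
Your proposal is correct and follows essentially the same route as the paper. Tractability is handled case by case from the earlier algorithmic results. Hardness goes through the finite sub-language of the (at most six) relations $R_{n_f}$ witnessed by the product automata $\Ac_f$, to which Schaefer's theorem applies, followed by the trivial linear-time reduction into $\AutCSP(\Ac)$. Your citations for $Maj$ and $Minor$ (Corollary~\ref{cor:majority} and Theorem~\ref{thm:AutCSP-solver}) and your explicit nonemptiness argument for $\mathbf{0}$ and $\mathbf{1}$ are, if anything, more precise than the paper's own write-up.
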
\label{{Thm:Boolean-Dichotomy}}
\begin{proof}
Assume that the automatic constraint language $\Gamma_{\Ac}$ admits none of the listed operations $\mathbf{0}$, $\mathbf{1}$, $\land$, $\lor$,
    $Maj$, and $Minor$  as its polymorphisms.  
For each of the operations $f$ from the set $\{\mathbf{0}, \mathbf{1}, \land, \lor, Maj, Minor\}$ build an automaton $\Ac_{f}$ from Theorem \ref{Thm:deciding-polymorphism}. Let us assume, for simplicity, that $f$ is either $\land$ or $\lor$. Since $f$ is not a polymorphism of $\Gamma_{\Ac}$, there is a triple
of strings $(x_f, y_f, z_f)$ accepted by $\Ac_{f}$.  Hence, $x_f, y_f$ are accepted by $\Ac$, but $f(x_f, y_f)=z_f$ is not accepted by $\Ac$. 
Let $n_f=|x_f|$ and $R_{n_f} \;=\; \mathcal L(\Ac) \cap D^{n_f}$. 
Hence, the collection of relations
$$
\Gamma' \;=\; \{\, R_{n_f} \mid f \in \{\mathbf{0},\mathbf{1},\land,\lor,Maj,Minor\}\,\}
$$
% the finite language  $\Lc'$ consisting of relations
% $$
% R_{n_f}, \ \textit{where $f\in \{\mathbf{0}, \mathbf{1}, \land, \lor, Maj, Minor\}$}
% $$
forms a finite language over the Boolean domain that admits none of 
 the listed operations as its polymorphisms.
%,then there exists a finite sub-language $\Lc\subseteq \Lc_{\Ac}$ witnessing this fact, which can be effectively found by Theorem~\ref{Thm:deciding-polymorphism}.
% By Theorem~\ref{Thm:Boolean}, 
By Schaefer's Dichotomy Theorem \cite{SchaeferDic},  
$\CSP(\mathcal{L'})$ is NP-complete. Indeed, given any instance 
$I=(V,D,\{\ccl{s_1,R_1},\dots,\ccl{s_m,R_m}\})$
of $\CSP(\mathcal{L'})$,
we construct the instance 
$I'=(V,D,\{\ccl{s_1,R_1},\dots,\ccl{s_m,R_m}\}) \in \AutCSP(\Ac)$ in linear time.
% that is, we simply reuse the same set of variables, domain,
% and constraints and attach the fixed automaton~$\Ac$.
Clearly $I$ is satisfiable if and only if $I'$ is satisfiable.
Thus $\AutCSP(\Ac)$ is NP-hard. By Proposition \ref{Prop:$NP$-complete}, $\AutCSP(\Ac)$  is  in NP. Hence,  $\AutCSP(\Ac)$ is NP-complete.

If $\Gamma_{\Ac}$ admits one of the listed automatic polymorphisms, then tractability follows from the corresponding results:
semilattice operations ($\land$, $\lor$) by Theorem~\ref{Thm:Booleam-AND-Polymporphism};
$Minor$ by Theorem~\ref{thm:2minor_translation};
$Maj$ by Theorem~\ref{redutiontoCSP};
and the unary operations $\mathbf{0}$, $\mathbf{1}$ by trivial evaluation of constant assignments.
Hence $\AutCSP(\Ac)$ is in P.
\end{proof}

\noindent 
 Now combine  Theorem \ref{Thm:boolean-dichotomy} and Corollary \ref{Cor:Pol-check}
 for the next statement:
 
\begin{corollary}\label{Cor:deciding-dichotomy}
There is a polynomial-time algorithm with the following properties:
\begin{enumerate}
    \item Given a finite automaton $\Ac$  over $D=\{0,1\}$ as input,  the algorithm decides if  $\AutCSP(\Ac)$ is in P.
    \item If $\AutCSP(\Ac)$ is not in P, then the algorithm  produces a finite set $\Gamma'$ of relations  from $\Gamma_{\Ac}$ witnessing that
    $\AutCSP(\Ac)$ is NP-complete.
    \item If $\AutCSP(\Ac)$ is in P, then the algorithm produces a decision process for $\AutCSP(\Ac)$ running in polynomial time on instances $I\in \AutCSP(\Ac)$ and the automaton $\Ac$.  \qed 
\end{enumerate}

\end{corollary}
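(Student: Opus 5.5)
The algorithm first classifies the automaton, then branches on the outcome, reusing the earlier results. Given $\Ac$ over $\{0,1\}$, we run the procedure of Theorem~\ref{Thm:deciding-polymorphism} (equivalently Corollary~\ref{Cor-P-decision}) for each $f\in\{\mathbf{0},\mathbf{1},\land,\lor,Maj,Minor\}$. We build the product automaton $\Ac_f$ on state set $S^{k+1}$ with $k\le 3$ and test it for emptiness. This takes $O(|\Ac|^4)$ time in total, and by Theorem~\ref{Thm:Automata-dichotomy} it decides whether $\AutCSP(\Ac)$ is in P. This settles part (1), and the dichotomy theorem already guarantees that ``in P'' and ``NP-complete'' are the only two outcomes.

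For part (2), suppose every $\Ac_f$ is nonempty. For each $f$ we extract a shortest accepted word of $\Ac_f$ by BFS on the product graph; it has length at most the number of states, i.e.\ $O(|\Ac|^{4})$. The word decodes into a tuple $(x_1,\dots,x_k,x_{k+1})$ with $x_1,\dots,x_k\in\mathcal L(\Ac)$, all of a common length $n_f$, and $f_\omega(x_1,\dots,x_k)=x_{k+1}\notin\mathcal L(\Ac)$. The witness output is the finite set $\Gamma'=\{R_{n_f}\}$, which we present by its arities $n_f$; these are polynomially bounded, so $\Gamma'$ has a polynomial description relative to $\Ac$. Two checks remain. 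First, no listed $f$ is a polymorphism of $\Gamma'$: the tuple decoded from $\Ac_f$ refutes $f$ on $R_{n_f}\in\Gamma'$. Second, the NP-completeness argument of Theorem~\ref{Thm:Automata-dichotomy} then applies verbatim via Schaefer's theorem and the linear reduction from $\CSP(\Gamma')$ to $\AutCSP(\Ac)$. The same witness is also what Corollary~\ref{Cor:Pol-check}(2) provides.

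For part (3), the algorithm outputs the solver corresponding to the first polymorphism $f$ found. For $\land$ or $\lor$ it uses Theorem~\ref{Thm:Booleam-AND-Polymporphism}. For $Minor$ it uses Theorem~\ref{thm:AutCSP-solver}, built on Theorem~\ref{thm:2minor_translation}. For $Maj$ it uses Corollary~\ref{cor:majority}. For $\mathbf{0}$ or $\mathbf{1}$ the instance is satisfiable iff $R_{|\bs|}$ is nonempty for every constraint, because the constant assignment lies in every nonempty $R_n$. Each such nonemptiness test is a length-$|\bs|$ reachability check on $\Ac$. All of these run in time polynomial in $|I|$ and $|\Ac|$, as their stated bounds show.

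The main obstacle is making part (2) precise. The length $n_f$ of the witnessing word must be polynomially bounded, since otherwise even writing down $\Gamma'$ is not polynomial. This is handled by taking a shortest path in the product automaton, whose size is $O(|\Ac|^{4})$. One must also note that the witness $\Gamma'$ is exactly what certifies, through Schaefer's theorem, that $\CSP(\Gamma')$ is NP-complete, and hence so is $\AutCSP(\Ac)$.
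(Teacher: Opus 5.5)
Your proposal is correct and follows essentially the paper's route. It decides via polymorphism tests (emptiness of the product automata $\Ac_f$), reads off the witness set $\Gamma'=\{R_{n_f}\}$ from accepted words as in the proof of the Boolean dichotomy theorem, and uses Schaefer's theorem for hardness and the earlier solvers for tractability; your observation that shortest accepted words keep every $n_f$ bounded by $|\Ac|^4$, so that $\Gamma'$ can be output in polynomial time, is a welcome detail the paper leaves implicit.

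One small correction: Corollary~\ref{Cor:Pol-check}(2) yields relations refuting the Siggers operations, not the six Schaefer operations. Its witness is therefore not literally the same set as yours, although either set certifies NP-completeness.
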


\section{Conclusion}
%We introduced the framework for Automatic Constraint Satisfaction Problems (AutCSPs), extending classical CSPs.  We developed the algebraic theory of automatic polymorphisms, showed that their existence is decidable in polynomial time, and gave polynomial-time algorithms for AutCSPs whose languages admit semilattice or minority polymorphisms on the Boolean domain, and we extended the semilattice and majority cases to arbitrary finite domains. These results yield a dichotomy theorem for Boolean AutCSPs, classifying every problem as either polynomial-time solvable or NP-complete.

Several questions remain open. First, while classical CSPs over finite domains are tractable when the language admits a Mal’tsev polymorphism~\cite{Bulatov2002TractableMaltsev,Bulatov2006SimpleMaltsev}, it is unknown if  Mal’tsev polymorphisms imply tractability for AutCSPs. Second, we  lack an algorithm that converts an AutCSP instance into an equivalent $(2,3)$-minimal instance \cite{barto2009constraint,16_jl&c_boundedwithhierarchy}, a central tool in classical relational width and local-consistency frameworks. Finally, we aim to extend our results beyond the Boolean case by establishing a full dichotomy for AutCSPs over the 3-element domain, further clarifying the structure and complexity landscape of automatic constraint languages.

%\section{Inv-Pol   Revisited} Given a constraint language $\Lc$, any constraint language $\Gamma$, which comes from $\mathrm{Inv}(\mathrm{Pol}(\Lc))$, the relational clone generated by $\Lc$, has the same complexity as original $\Lc$\cite{Bulatov05:classifying}. Therefore, it's reasonable to replace CSP$(\Lc)$ with any $\CSP(\Gamma)$ as long as $\Pol(\Lc) = \Pol(\Gamma)$.  \xynote{I think we need to prove the following theorem to justify that it's reasonable to only consider AutCSP up to the polymorphism it has. \begin{theorem} Let $\Ac,\Bc$ be two automata with $\Pol(\Ac) = \Pol(\Bc)$. Then AutCSP$(\Ac)$ is polynomial time equivalent to AutCSP$(\Bc)$. \end{theorem}}

\bibliographystyle{ACM-Reference-Format}
\bibliography{Reference}

%% \section{Appendix}

\  \\

\appendix
\section{Proof of Corollary \ref{Cor:Pol-check}}\label{S:Pol-Check}
\begin{corollary 4.5}
Let $\Ac$ be a finite automaton over $D$. Then:
\begin{enumerate}
    \item It is decidable in polynomial time whether  or not $\Gamma_{\Ac}$ admits a Sigger's operation as its polymorphism.
    \item Moreover, if no Sigger's operation is a polymorphism of $\Gamma_{\Ac}$, then we can find in time $O(|\Ac|^5)$ a finite sub-language $\Gamma \subseteq \Gamma_{\Ac}$ such that the $CSP(\Gamma)$, and therefore $\AutCSP(\Ac)$, are NP-complete. 
\end{enumerate}  
 \end{corollary 4.5}
 \begin{proof}
Recall that the domain $D$ is fixed. Therefore we can pre-compute  
all Sigger's operations: $g_1$, $\ldots$, $g_k$. 
%Hence, this function list can be assumed to be given. 
Using Theorem \ref{Thm:deciding-polymorphism}, for each $g_i$, $i=1, \ldots, k$, we check if $g_k$ induces a polymorphism of $\Gamma_{\Ac}$. This proves the first part.

 For the second part, for each $i$, where $i=1, \ldots, k$ we can find a relation $R_{n_i}\in \Gamma_{\Ac}$ such that $g_i$ is not a
 polymorphism for $R_{n_i}$. Then none of the Sigger's operations is  a polymorphism of the finite language  $\Gamma=\{R_{n_1}, \ldots, R_{n_k}\}$. Hence, by  Theorem \cite{DBLP:conf/focs/Bulatov17,DBLP:conf/focs/Zhuk17}, the problem $\CSP(\Gamma)$ is NP-complete. Hence, 
 % $\CSP(\Gamma_{\Ac})$
 $AutCSP(\Ac)$ 
 is NP-complete.
 \end{proof}

\section{Proof of Lemma \ref{lemma:finite_field_extraction}}\label{Appendix: GF(q) extraction}
\begin{lemma 5.8}
   There exists an algorithm, which given an arity $n$ and an automaton $\Ac\in \Kc_{f_q}$, constructs in time $O(n^3|\Ac|)$a system of linear equation $Mx = b$ such that $r\in R_n$ if and only if $Mr^\intercal = b$. 
\end{lemma 5.8}
\begin{proof}
    Note that $R_n$ is an affine subspace of space $GF(q)^n$. Hence, the relation $R_n$ is represented as $r+ S$, where $r$ is an arbitrary vector from $R_n$ and $S\le GF(q)^n$ is some subspace. Finding $r$ is rather easy by testifying extendability of the empty pattern $\psi_\emptyset:[n] \partialar GF(q)$. to find the subspace $S$, we construct the basis $\Bc$ of $S$. When both $r$ and $S$ are found, the matrix representation $Mx = b$ is constructed by taking a matrix $M$ with kernel $S$ and a vector $b = Mr^\intercal$.

    We now outline the procedure for finding the basis $\Bc$. Suppose that the subspace $S$ has dimension $m$. The construction of $\Bc$ is a series of bases: $\emptyset = \Bc_0 \subsetneq \Bc_1 \subsetneq \Bc_2\subsetneq \ldots \subsetneq\Bc_m = \Bc$. When $\Bc_{i}$ is given at $i$-th step, we try to find a new vector $b_i\in S$ independent of $\Bc_i$. Then $\Bc_{i+1}$ is simply $\Bc_{i}\cup\{b_i\}$. If no such vector $b_i$ exists, obviously we reach the end of construction, that is $i=m$. 

    To simplify the search for a candidate vector $b_i$, we assume that each $\Bc_i$ is normalized to row canonical form. Let $L_i\subseteq [n]$ be the columns of the leading $1$s of basis $\Bc_i$. Since $\Bc_i$ is in row canonical form, a candidate vector $b_i$ exists if and only if there is some non-zero vector $b\in S$ such that $b(j) = 0$ for all $j\in L_i$.
    Moreover, because $b\in S$ is non-zero, there must be a column $p\in [n]\setminus L_i$ such that $b(p) \neq 0$.  Given $L_i$, we can detect such a candidate by testing the extendability of the following patterns. For each  column  $p\in [n]\setminus L_i$, define a pattern $\phi_{p,L_i}:[n]\partialar GF(q)$ by 
         \[
        \phi_{p,L_i}(k) = \begin{cases}
            r(k) + 1 , k=p \\ 
            r(k) , k\in L_i \\
            \text{undefined}, \text{ otherwise}
        \end{cases}
        \] where addition is the group addition  of $GF(q)$.
 There exists $x\in R$ extending $\phi_{p,L_{i}}$ if and only if $b = x - r\in S$ satisfies $b(k) = 0$ for all $k\in L_i$ and $b(p) = 1$, in which case we may take $b_i = b$.
\end{proof}

\section{Proof of Lemma \ref{procedure:search_pattern}}\label{S:procedure:search_pattern}
\begin{lemma 6.2}
    There exists an algorithm that, for a given  pattern $\phi: [n] \ar \Pc(D)$ and an automaton $\Ac$, decides if there exists a tuple $t\in \mathcal  L(\Ac)\cap \prod_{i=1}^n\phi(i)$ and, if so, constructs such a tuple $t$. The algorithm runs in time  $O(n|\Ac|)$. 
\end{lemma 6.2}

\begin{proof}
    Consider the transition graph $G(\Ac) = (S, \ar_{\Delta})$ of the automaton $\Ac$, where  $s_1\ar_{\Delta} s_2$ if and only if $s_2 = \Delta(s_1,a)$ for some $a\in D$.  One can compute a walk $\{s_i\}_{i=0}^n$ with the following properties:
    \begin{enumerate}
        \item $s_0 = q_0$ and $s_n \in F$
        \item For $i\in dom(\phi)$, $s_i = \Delta(s_{i-1}, a) $ for some $a\in \phi(i)$.
    \end{enumerate}
 This walk can easily be computed by a standard Breadth-First-
Search algorithm with running time $O(n|\Delta|)= O(n|\Ac|)$.
\end{proof}

\section{The rest of the proof of Theorem \ref{Thm-1-minimality}}\label{S:Thm-1-minimality}
\begin{theorem 6.5}
    There exists an algorithm that transforms every instance $I = (V,D,\Cc)$ to an \textbf{equivalent} $1$-minimal instance $I' = (V,D, \Cc \cup\{P_x: x\in V\})$. The algorithm runs in time $O(|V||D|mr^2)$, where 
     \( r \) is the maximal arity of any constraint relation in \( \Cc \), and  \( m \) is the total number of tuples across all constraint relations.
\end{theorem 6.5}
The process produces a sequence of instances $I_0,I_1,\dots,I_w$,  
where $I_0=I$ and each $I_{t+1}$ is obtained from $I_t$ by shrinking the domain of exactly one variable.  
We show that each update preserves all satisfying assignments.  

Assume that $\psi$ satisfies $I_{t+1}$. Since $P_{s_i}'\subseteq P_{s_i}$, $\psi$ also satisfies $I_t$.  
Conversely, assume $\psi$ satisfies $I_t$.  
For any constraint $\langle(s_1,\dots,s_k),R\rangle$,  
we have $(\psi(s_1),\dots,\psi(s_k))\in R$ and $\psi(s_j)\in P_{s_j}$ for all $j$.  
By definition of $P_{s_i}'=\pi_i(R\cap\prod_j P_{s_j})$,  
it follows that $\psi(s_i)\in P_{s_i}'$, so $\psi$ also satisfies $I_{t+1}$.  
Thus every satisfying assignment of $I_t$ remains valid after the update.  
By induction, the final instance $I'$ is therefore equivalent to $I$.

Note that we can bound the number of updates by $ O(|V||D|)$. 
Let \( r \) be the maximal arity of any constraint relation in \( \Cc \), and let \( m \) be the total number of tuples across all constraint relations.
For a given constraint $\ccl{(s_1,\ldots, s_k), R}\in \Cc$, calculating $P_{s_i}'$ for single variable $s_i$ takes time $O(|R|k) = O(|R|r)$. Hence, it takes $O(|R|k^2) = O(|R|r^2)$  to process one such constraint. During each update, the algorithm iterates over every constraint from $\Cc$ and runs in time $O(mr^2)$. Hence, the overall running time 
%of $1$-minimality algorithm is bounded by 
is $O(|V||D|mr^2)$.

\section{The rest of the proof of Theorem \ref{Thm:Maj-Transformation}} \label{S-Thm:Maj-Transformation}
\begin{theorem 7.6}
There exists an algorithm that, given a binary CSP instance $I =(V,D,\{\langle(x,y),P_{x,y}\rangle : x,y\in V\})$ where $P_{x,y} \subseteq D^2$ belongs to $\Inv(g)$, decides the satisfiability of $I$ in time $O(|V|^3 \cdot |D|^3)$.
\end{theorem 7.6}
If some $P_{xy}$ is emptied, no global assignment extending all pairs can exist. Conversely, if all $P_{xy}$ remain nonempty, then by the following clique-extension argument, local consistency implies global satisfiability. 
Given a triangle $(x,y,z)$ with pairs $(a,b)$, $(a,c)$, and $(b,c)$, one can extend this assignment to any fourth variable $w$ as follows: for each of the three edges, take a corresponding witness value $d_1,d_2,d_3$ for $w$; define $d = g(d_1,d_2,d_3)$. 
Since $bd_1 \in P_{yz}$, there exists $a'$ such that $a'd_1 \in P_{xz}$.
Hence, we have $(a,d)=(g(a',a,a),g(d_1,d_2,d_3)) \in P_{xy}$.
Similarly, we have the value $d$ pairs consistently with $a,b,c$, thus extending the triangle to a 4-clique. 
In a similar way, we can show that every $4$–clique extends to a $5$–clique, and so on. Summarizing, every edge extends to a $|V|$–clique – a solution to the instance.

\section{Proof of Theorem \ref{Thm: NU reduction}}\label{Appendix: NU reduction}
\begin{theorem 7.9} 
    Let $g$ be a $NU$ operation. There exists an algorithm that, given an $\AutCSP$ instance  $I=(V,D,\Cc,\Ac)$ where $\Ac \in \Kc_{g}$,  constructs a CSP instance  $I'=(V,D,\{\ccl{\ub,P_{\ub}} : \ub\subseteq V, |\ub| = k-1\})$ such that
\begin{enumerate}
\item Each $P_{\ub} \subseteq D^{k-1}$ belongs to $\Inv(g) $.
\item A map $\psi:V\rightarrow D$ satisfies $I$ iff $\psi$ satisfies $I'$.
\item The algorithm runs in time $O(|\Cc|^{k}|\Ac||D|^{k-1})$.
\end{enumerate}
\end{theorem 7.9}
\begin{proof}
    By lemma \ref{lemma:nu-projection_k}, we establish the following equivalences:
    \begin{align*}
        \bigwedge_{\ccl{\bs,R}\in \Cc}\psi(\bs)\in R 
        \iff \bigwedge_{\ccl{\bs,R}\in \Cc} \bigwedge_{\ub \subseteq \bs,|\ub| =k-1} \psi(\ub)\in \pi_\ub R   \\ 
        \iff \bigwedge_{\ub\subseteq V,|\ub| = k-1}\bigwedge_{\substack{\ccl{\bs,R}\in \Cc \\ \ub\subseteq \bs }}  \psi(\ub)\in \pi_\ub R \\ 
    \end{align*}

    Therefore, we can introduce the following $k-1$-ary relation :
    \[
        P_\ub = \bigcap_{\ccl{\bs,R}\in \Cc, \ub \subseteq \bs} \pi_\ub R
    \]

    For any scope $\ub$, $P_\ub$ has $g$ as its polymorphism since projections inside does. These relations $P_\ub$ would be constraint for our $I'$ and first condition has been satisfied.

    Then, we can construct the following instance $I' = (V,D,\Cc')$ equivalent to original $I$:
    \[
        \Cc' = \{\ccl{\ub, P_\ub} : \ub \subseteq V, |\ub| = k-1\}
    \]
    It is not hard to see that $I'$ already satisfied second condition. 

   Let us analyze the complexity of constructing $I'$. For a single constraint $\ccl{\bs,R}\in \Cc$, we compute the projection $\pi_\ub R$ for each scope $\ub$ of size $k-1$. There are $|\bs|^{k-1}$ such scopes and each can be processed in time $O(|D|^{k-1}|\Ac||\bs|)$. Therefore, it takes time $O(|\bs|^{k}|\Ac||D|^{k-1})$ to complete single constraint $\ccl{\bs,R}\in \Cc$. It thus takes time $O(|\Cc|^{k}|\Ac||D|^{k-1})$ in total to construct the instance $I'$ for a given $I = (V,D,\Cc,\Ac)$.
\end{proof}

\newpage

\tableofcontents

\end{document}